\documentclass[superscriptaddress, amsmath,amssymb,
aps, pra, letterpaper, tightenlines, reprint, notitlepages]{revtex4-2}

\usepackage{graphicx}
\usepackage{floatrow}

\usepackage{dcolumn}% Align table columns on decimal point
\usepackage{bm}% bold math

\usepackage{amsfonts}
\usepackage{dsfont}
\usepackage{bm}
\usepackage{bbold}
\usepackage{xcolor}
\usepackage{lipsum,babel}
\usepackage[normalem]{ulem}

\usepackage[utf8]{inputenc}
\usepackage{amsmath}
\usepackage{physics}
\usepackage{hyperref}
\usepackage{amssymb}
\usepackage{braket}
\usepackage{mathtools}
\usepackage{color}
\usepackage{amsthm}
\usepackage{algorithm}
\usepackage{algorithmic}

\usepackage{lipsum}

\DeclareMathOperator*{\E}{\mathbb{E}}

\DeclareMathOperator*{\poly}{\rm{poly}}

\theoremstyle{plain}
\newtheorem{theorem}{Theorem}
\newtheorem{corollary}{Corollary}

\newtheorem{definition}{Definition}
\newtheorem{lemma}{Lemma}
\newtheorem{conjecture}{Conjecture}
\newtheorem{remark}{Remark}

\theoremstyle{definition}

\begin{document}

\title{Hardness and Complexity Transition of Noisy Random Circuit Sampling}
%Classical Hardness of Random Circuit Sampling under Local Depolarizing Noise

\author{Byeongseon Go}
\email{gbs1997@snu.ac.kr}
\affiliation{NextQuantum Center, Department of Physics and Astronomy, Seoul National University, Seoul 08826, Republic of Korea}
\author{Changhun Oh}
\email{changhun0218@gmail.com}
\affiliation{Department of Physics, Korea Advanced Institute of Science and Technology, Daejeon 34141, Republic of Korea}
\author{Hyunseok Jeong}
\email{h.jeong37@gmail.com}
\affiliation{NextQuantum Center, Department of Physics and Astronomy, Seoul National University, Seoul 08826, Republic of Korea}

\begin{abstract}
Random circuit sampling (RCS) is a leading candidate for demonstrating quantum advantage, supported by strong complexity-theoretic evidence of hardness in the ideal setting and by rapid experimental progress to date. 
In practice, however, noise is unavoidable, and a central problem is to identify the noise-strength boundary between classically simulable and classically hard regimes.
In this work, we establish an architecture-general hardness bound for this boundary for the standard local depolarizing noise of strength $\gamma$.
Assuming the standard average-case \#P-hardness conjecture for ideal RCS, we show that, for any circuit architecture satisfying this conjecture, noisy RCS on the same architecture remains hard to simulate classically within any inverse-polynomial total variation distance whenever $\gamma=O(\log n/(nd))$ for $n$-qubit circuits of depth $d$, unless the polynomial hierarchy collapses.
Crucially, noisy-RCS hardness follows without any additional conjectural or architecture-specific assumption beyond those already entering the ideal-RCS hardness framework.
Our proof combines a low-degree polynomial extrapolation with a monotonicity reduction showing that efficient classical simulation at one depolarizing noise strength implies efficient simulation at every larger strength.
Together, these ingredients transfer the standard ideal-RCS hardness conjecture to sampling hardness at a prespecified noise strength.
Finally, combining the convergence-to-uniformity result of Dalzell \textit{et al.} [Commun.~Math.~Phys.~405, 78 (2024)] with our monotonicity reduction yields efficient classical simulation for $\gamma=\omega(\log n/(nd))$ on layered, regularly connected architectures.
Thus, wherever the two architectural settings overlap, this identifies $\gamma=\Theta(\log n/(nd))$ as the asymptotic complexity-transition scale.

\end{abstract}

\maketitle

\section{Introduction}

Demonstrating quantum advantage is a central milestone in the development of quantum computation.
Among the leading proposals, random circuit sampling~(RCS)~\cite{bouland2019complexity, bouland2022noise, kondo2022quantum, movassagh2023hardness, krovi2022average} has emerged as a particularly prominent candidate,  supported by strong complexity-theoretic evidence of classical hardness in the ideal setting and by substantial experimental progress.
These features have motivated large-scale experimental implementations of RCS~\cite{arute2019quantum, wu2021strong, zhu2022quantum,  morvan2024phase, decross2025computational, gao2025establishing}.

%In realistic implementations, however, noise is unavoidable, and its effect on the computational complexity of RCS remains partially understood.
In realistic implementations, however, noise is unavoidable, making it crucial to understand how the computational complexity of RCS depends on the noise strength.
A large body of work has identified noise regimes in which RCS can be efficiently simulated classically~\cite{aharonov1996limitations, gao2018efficient, deshpande2022tight, aharonov2023polynomial, dalzell2024random, nelson2025limitations, nelson2026polynomial, lee2025classical, zhang2025classically, noh2020efficient, zhang2023noisy, chen2018classical, cirstoiu2024fourier, huang2020classical, hangleiter2023computational, schuster2025polynomial, pan2022simulation, fontana2025classical, muller2024enabling, cheng2021simulating, ayral2023density, lee2025scalable}.
By contrast, much less is known about the complementary question of how weak the noise must be for noisy RCS to retain the hardness of its ideal counterpart.
Determining this boundary is essential both for understanding the complexity transition induced by noise and for establishing rigorous hardness guarantees for noisy RCS.

In this work, we establish an architecture-general hardness result for noisy RCS under local depolarizing noise.
Let $\mathcal{A}$ be any circuit architecture for which the standard average-case \#P-hardness conjecture for ideal RCS holds.
For $n$-qubit circuits of depth $d$, we show that no polynomial-time classical sampler can simulate noisy RCS over $\mathcal{A}$ within any prescribed inverse-polynomial total variation distance~(TVD) at a prespecified noise strength $\gamma^*$ satisfying
\begin{align}
\gamma^*
=
O\left(\frac{\log n}{nd}\right),
\end{align}
unless the polynomial hierarchy collapses.
Crucially, the reduction from ideal to noisy RCS introduces no additional hardness assumption.
Once a circuit architecture satisfies the stated ideal-RCS hardness conjecture, noisy RCS on the same architecture inherits sampling hardness up to the corresponding noise scale.
Accordingly, the theorem is not tied to a specially constructed architecture and does not require a separate hardness conjecture for the noisy output distribution.
%Thus, our result is not tied to a specially constructed architecture: every architecture supporting the standard ideal-RCS hardness conjecture automatically inherits noisy sampling hardness up to the stated noise scale.

The significance of this noise-strength scale is clearest when compared with Ref.~\cite{dalzell2024random}.
For architectures that both satisfy the ideal-RCS hardness conjecture and fall within the layered, regularly connected setting covered by that work, their convergence-to-uniformity result, together with our monotonicity theorem, gives efficient classical simulation for
\begin{align}
\gamma
=
\omega\left(\frac{\log n}{nd}\right).
\end{align}
\iffalse
For architectures that both satisfy the ideal-RCS hardness conjecture and fall within the layered, regularly connected setting covered by that work, their convergence-to-uniformity result gives efficient classical simulation for
\begin{align}
    \gamma=\omega\left(\frac{\log n}{nd}\right).
\end{align}
\fi
Thus, within the intersection of the two architectural regimes, the known hardness and simulability bounds meet at the same asymptotic noise scale.

The proof combines a low-degree extrapolation that recovers ideal output probabilities from noisy evaluations with a monotonicity reduction showing that efficient simulation at one depolarizing noise strength implies efficient simulation at every larger strength.
The extrapolation produces the scale $nd\gamma^*=O(\log n)$, while the monotonicity reduction converts the resulting variable-noise probability-estimation hardness into sampling hardness at the prespecified noise strength $\gamma^*$.

Our result provides a distinct route to noisy-RCS hardness.
We directly transfer the standard ideal-RCS hardness conjecture to the exact local-depolarizing output distribution, yielding approximate-sampling hardness within inverse-polynomial TVD for any architecture satisfying the ideal-RCS hardness assumption.
This differs from previous approaches based on direct hardness of noisy output-probability estimation or on a reduction through a white-noise description.
A detailed comparison is given in Sec.~\ref{section: phase transition}.

\begin{figure*}[t]
\includegraphics[width=0.92\linewidth]{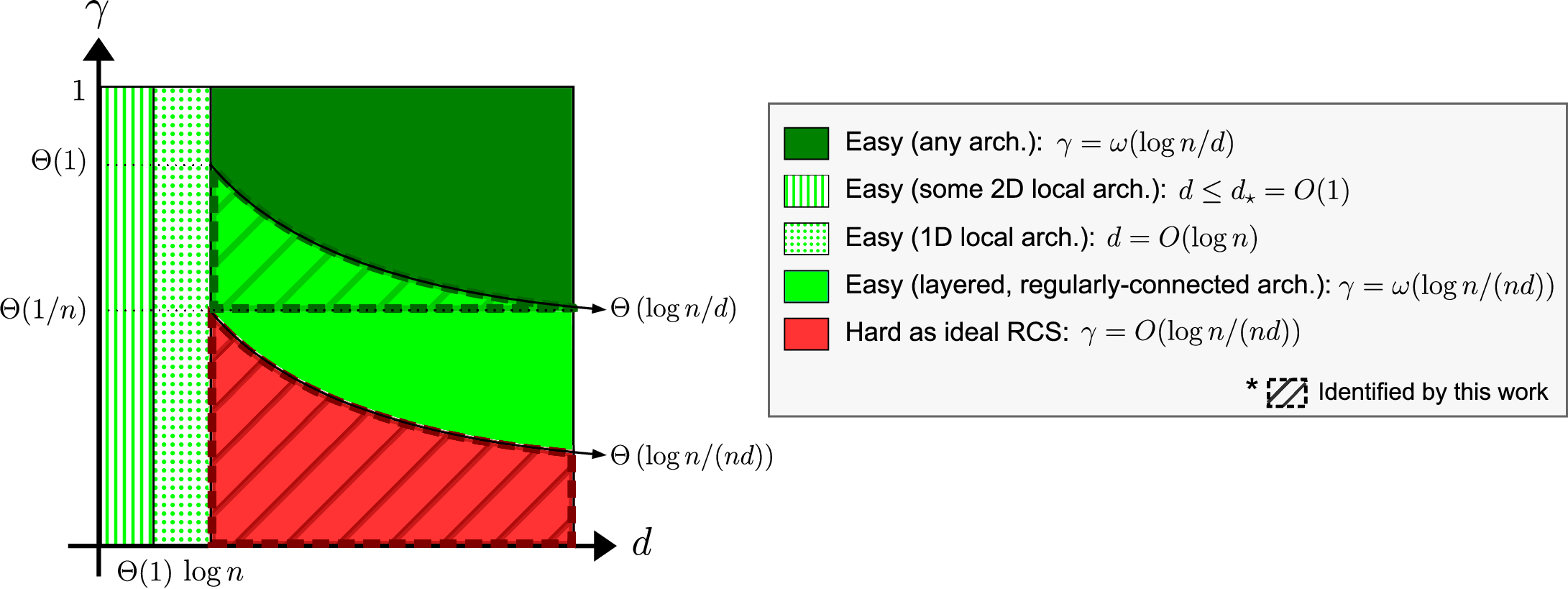}
\caption{Complexity phase diagram for simulating (sampling) noisy RCS under i.i.d. depolarizing noise (Definition~\ref{def: noisy quantum circuit}) within inverse-polynomial TVD.
The horizontal axis denotes the circuit depth $d$, and the vertical axis denotes the depolarizing noise strength $\gamma$ in Definition~\ref{def: noisy quantum circuit}.
Colored regions indicate parameter regimes where, for certain circuit architectures, classical simulation is either known to be efficient or provably hard under the hardness conjecture for ideal RCS (Conjecture~\ref{conj: average-case hardness}).
The dark green region is implied by Refs.~\cite{aharonov1996limitations, gao2018efficient, muller2016relative, mele2024noise, stilck2021limitations}.
The hatched and dotted regions are obtained from known shallow-depth simulability results for RCS: Refs.~\cite{napp2022efficient, cheng2023efficient, chen2024optimized} for some local 2D architectures, and Refs.~\cite{vidal2004efficient, markov2008simulating, jozsa2006simulation} for local 1D architectures.
The light green region follows from Ref.~\cite{dalzell2024random}: the unhatched part is covered directly by their result, and the hatched part follows by applying the monotonicity result in Theorem~\ref{thm: complexity decrease}.
Finally, the red region follows from Theorem~\ref{thm: hardness informal}.
The different regions may rely on different architectural assumptions
and should not be interpreted as simultaneous statements for every single architecture.
The red and light-green bounds form a matching asymptotic boundary only for architectures satisfying both sets of hypotheses.
}
\label{fig: transition}
\end{figure*}

\iffalse

Moreover, we show that the complexity of noisy RCS is \textit{monotonically nonincreasing} with the depolarizing noise strength $\gamma$.
That is, efficient classical simulation at a smaller noise strength implies efficient classical simulation at any larger noise strength.
%, in the sense that the simulability of noisy RCS for a smaller noise strength implies the simulability of noisy RCS for any larger noise strength.
This monotonicity is not only a key ingredient in our hardness proofs, but also propagates existing simulability results.
In particular, whenever ideal RCS, or noisy RCS at a smaller noise strength, is efficiently simulable for a given architecture (e.g., shallow-depth RCS~\cite{napp2022efficient}), noisy RCS at any larger depolarizing noise strength is also efficiently simulable.
We expect that this monotonicity principle will further contribute to a deeper understanding of the complexity of noisy RCS.

\fi

The remainder of this paper is organized as follows.
In Sec.~\ref{section: phase transition}, we summarize our main hardness results for noisy RCS, review related work, and clarify our contributions.
In Sec.~\ref{section: preliminaries}, we collect the definitions and notations used throughout this work.
In Sec.~\ref{section: main problem and result}, we formulate the main computational problems building on prior hardness arguments for ideal RCS
%we formally define our main computational problem building on prior arguments for ideal RCS, 
and introduce our main hardness result for noisy RCS.
In Sec.~\ref{section: proof of theorem 1}, we prove Theorem~\ref{thm 1}, establishing the hardness result for noisy RCS.
%In Sec.~\ref{section: constructing architectures}, we construct circuit architectures primarily used to prove Theorem~\ref{thm 2}, which establishes our second hardness result for noisy RCS.
%In Sec.~\ref{section: proof of theorem 2}, we construct circuit architectures primarily used to prove Theorem~\ref{thm 2}, and then prove Theorem~\ref{thm 2}.
In Sec.~\ref{section: proof of theorem 3}, we prove Theorem~\ref{thm: complexity decrease}, which is crucial for translating our hardness results into the final simulation-hardness statement summarized in Theorem~\ref{corol: hardness}.
%In Sec.~\ref{section: numerical evidence}, we present numerical evidence for anti-concentration of noisy RCS, a key ingredient in our hardness arguments.
Finally, in Sec.~\ref{section: conclusion}, we conclude with remarks and directions for future work.

\section{Main result and relation to prior work}
\label{section: phase transition}

In this section, we summarize our main hardness result and compare it with the most closely related hardness and simulability results for noisy RCS.
Our result has two principal implications.
First, the noisy-RCS hardness theorem relies on exactly the same average-case hardness conjecture as ideal RCS.
The reduction requires no additional complexity-theoretic assumption, anti-concentration assumption for the noisy output distribution, or architecture-specific scrambling condition.
Accordingly, any architecture satisfying the stated ideal-RCS hardness conjecture inherits approximate-sampling hardness under local depolarizing noise up to the threshold established here.
Second, when combined with the existing simulability result, our theorem identifies an asymptotically matching complexity-transition scale within the overlapping architectural regime.

%First, it transfers the conjectured hardness of ideal RCS directly to the hardness of noisy RCS, without relying on architecture-specific scrambling assumptions.
%Second, within the class of architectures covered by the existing simulability result, our hardness bound identifies an asymptotically tight threshold for the noise strength.

%\subsection{Architecture-general hardness and a tight noise threshold}

\subsection{Architecture-general hardness and a matching transition scale}

We first state our main result on the hardness of classically simulating noisy RCS under the local depolarizing noise model of
Definition~\ref{def: noisy quantum circuit}.

\begin{theorem}[Informal]\label{thm: hardness informal}
Suppose there exists a circuit architecture for which ideal RCS satisfies the standard average-case \#P-hardness conjecture~\cite{bouland2019complexity, bouland2022noise, kondo2022quantum, movassagh2023hardness, krovi2022average} of output probability estimation, stated later as Conjecture~\ref{conj: average-case hardness}.
Then, noisy RCS on this architecture is classically hard to simulate within inverse-polynomial TVD whenever the depolarizing noise strength satisfies
\begin{align}\label{eq: noise threshold informal}
\gamma^* = O\left(\frac{\log n}{nd}\right),
\end{align}
for $n$-qubit circuits of depth $d$, unless the polynomial hierarchy collapses.
\end{theorem}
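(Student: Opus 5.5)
The plan is to reduce the average-case \#P-hard task of Conjecture~\ref{conj: average-case hardness} to approximate sampling from the noisy output distribution at noise strength $\gamma^*$. If such a sampler exists, the PH-collapse conclusion follows in the standard way. The reduction has three steps.

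\textbf{Step 1: polynomial structure in the noise parameter.} Write the noisy output probability of a fixed outcome $x$ as a function of the depolarizing strength. Each depolarizing channel is $\mathcal{D}_\gamma(\rho)=(1-\gamma)\rho+\gamma\,\mathrm{tr}(\rho)I/2$. There are $m=O(nd)$ noise locations, so expanding every channel gives
\[
p_\gamma(x)=\sum_{S\subseteq[m]}(1-\gamma)^{m-|S|}\gamma^{|S|}\,q_S(x).
\]
This is a polynomial of degree at most $m$ in $\gamma$, and $p_0(x)$ is the ideal probability. It is more convenient to parametrize by the survival probability $\lambda=1-\gamma$. Then $p(\lambda)=\sum_k \lambda^{k} c_k$ with $k\le m$, and $p(1)$ is the ideal value. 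Terms with many Pauli-weight contributions are damped.

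\textbf{Step 2: estimation via Stockmeyer and extrapolation.} Assume a polynomial-time classical sampler $\mathcal{S}$ achieves inverse-polynomial TVD at noise strength $\gamma^*$. By the monotonicity theorem (Theorem~\ref{thm: complexity decrease}), we then obtain efficient samplers at every $\gamma\ge\gamma^*$. Such a sampler can apply extra depolarizing noise classically, or the monotonicity result can be invoked directly. Fix a random circuit, apply a random $X$-mask (hiding) for the outcome, and use Markov's inequality together with Stockmeyer counting in $\mathrm{BPP}^{\mathrm{NP}}$. This gives additive estimates of $p_\gamma(x)$ for polynomially many noise values $\gamma_j\in[\gamma^*,2\gamma^*]$, each with error $\epsilon/2^n$, for most circuits.

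We then extrapolate the degree-$m$ polynomial from these points back to $\gamma=0$ using a Paturi/Rakhmanov-type or Lagrange-on-Chebyshev-nodes bound. The error amplification when extrapolating from an interval of width $\delta=\gamma^*$ to distance $\gamma^*$ scales like $e^{O(m\gamma^*)}$ or $(O(1))^{m\gamma^*}$. This works better if we exploit that the relevant expansion is effectively in $e^{-\gamma k}$-damped terms. Choosing $m\gamma^*=O(\log n)$ keeps the blow-up at $\mathrm{poly}(n)$. That matches the threshold $\gamma^*=O(\log n/(nd))$, and the result is an estimate of the ideal $p_0(x)$ to within $2^{-n}/\mathrm{poly}(n)$. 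Conjecture~\ref{conj: average-case hardness} says this is \#P-hard on average, so $\mathrm{P}^{\#\mathrm{P}}\subseteq\mathrm{BPP}^{\mathrm{NP}}$ and PH collapses by Toda's theorem.

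\textbf{Main obstacle.} The hard step is controlling the extrapolation amplification. A naive Lagrange bound on a polynomial of degree $m\sim nd$ extrapolated from an interval of length $\gamma^*$ blows up like $(1/\gamma^*)^{m}$, which is far too large. I would first try rewriting $p$ in the variable $\lambda$ with only $\lambda\in[1-2\gamma^*,1-\gamma^*]$ sampled, and bounding the extrapolation to $\lambda=1$ via the Remez/Chebyshev inequality: $|T_m(1+\eta)|\le e^{m\sqrt{2\eta}\cdot O(1)}$, with $\eta\approx\gamma^*/\gamma^*$ relative to the interval width. This alone gives $e^{O(m)}$, so I would instead sample over a wider range $\gamma\in[\gamma^*,c]$ for a constant $c$, which monotonicity permits. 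Then $\eta=O(\gamma^*)$ and the amplification is $e^{O(m\sqrt{\gamma^*})}$. If that is still insufficient, I would truncate the polynomial at degree $O(\log n/\gamma^*)$, using that the tail is bounded in the averaged $\ell_1$ sense. The averaging of per-point errors over circuits (via Markov) must be combined carefully with this amplification factor.
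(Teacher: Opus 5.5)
Your outer structure matches the paper's, just in a different order: monotonicity gives samplers at every $\gamma\ge\gamma^*$, Stockmeyer with output hiding gives average-case additive estimates of $\widetilde p(C,\gamma)$ at polynomially many $\gamma$, you extrapolate to $\gamma=0$, and Toda finishes. The gap is the extrapolation step, which is exactly where the threshold comes from. None of the bounds you propose gives polynomial amplification at $\gamma^*=\Theta(\log n/(nd))$.

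\begin{itemize}
\item The claim in Step~2 that extrapolating a degree-$m$ polynomial from $[\gamma^*,2\gamma^*]$ to $0$ costs $e^{O(m\gamma^*)}$ is false. The affine map to $[-1,1]$ sends $0$ to $t=-3$ for every $\gamma^*$, so the amplification is $|T_m(3)|\approx(3+2\sqrt2)^m=e^{\Theta(m)}$.
\item Widening the window to $[\gamma^*,c]$ gives $e^{\Theta(m\sqrt{\gamma^*})}$, which at the claimed threshold is $e^{\Theta(\sqrt{nd\log n})}$ and hence superpolynomial. This route only reaches $\gamma^*=O(\log^2 n/(nd)^2)$.
\item Oracle errors are adversarial, and the Chebyshev bound is attained by degree-$m$ polynomials that are $\varepsilon$-small on the window. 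So no scheme that treats $\widetilde p(C,\cdot)$ as an arbitrary degree-$m$ polynomial can do better.
\item Your fallback does not rescue this. At the threshold, truncating at degree $O(\log n/\gamma^*)$ means truncating at degree $\Theta(nd)$, i.e.\ not at all.
\item A truncation justified by the $e^{-\gamma k}$ damping of high-weight terms also destroys the value at $\gamma=0$. There is no damping there, and that is exactly where you need accuracy $\varepsilon_0 2^{-n}$.
\end{itemize}

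The missing idea is a surrogate polynomial with three properties. It has degree $l=O(nd\gamma^*+\log(nd/(\varepsilon_0\delta_0)))=O(\log n)$. It equals $p(C)$ exactly at $\gamma=0$. For most $C$, it is within $\varepsilon'2^{-n}$ of $\widetilde p(C,\gamma)$ on a window that includes $0$. The surrogate error is then just extra additive noise on the $l+1$ data points. Extrapolating a degree-$l$ polynomial from $l+1$ equally spaced points in $[\gamma^*,3\gamma^*]$ costs only $(2e)^l/\sqrt{2\pi l}=\poly(n)\,e^{O(nd\gamma^*)}$.

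The paper builds the surrogate in four moves:
\begin{itemize}
\item It Chebyshev-truncates each $w_k(\gamma)=(1-\tfrac34\gamma)^{N-k}(\tfrac14\gamma)^k$ on $[0,\gamma_{\max}]$.
\item It folds the discarded coefficients into $T_l$, so that $g_k^{(l)}(0)=\delta_{k0}$.
\item It bounds $\sum_k\binom{N}{k}3^k\Delta_k$ by a Bernstein-ellipse estimate.
\item It uses $S_k(C)\ge0$, $\mathbb{E}_C[S_k(C)]=\binom{N}{k}3^k2^{-n}$ (from Pauli invariance), and Markov.
\end{itemize}

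Your own Step~1 expansion already supports the simpler variant in the paper's remark. Let $Q_k:=\sum_{|S|=k}q_S\ge0$, so that $\mathbb{E}_C[Q_k]=\binom{m}{k}2^{-n}$. Then the rescaled value $(1-\gamma)^{-(m-l)}\widetilde p(C,\gamma)$ splits into two parts:
\begin{itemize}
\item the degree-$l$ polynomial $\sum_{k\le l}(1-\gamma)^{l-k}\gamma^kQ_k$, which equals $p(C)$ at $\gamma=0$;
\item a nonnegative tail whose circuit average is $(1-\gamma)^{-(m-l)}\Pr[\mathrm{Bin}(m,\gamma)>l]\,2^{-n}$.
\end{itemize}
The rescaling multiplies the oracle error by at most $e^{O(m\gamma^*)}$. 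It is this factor, not any extrapolation bound for the full polynomial, that produces the threshold $m\gamma^*=O(\log n)$.
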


The significance of the theorem lies in the fact that its noisy-RCS hardness conclusion requires no conjectural assumption beyond Conjecture~\ref{conj: average-case hardness} for ideal RCS.
Hence, Theorem~\ref{thm: hardness informal} is architecture-general in the following sense: within the local Haar-random circuit setting of Definition~\ref{def: random circuit distribution}, it imposes no additional structural assumptions on the circuit architecture beyond those entering the ideal-RCS hardness conjecture.
%such as a layered or regularly connected architecture, global scrambling, or anti-concentration of the noisy output distribution.
%In particular, it does not separately require 

The threshold in Theorem~\ref{thm: hardness informal} becomes particularly informative when combined with the simulability result of Ref.~\cite{dalzell2024random}.
For layered, regularly connected architectures satisfying the assumptions of that work, their convergence-to-uniformity theorem establishes efficient average-case classical simulation in the regime
\begin{align}
\gamma
=
\omega\left(\frac{\log n}{nd}\right)
\end{align}
within its stated weak-noise range $\gamma = O(1/n)$.
Our monotonicity theorem then extends this simulability conclusion to every larger depolarizing-noise strength.
Consequently, for architectures lying in the intersection of the two settings, the known hardness and simulability bounds meet at the same asymptotic noise scale.

\begin{corollary}
\label{cor: transition scale}
Consider a circuit architecture that both satisfies
Conjecture~\ref{conj: average-case hardness} and lies within the regime
covered by Ref.~\cite{dalzell2024random}.
Then noisy RCS is classically hard for $\gamma = O(\log n / (nd))$, whereas it is classically simulable for $\gamma = \omega(\log n / (nd))$.
Hence, within this common regime, $\gamma=\Theta(\log n/(nd))$ is the asymptotic complexity-transition scale at the level of its dependence on $n$ and $d$.
\end{corollary}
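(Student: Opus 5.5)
The corollary combines two results, so the plan is to prove its hardness half and its simulability half separately and then read off the transition scale.

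For the hardness half, $\gamma=O(\log n/(nd))$, I will invoke Theorem~\ref{thm: hardness informal}; its formal version is Theorem~\ref{corol: hardness}. The architecture is assumed to satisfy Conjecture~\ref{conj: average-case hardness}. The theorem then says directly that no polynomial-time classical sampler reproduces the noisy output distribution within inverse-polynomial TVD at any prespecified $\gamma^*=O(\log n/(nd))$, unless the polynomial hierarchy collapses. Nothing about the architecture is used beyond the conjecture, so this half needs no further argument. I only need to check that the local Haar-random circuit ensemble and the noise model of Definition~\ref{def: noisy quantum circuit} coincide with those used in Ref.~\cite{dalzell2024random}, so that both halves refer to the same sampling task.

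For the simulability half, I split $\gamma=\omega(\log n/(nd))$ into two ranges.

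\emph{Weak-noise range} $\gamma=O(1/n)$. Here I will quote the convergence-to-uniformity result of Ref.~\cite{dalzell2024random}. For layered, regularly connected architectures, the expected TVD between the noisy output distribution and the uniform distribution decays like $e^{-\Omega(\gamma n d)}$, up to polynomial prefactors in that paper's statement. When $\gamma nd=\omega(\log n)$, this is smaller than any inverse polynomial. Outputting uniformly random bitstrings is then an efficient sampler within inverse-polynomial TVD, on average over circuits.

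\emph{Strong-noise range}, beyond $O(1/n)$, i.e.\ above the window covered by Ref.~\cite{dalzell2024random}. Here I will use the monotonicity result, Theorem~\ref{thm: complexity decrease}. Pick a reference strength $\gamma_0$ with $\log n/(nd)\ll\gamma_0\le\min(\gamma,c/n)$, for instance $\gamma_0=\min(\gamma,\sqrt{\log n/(nd)}/\sqrt{n})$ adjusted so that $\gamma_0 nd=\omega(\log n)$. Simulation at $\gamma_0$ is efficient by the weak-noise step, and Theorem~\ref{thm: complexity decrease} lifts efficient simulation at $\gamma_0$ to every larger strength, in particular to $\gamma$.

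The main obstacle is matching the quantifiers and error notions between the two cited results. Ref.~\cite{dalzell2024random} gives an average-case TVD bound over random circuits. Theorem~\ref{thm: complexity decrease} must be shown to preserve that notion, i.e.\ to map a sampler that succeeds for most circuits at $\gamma_0$ to one that succeeds for most circuits at $\gamma$. I expect the monotonicity construction to decompose depolarizing noise of strength $\gamma$ as noise of strength $\gamma_0$ composed with extra depolarizing noise. The extra noise can be realized by random Pauli insertions, and these can be absorbed into the Haar-random two-qubit gates by invariance of the Haar measure, so the circuit distribution is preserved and the average-case guarantee transfers. I will verify that the absorption keeps the error within the same inverse-polynomial TVD budget.

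With both halves established, the hardness and simulability regimes meet at $\gamma=\Theta(\log n/(nd))$ in their dependence on $n$ and $d$, which is the stated transition scale.
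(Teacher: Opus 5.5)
Your proposal is correct and follows the paper's own argument. Hardness comes from Theorem~\ref{corol: hardness}, and simulability comes from the convergence-to-uniformity bound of Ref.~\cite{dalzell2024random} in its weak-noise window, lifted to all larger $\gamma$ by the average-TVD extension of Theorem~\ref{thm: complexity decrease} at the end of Sec.~\ref{section: proof of theorem 3}; that extension settles exactly the quantifier issue you flagged.

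One point to make explicit: a reference strength $\gamma_0$ with $\log n/(nd)\ll\gamma_0=O(1/n)$ exists only if $d=\omega(\log n)$, since for $d=\Theta(\log n)$ that window is empty. This depth condition is therefore part of what ``lies within the regime covered by'' Ref.~\cite{dalzell2024random} has to mean, and the paper also leaves it implicit.
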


Thus, within the overlapping architectural regime, the dependence of our hardness scale on $n$ and $d$ is asymptotically optimal.
We emphasize that monotonicity extends only the simulability conclusion: it does not ensure that the original convergence bound of Ref.~\cite{dalzell2024random} remains valid outside its proven weak-noise regime.

The origin of this scaling can be seen directly from our reduction.
For an $n$-qubit circuit of depth $d$, the noisy output probability is a polynomial of degree at most $N=n(d+1)$ in the depolarizing-noise strength $\gamma$, as shown in Eq.~\eqref{eq: noisy output probability}.
We construct a low-degree approximation to this polynomial and extrapolate from evaluations at nonzero noise strengths $\gamma\in[\gamma^*,1]$ to $\gamma=0$, where the noisy probability coincides with the ideal probability.
The extrapolation incurs an overhead of $e^{O(nd\gamma^*)}$, which remains polynomial when $nd\gamma^*=O(\log n)$.
This gives \#P-hardness of noisy-probability estimation when $\gamma\in[\gamma^*,1]$ is supplied as an input. 
Stockmeyer's reduction~\cite{stockmeyer1985approximation} then gives hardness of the corresponding variable-noise sampling problem, and our monotonicity theorem converts this into sampling hardness at the prespecified noise strength $\gamma^*$.

\iffalse
The same monotonicity result also strengthens the simulability statement in Corollary~\ref{cor: transition scale}.
Although the convergence bound of Ref.~\cite{dalzell2024random} is established only in the noise regime $\gamma=O(1/n)$, monotonicity implies efficient simulation at every larger depolarizing-noise strength once simulability is established at one such value.
It therefore extends the simulable region beyond $\gamma=O(1/n)$, as illustrated in Fig.~\ref{fig: transition}, without implying that the original convergence bound remains valid outside its proven parameter regime.
\fi

Finally, more general convergence-to-uniformity results apply without the architectural conditions of Ref.~\cite{dalzell2024random}, but yield the weaker simulability threshold~\cite{aharonov1996limitations, gao2018efficient, muller2016relative, mele2024noise, stilck2021limitations}
\begin{align}
\gamma
=
\omega\left(\frac{\log n}{d}\right)  .
\end{align}
The general simulable regime described above, together with the matching boundary obtained by combining Theorem~\ref{thm: hardness informal} and the simulability result of Ref.~\cite{dalzell2024random}, is summarized in Fig.~\ref{fig: transition}.

%\subsection{Strengthening the white-noise-based hardness approach}
\subsection{Direct hardness transfer beyond the white-noise route}

%Ref.~\cite{dalzell2024random} also provides the most directly related route from the conjectured hardness of ideal RCS to hardness of noisy RCS under local noise as ours.
Ref.~\cite{dalzell2024random} provides the most closely related previous route from ideal-RCS hardness to hardness under physical local noise.
Their approach first approximates the output distribution under local noise by a global white-noise distribution of the form $p_{\rm wn}=Fp_{\rm ideal}+(1-F)p_{\rm unif}$, where $F$ denotes the remaining ideal signal and $p_{\mathrm{unif}}$ is the uniform distribution.
Their appendix then relates the approximate sampling from
$p_{\rm wn}$ to the approximate sampling from the ideal distribution.

When this approach is used to establish hardness for noisy RCS under local noise, the resulting hardness conclusion inherits all conditions required for the local-to-white-noise approximation. 
These include the relevant structural and scrambling conditions on the circuit architecture, an appropriate anti-concentration condition, and a sufficiently weak-noise regime $\gamma \ll 1/(n\log n)$.
%in which the approximation error remains small relative to the remaining ideal-signal weight $F$.

Our result bypasses this intermediate white-noise approximation and the restrictions associated with that step.
Specifically, the reduction acts directly on the output probabilities of noisy RCS and therefore does not require the distribution to be approximated by a global white-noise mixture.
Consequently, the resulting hardness theorem does not inherit the architectural assumptions or the weak-noise condition required by the white-noise-based approach.
Instead, it applies to any circuit architecture satisfying the stated average-case hardness conjecture for ideal output-probability estimation.
This distinction is particularly useful in the logarithmic depth regime $d=\Theta(\log n)$, where our result establishes hardness for $\gamma=O(1/n)$, whereas the hardness implication of Ref.~\cite{dalzell2024random} requires $\gamma\ll 1/(n\log n)$.

The two approaches begin from differently formulated hardness assumptions for ideal RCS: Ref.~\cite{dalzell2024random} is based on hardness of approximate ideal-RCS sampling, whereas our theorem assumes average-case \#P-hardness of ideal output-probability estimation. 
We therefore do not claim a formal implication between the two final hardness statements.
The strengthening established here concerns the transfer from ideal to locally noisy RCS: our reduction bypasses the intermediate white-noise description and the associated architectural and weak-noise restrictions.

\iffalse

Our result is therefore not merely an alternative derivation of the same noise threshold.
Instead, it recovers the asymptotically tight hardness scaling in the regime already accessible through the white-noise approach, while extending the hardness statement beyond the architecture- and noise-dependent conditions under which the local-to-white-noise approximation has been established.
This distinction is particularly useful in the logarithmic depth regime $d=\Theta(\log n)$, where our result establishes hardness for $\gamma=O(1/n)$, whereas the  implication of Ref.~\cite{dalzell2024random} requires $\gamma\ll 1/(n\log n)$.
At the level of structural assumptions, our theorem thus strengthens and generalizes the previous white-noise-based route to noisy-RCS hardness.

We note that the two approaches formulate their underlying ideal-RCS hardness assumptions differently.
Ref.~\cite{dalzell2024random} assumes the hardness of approximate sampling from the ideal RCS distribution, whereas our result is based on the standard average-case \#P-hardness conjecture for estimating ideal output probabilities.
We therefore do not claim a general logical implication between these two complexity assumptions.
Rather, the improvement established here lies in removing the additional architectural, scrambling, and white-noise-approximation conditions required to transfer ideal-circuit hardness to the physical local-noise distribution.
\fi

%\subsection{A complementary hardness regime}
\subsection{Comparison with direct noisy-probability hardness}

Ref.~\cite{bouland2022noise} establishes hardness of estimating noisy output probabilities under a constant, gate-independent stochastic noise below an error-detection threshold.
The result applies to circuit architectures that support the required error-detection construction and proves hardness at an exponentially fine additive-error scale of $\exp(-O(m\log m))$, where $m$ denotes the number of gates in the circuit.
While the result tolerates constant noise strength for hardness, its approximation scale is much finer than the
$2^{-n}/\poly(n)$ scale required in the standard Stockmeyer reduction for approximate-sampling hardness, and it therefore does not by itself establish hardness of sampling noisy RCS within inverse-polynomial TVD.

Our result addresses the complementary direction.
We restrict to local depolarizing noise satisfying $\gamma = O\left({\log n}/({nd})\right)$, but establish a complete approximate-sampling hardness statement within any prescribed inverse-polynomial TVD.
%but complete approximate-sampling hardness statement within any prescribed inverse-polynomial TVD.
Moreover, our theorem applies to any circuit architecture satisfying the stated hardness conjecture for ideal-RCS output-probability estimation and does not require a separate error-detection construction.
The two results therefore address complementary regimes in terms of noise strength, architectural assumptions, and approximation accuracy.

\section{Preliminaries}\label{section: preliminaries}

\subsection{Basic definitions}

We begin by establishing the basic definitions used throughout this work. 
We first define a circuit architecture as a family of ``blank'' quantum circuits, in which the gate locations are fixed while the specific gate parameters are left unspecified, following the notion introduced in the original RCS proposal~\cite{bouland2019complexity}.

\begin{definition}[Circuit architecture]\label{def: architecture}
The circuit architecture $\mathcal{A}$ is defined as a family of quantum circuits $\{ A^{(k)} \}_{k=1,2,\dots}$.
Each circuit $A^{(k)}$ acts on $k$ qubits and consists of at most $\poly(k)$ one- or two-qubit gates at fixed locations, where specific gate parameters are left unspecified.
A circuit $C$ acting on $n$ qubits over $\mathcal{A}$ is instantiated by taking the $n$-th circuit $A^{(n)}$ and specifying all the gates in the circuit. 
In particular, for each $k$, every qubit in $A^{(k)}$ experiences at least one gate throughout the circuit. 
\end{definition}

%Namely, a circuit architecture can be regarded as the blueprint of a quantum circuit for each qubit number
Thus, a circuit architecture can be viewed as a blueprint for a quantum circuit at each system size, with specific gate instances to be filled in to instantiate the circuit.
We also define the circuit depth as the number of parallel layers of gates, as formalized below.

\begin{definition}[Circuit depth]\label{def: circuit depth}
The circuit depth $d$ is defined as the number of unit-depth layers in the circuit, where each unit-depth layer corresponds to a parallel application of gates with no sequential operations within the same layer.
\end{definition}

The circuit $C$ of depth $d$ can therefore be written as the product $C = U_{d}U_{d-1}\cdots U_{1}$, where $U_{i}$ denotes the parallel application of gates in the $i$th layer of $C$. 
In each unit-depth layer $U_{i}$, every qubit experiences at most one gate operation, possibly no gate operation (identity).

Based on the above definitions of circuit architecture and circuit depth, we define the depth of a circuit architecture as follows.

\begin{definition}[Depth of a circuit architecture]\label{def: circuit depth of architecture}
Let $\mathcal{A} = \{A^{(k)}\}_{k=1,2,\dots}$ be a circuit architecture, where each $A^{(k)}$ represents a $k$-qubit circuit with unspecified gate parameters.
We define $\mathsf{depth}_{\mathcal{A}}(n)$ as the circuit depth of $A^{(n)}$.
%We assume throughout that $\mathsf{depth}_{\mathcal{A}}(n)$ is non-decreasing in $n$ (i.e., $\mathsf{depth}_{\mathcal{A}}(k_1) \leq \mathsf{depth}_{\mathcal{A}}(k_2)$ for $k_1 \leq k_2$) and satisfies $\mathsf{depth}_{\mathcal{A}}(n)=\Omega(1)$.
\end{definition}

For example, for a logarithmic-depth circuit architecture $\mathcal{A}$, one has $\mathsf{depth}_{\mathcal{A}}(n) = \Theta(\log n)$.

Next, we define the random-circuit ensemble used for RCS.
Among random circuit ensembles, we are particularly interested in local Haar-random quantum circuits, in which each gate in the circuit architecture is drawn independently from the Haar measure, as formalized below.

\begin{definition}[Local Haar-random circuit distribution]\label{def: random circuit distribution}
%Let $\mathcal{A}$ be a circuit architecture and let the gates in $\mathcal{A}$ be $\{G_{i}\}_{i=1,2,\dots,m}$. 
We define $\mathcal{H}_{\mathcal{A}}$ as the circuit ensemble over the architecture $\mathcal{A}$ obtained by independently drawing each one- or two-qubit gate from the Haar measure on $\mathrm{U}(2)$ or $\mathrm{U}(4)$, respectively.
\end{definition}

We use local Haar-random circuit ensembles because they provide the standard
setting for the complexity-theoretic hardness conjectures of ideal
RCS~\cite{bouland2019complexity, bouland2022noise, kondo2022quantum, movassagh2023hardness, krovi2022average, bouland2025exponential} and, at the same time, possess the Pauli invariance needed in our noisy-RCS reductions. 
In particular, Pauli invariance allows Pauli errors arising from the depolarizing noise expansion to be absorbed into the random gates without changing the circuit ensemble, and also justifies fixing the output string to $0^n$.
Consequently, our arguments are not tied to the Haar ensemble itself; they extend to any random-circuit ensemble for which analogous ideal-RCS hardness evidence holds and the required Pauli invariance is available.

%We focus on Haar-random circuits because they exhibit Pauli invariance, and the conjectures underlying the classical hardness of ideal RCS are formulated for this ensemble~\cite{bouland2019complexity, bouland2022noise, kondo2022quantum, movassagh2023hardness, krovi2022average, bouland2025exponential}, both of which are essential to our hardness analysis.
%This implies that our results can readily be generalized to broader classes of random-circuit ensembles, as long as they exhibit Pauli invariance and allow analogous hardness evidence in the ideal setting.

Finally, we consider local depolarizing noise, a standard benchmark
model in the complexity-theoretic study of noisy RCS~\cite{aharonov1996limitations, gao2018efficient, boixo2018characterizing, deshpande2022tight, dalzell2024random, boixo2017fourier, fefferman2024effect, li2023entanglement, zhang2022entanglement, zhang2022noise, cheng2021simulating, aharonov2023polynomial, cheng2023efficient,chen2024optimized}.
Its unitality and Pauli-mixture representation make it especially well-suited to analyzing how accumulated noise affects the hardness of RCS: they lead to polynomial dependence on the noise strength and the monotonicity property used in our reductions.
While other noise models, including nonunital noise, may exhibit qualitatively different behavior~\cite{fefferman2024effect}, the depolarizing setting provides a clean and widely used framework for identifying the noise strength
boundary studied in this work. 
We now define the corresponding noisy circuit.

%Finally, we consider the depolarizing noise, which has been widely accepted as a physically relevant noise model in prior studies~\cite{aharonov1996limitations, gao2018efficient, boixo2018characterizing, deshpande2022tight, dalzell2024random, boixo2017fourier, fefferman2024effect, li2023entanglement, zhang2022entanglement, zhang2022noise, cheng2021simulating, aharonov2023polynomial, cheng2023efficient,chen2024optimized}.
%We formally define a noisy quantum circuit as one in which uniform depolarizing noise is applied to each qubit at each layer of the circuit, as formalized below.

\begin{figure*}[t]
\includegraphics[width=0.87\linewidth]{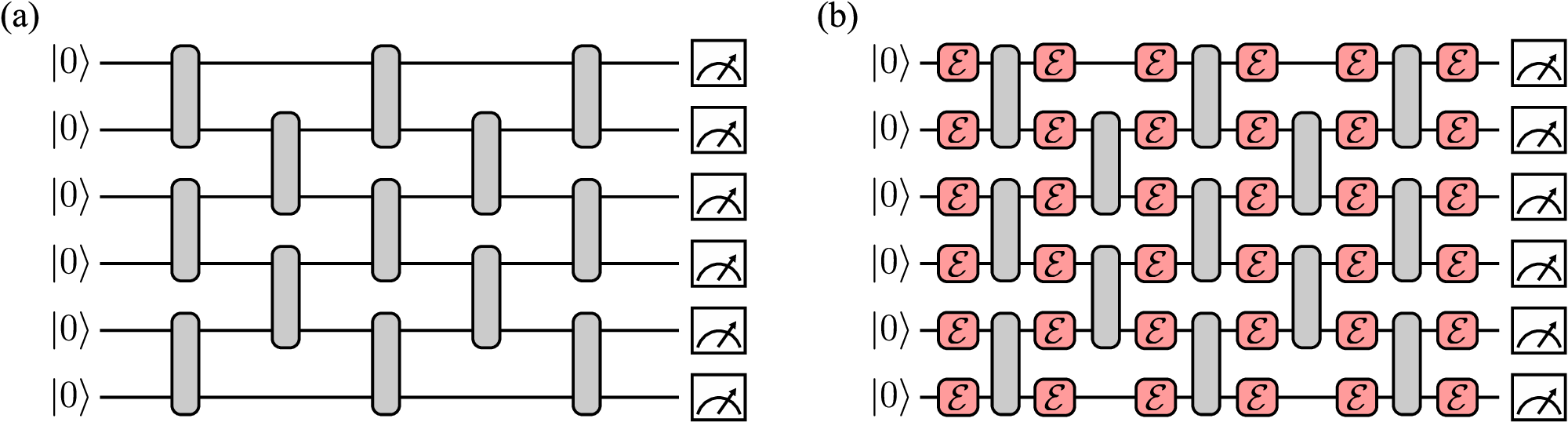}
\caption{Schematics of our RCS settings. 
(a) Schematic of ideal RCS. 
The gray boxes denote Haar-random gates, so that the overall circuit is drawn from the ensemble $\mathcal{H}_{\mathcal{A}}$ in Definition~\ref{def: random circuit distribution}. 
The input $\ket{0^{n}}$ is evolved by a random circuit $C \sim \mathcal{H}_{\mathcal{A}}$ and measured in the computational basis, yielding an output $x \in \{0,1\}^n$ distributed according to $p(C,x)$ in Eq.~\eqref{eq: ideal output probability}. 
(b) Schematic of noisy RCS. 
From (a), single-qubit depolarizing channels (red boxes) in Definition~\ref{def: noisy quantum circuit} are applied to each qubit at each time step (depth), such that the output $x \in \{0,1\}^n$ is distributed according to $\widetilde{p}(C, \gamma, x)$ in Eq.~\eqref{eq: noisy output probability}.
While we draw a 1D local architecture in the figure for illustration, we consider general circuit architecture $\mathcal{A}$ throughout this work. 
%In each figure, gray boxes denote Haar-random gates, so that the overall circuit is drawn from the ensemble  in Definition~\ref{def: random circuit distribution}. 
%Also, red boxes in (b) denote the single-qubit depolarizing channel from Definition~\ref{def: noisy quantum circuit}.
}
\label{fig: ideal_noisy_RCS}
\end{figure*}

\begin{definition}[Noisy quantum circuit]\label{def: noisy quantum circuit}
We define a noisy quantum circuit $\widetilde{C}$ as a quantum circuit $C$ in which single-qubit depolarizing noise $\mathcal{E}_{\gamma}(\rho) = (1-\gamma)\rho + \gamma\frac{I}{2}\Tr(\rho)$ with uniform noise strength $\gamma \in [0,1]$ is applied independently to each qubit at each time step in the circuit $C$.
%More specifically, for the circuit $C = U_{d}U_{d-1}\cdots U_{1}$ with depth $d$, the noisy circuit $\widetilde{C}$ first applies noise channel $\mathcal{E}(\cdot)$ to all qubits, followed by $U_1$, then applies noise again, followed by $U_2$, and so on, up to the final layer $U_{d}$.
\end{definition}

The noisy circuit considered throughout this work is illustrated in Fig.~\ref{fig: ideal_noisy_RCS}.

\subsection{Output probability distribution}

Let $C = U_{d}U_{d-1}\cdots U_{1}$ be an $n$-qubit circuit of depth $d$, where $U_{i}$ denotes the $i$th layer of $C$.
For the circuit $C$ and output string $x \in \{0,1\}^{n}$, we define the \textit{ideal} output probability as
\begin{align}\label{eq: ideal output probability}
    p(C,x) \coloneqq |\bra{x}C\ket{0^{n}}|^2 ,
\end{align}
where $\ket{x}$ denotes the computational basis for $x \in \{0,1\}^{n}$.
Similarly, let $\widetilde{p}(C,\gamma, x)$ be the \textit{noisy} output probability of observing $x \in \{ 0, 1\}^{n}$, corresponding to the noisy circuit $\widetilde{C}$ defined in Definition~\ref{def: noisy quantum circuit}. 
Specifically, writing $\mathcal{U}^{(i)}$ as the unitary channel corresponding to the $i$th layer $U_{i}$ of $C$ such that $\mathcal{U}^{(i)}(\rho)=U_i\rho\,U_{i}^{\dagger}$, the channel $\mathcal{N}_{C, \gamma}$ corresponding to the noisy circuit $\widetilde{C}$ can be written as
\begin{align}
\label{eq:noisy-channel}
\mathcal{N}_{C, \gamma}
:=
\bigl(\mathcal{E}_\gamma^{\otimes n}\bigr)\circ \mathcal{U}^{(d)}
\circ \cdots \circ
\bigl(\mathcal{E}_\gamma^{\otimes n}\bigr)\circ \mathcal{U}^{(1)} 
\circ
\bigl(\mathcal{E}_\gamma^{\otimes n}\bigr)   .
\end{align}
Here, the single-qubit depolarizing noise channel $\mathcal{E}_{\gamma}$ admits the Pauli-mixture representation~\cite{dalzell2024random} 
\begin{align}
    \mathcal{E}_{\gamma}(\rho) 
    &= 
    \left(1-\frac{3}{4}\gamma\right)\rho + \frac{1}{4}\gamma\sum_{P \in \{X, Y, Z\}} P\rho P . \label{def: noise channel}
\end{align}
Substituting Eq.~\eqref{def: noise channel} into Eq.~\eqref{eq:noisy-channel}, the corresponding noisy output probability distribution can be written as
\begin{align}
     \widetilde{p}(C,\gamma,x) 
     &= \Tr\left[  \ket{x}\!\bra{x} \; \mathcal{N}_{C, \gamma} \bigl( \,\ket{0^n}\!\bra{0^n} \,\bigr)  \right] \\
     &= \sum_{s \in \mathsf{P}_n^{d+1}} \left(1-\frac{3}{4}\gamma\right)^{N - |s|} \left( \frac{1}{4}\gamma \right)^{|s|} p(C_{s},x)  , \label{eq: noisy output probability}
\end{align}
where $s = (s_0, \dots, s_d) \in \mathsf{P}_{n}^{d+1}$ is a \textit{Pauli path} with each $s_i$ belonging to the $n$-qubit Pauli operator set
\begin{align}
    s_{i} \in \mathsf{P}_{n} \coloneqq \left\{ I, X, Y, Z  \right\}^{\otimes n} ,
\end{align}
and $|s|$ denotes the \textit{Hamming weight} of $s$, i.e., the number of non-identity Pauli operators in $s$. 
Also, $N = n(d+1)$ is the total number of depolarizing ``sites" (the number of depolarizing noise channels) in the noisy circuit. 
For a Pauli path $s = (s_0, \dots, s_d)$, $C_{s}$ is the circuit obtained from $C$ by inserting the $n$-qubit Pauli operation $s_i$ after the $i$th layer $U_{i}$ for $i = 1,\dots,d$, as well as applying $s_{0}$ before $U_{1}$. 
In particular, $C_{s_{\rm id}}=C$ for the all-identity Pauli path $s_{\rm id}$ (so $|s_{\rm id}| = 0$), and one can readily check $\widetilde{p}(C,\gamma,x) = p(C,x)$ in the noiseless limit $\gamma = 0$, since only the all-identity path contributes to the sum in Eq.~\eqref{eq: noisy output probability}.

%Following our convention in Definition~\ref{def: noisy quantum circuit}, when this noise channel is applied independently to each qubit at each time step of the circuit $C$,

\iffalse
Throughout this work, we consider the random circuit $C$ drawn from the circuit ensemble $\mathcal{H}_{\mathcal{A}}$ in Definition~\ref{def: random circuit distribution}.
By our convention for the circuit architecture $\mathcal{A}$ in Definition~\ref{def: architecture}, every qubit in any circuit drawn from $\mathcal{H}_{\mathcal{A}}$ undergoes at least one Haar-random gate.
Therefore, by Haar invariance, $p(C,x)$ (and likewise $\widetilde{p}(C,\gamma, x)$) follows essentially the same distribution over $C \sim \mathcal{H}_{\mathcal{A}}$ for every $x \in \{ 0, 1\}^{n}$, because Pauli-$X$ operations before measurement can be absorbed into the Haar-random gates without changing the circuit distribution.
Hence, for simplicity, we fix the output to $x = 0^{n}$ within our hardness arguments, and employ the convention without explicit dependence on $x$ such that
\fi

Throughout this work, we take $C \sim \mathcal{H}_{\mathcal{A}}$ in Definition~\ref{def: random circuit distribution}.
For Haar-random circuit ensembles, the random variables $p(C,x)$ and $\widetilde p(C,\gamma,x)$ have the same distribution over $C\sim\mathcal H_\mathcal A$ for every fixed output string $x$: Pauli-$X$ operations that map $0^n$ to any $x \in \{0,1\}^n$ can be absorbed into the Haar-random gates, and the depolarizing channel is Pauli-covariant.
We therefore fix $x=0^n$ throughout the hardness arguments and write
\begin{align}\label{eq: def of p(C)}
\begin{split}
    p(C) \coloneqq p(C,0^{n}),
    \quad
    \widetilde{p}(C,\gamma) \coloneqq \widetilde{p}(C,\gamma, 0^{n})  ,
    \end{split}
\end{align}
for $p(C,x)$ and $\widetilde{p}(C,\gamma, x)$ given in Eq.~\eqref{eq: ideal output probability} and Eq.~\eqref{eq: noisy output probability}, respectively.

\section{Main problems and results}\label{section: main problem and result}

This section presents our main classical-simulation hardness results for noisy RCS, where the overall argument is outlined in Fig.~\ref{fig: outline}.
%Throughout, by ``simulation" we mean the classical sampling problem that given as input a description of an $n$-qubit circuit $C \sim \mathcal{H}_{\mathcal{A}}$ and error parameter $\beta$, with implicitly given architecture $\mathcal{A}$ and noise strength $\gamma$, outputs a string $x \in \{0,1\}^{n}$ according to a distribution that approximates the noisy output distribution $\widetilde{p}(C,\gamma,x)$ in Eq.~\eqref{eq: noisy output probability} within TVD $\beta$ in time $\poly(n, 1/\beta)$, following the standard definition in the literature~\cite{bouland2019complexity, movassagh2023hardness, bouland2022noise}.
We begin by reviewing prior analyses on ideal RCS, highlighting the key requirements in proving its hardness.
Building on this foundation, we then introduce our main hardness results for noisy RCS and identify the noise-strength thresholds below which noisy RCS retains the hardness of ideal RCS.

%Lastly, we clarify how these results can be used to ultimately establish the classical intractability of noisy RCS in these noise regimes.

\subsection{Hardness framework for ideal RCS}

We start with hardness arguments for ideal RCS in~\cite{bouland2019complexity, bouland2022noise, kondo2022quantum, movassagh2023hardness, krovi2022average}. 
We first formalize the computational task underlying the hardness of ideal RCS, namely, the \textit{average-case} estimation of ideal output probability $p(C)$ in Eq.~\eqref{eq: def of p(C)} over a randomly chosen circuit $C \sim \mathcal{H}_{\mathcal{A}}$, which we refer to as $\mathcal{A}$-\textsc{Ideal-Probability-Estimation}.

\begin{definition}[Average-case ideal probability estimation]\label{def: ideal probability estimation problem}
    The $\mathcal{A}$-\textsc{Ideal-Probability-Estimation} task is defined as follows. 
    Given an $n$-qubit random circuit $C \sim \mathcal{H}_{\mathcal{A}}$ for an implicit architecture $\mathcal{A}$, together with error parameters $\varepsilon_0, \delta_0 > 0$, the task is to output an estimate of the ideal output probability $p(C)$ within additive error $\pm \varepsilon_0 2^{-n}$, with probability at least $1 -\delta_0$ over the choice of $C$, in $\poly(n,\varepsilon_0^{-1},\delta_0^{-1})$ time.
\end{definition}

\begin{figure*}[t]
\includegraphics[width=0.72\linewidth]{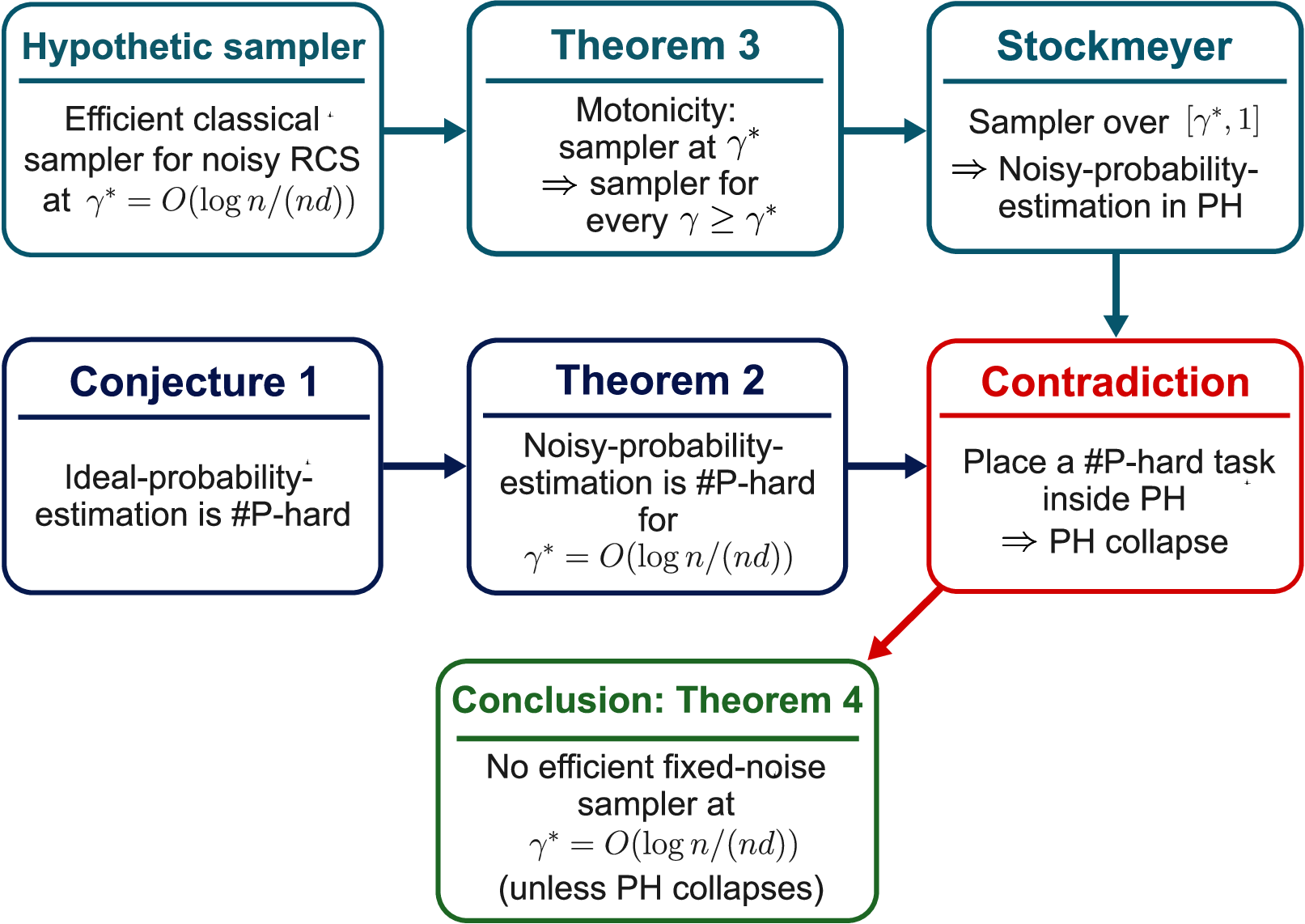}
\caption{
Logical flow of the noisy-RCS sampling-hardness proof. 
In the top row, assuming an efficient classical sampler for noisy RCS at noise strength $\gamma^* = O(\log n/(nd))$, Theorem~\ref{thm: complexity decrease} implies an efficient sampler for every larger noise strength $\gamma \geq \gamma^*$. 
Stockmeyer's reduction (Lemma~\ref{lemma: stockmeyer}) then yields noisy-probability estimation for any $\gamma \geq \gamma^*$ within a finite level of the polynomial hierarchy (PH).
In the bottom row, Conjecture~\ref{conj: average-case hardness} gives average-case \#P-hardness of ideal-probability estimation, and Theorem~\ref{thm 1} transfers this hardness to noisy-probability estimation when $\gamma \geq \gamma^*$ is given as an input parameter. 
Combining the two arguments would place a \#P-hard estimation task inside a finite level of PH, implying a collapse of PH.
Therefore, under Conjecture~\ref{conj: average-case hardness} and the non-collapse of PH, no efficient classical sampler exists for noisy RCS at noise strength $\gamma^* = O(\log n/(nd))$, yielding Theorem~\ref{corol: hardness}. 
%Outline of our hardness analysis on noisy RCS. 
%To sum up, starting from Conjecture~\ref{conj: average-case hardness}, we derive classical simulation hardness of noisy RCS (our main conclusion) at the end, as depicted in Theorem~\ref{corol: hardness}. 
%Here, the proof of Theorem~\ref{thm 1} is given in Sec.~\ref{section: proof of theorem 1}, the explicit construction of the architecture $\mathcal{A}_{L}^{\rm br}$ and the proof of Theorem~\ref{thm 2} is given in Sec.~\ref{section: proof of theorem 2}, and the proof of Theorem~\ref{thm: complexity decrease} is given in Sec.~\ref{section: proof of theorem 3}.
%We also present numerical analysis of anti-concentration in Sec.~\ref{section: numerical evidence}, which supports the anti-concentration assumption required by Theorem~\ref{thm 2}.
%Within this outline, we implicitly assume that PH does not collapse to a finite level. 
}
\label{fig: outline}
\end{figure*}

Note that the computational complexity of $\mathcal{A}$-\textsc{Ideal-Probability-Estimation} in Definition~\ref{def: ideal probability estimation problem} depends crucially on the architecture $\mathcal{A}$; 
for example, estimating $p(C)$ is easy for trivial architectures. 
We assume that there exists a circuit architecture $\mathcal{A}_0$ such that the $\mathcal{A}$-\textsc{Ideal-Probability-Estimation} task is \#P-hard.

\begin{conjecture}[\#P-hardness of average-case ideal probability estimation]\label{conj: average-case hardness}
There exists a circuit architecture $\mathcal{A}_0$ such that the $\mathcal{A}_0$-\textsc{Ideal-Probability-Estimation} in Definition~\ref{def: ideal probability estimation problem} is \#P-hard.
\end{conjecture}

%Under Conjecture~\ref{conj: average-case hardness}, by the complexity-theoretic reduction based on Stockmeyer's approximate counting  algorithm~\cite{stockmeyer1985approximation} established in the preceding arguments~\cite{aaronson2011computational, bouland2019complexity, bouland2022noise, kondo2022quantum, movassagh2023hardness, krovi2022average, bouland2025exponential}, the classical simulation of ideal RCS within an arbitrary inverse-polynomial TVD over the architecture $\mathcal{A}_0$ in Conjecture~\ref{conj: average-case hardness} is intractable, unless the polynomial hierarchy (PH) collapses.

Under Conjecture~\ref{conj: average-case hardness}, the reduction based on Stockmeyer's algorithm~\cite{stockmeyer1985approximation} established in Ref.~\cite{aaronson2011computational, bouland2019complexity, bouland2022noise, kondo2022quantum, movassagh2023hardness, krovi2022average, bouland2025exponential} implies that classical simulation of ideal RCS over $\mathcal{A}_0$ in Conjecture~\ref{conj: average-case hardness} (within any inverse-polynomial TVD) is intractable unless the polynomial hierarchy (PH) collapses.

\begin{remark}
    Reference~\cite[Theorem~3]{deshpande2022tight} shows that generic parallel architectures of sub-logarithmic depth $o(\log n)$ lack anti-concentration, in the sense that most output probabilities can be well approximated by ``$0$" to the relevant additive accuracy.
    This result rules out Conjecture~\ref{conj: average-case hardness} for such sublogarithmic-depth parallel architectures; see also Ref.~\cite{go2024exploring}.
    Outside this regime, the conjecture is not precluded by this particular obstruction, though it remains an assumption.
\end{remark}

Before proceeding, we briefly summarize the progress to date toward Conjecture~\ref{conj: average-case hardness}.
Reference~\cite{bouland2019complexity} established the \#P-hardness of $\mathcal{A}$-\textsc{Ideal-Probability-Estimation} in the setting of exact computation ($\varepsilon_0 = 0$). 
%for the architecture $\mathcal{A}$ exhibiting the worst-case hardness,
Subsequently, Ref.~\cite{movassagh2023hardness} strengthened this result by improving the tolerated additive-error scale to $\varepsilon_0 = e^{-O(m^3)}$, where $m$ denotes the number of gates in an $n$-qubit circuit. 
Further robustness was obtained in Refs.~\cite{bouland2017complexity, kondo2022quantum}, achieving the imprecision level $\varepsilon_0 = e^{-O(m\log m)}$.
This was further improved in Ref.~\cite{krovi2022average} to $\varepsilon_0 = e^{-O(m)}$.
Most recently, the RCS corollary of Ref.~\cite{bouland2025exponential} achieved $\varepsilon_0 = 2^{-O(n^{\beta})}$ for a fixed $\beta >0$.
However, the conjecture for inverse-polynomial $\varepsilon_0$ still remains open.

\subsection{Average-case \#P-hardness of noisy output probability estimation}

We now state our main results.
In particular, in direct analogy with Definition~\ref{def: ideal probability estimation problem}, we formally define the computational task that underlies our noisy-RCS hardness result, namely, average-case estimation of noisy output probability $\widetilde{p}(C, \gamma)$ in Eq.~\eqref{eq: def of p(C)}, which we refer to as $(\mathcal{A}, \gamma^*)$-\textsc{Noisy-Probability-Estimation}.

\begin{definition}[Average-case noisy probability estimation]\label{def: noisy probability estimation problem}
    The $(\mathcal{A},  \gamma^*)$-\textsc{Noisy-Probability-Estimation} task is defined as follows. 
    Given an $n$-qubit random circuit $C \sim \mathcal{H}_{\mathcal{A}}$ and noise parameter $\gamma \in [\gamma^*,1]$ for an implicit architecture $\mathcal{A}$ and noise strength $\gamma^*$, together with error parameters $\varepsilon, \delta > 0$, the task is to output an estimate of the noisy output probability $\widetilde{p}(C, \gamma)$ within additive error $\pm \varepsilon 2^{-n}$, with probability at least $1 -\delta$ over the choice of $C$, in $\poly(n,\varepsilon^{-1},\delta^{-1})$ time.
\end{definition}

%Notice that compared to its ideal counterpart in Definition~\ref{def: ideal probability estimation problem}, the $(\mathcal{A},  \gamma^*)$-\textsc{Noisy-Probability-Estimation} task now takes the noise parameter $\gamma \in [\gamma^*,1]$ additionally. 

Compared with its ideal counterpart in Definition~\ref{def: ideal probability estimation problem}, the $(\mathcal{A},\gamma^*)$-\textsc{Noisy-Probability-Estimation} task additionally takes a noise parameter $\gamma\in[\gamma^*,1]$ as input.
Here, $\gamma^*$ is a \textit{fixed}, lower-noise parameter for the estimation problem, possibly depending on the system size $n$, whereas $\gamma$ is provided as input.
Allowing $\gamma$ to be queried as an input parameter plays a central role in establishing the hardness of noisy RCS, analogous to recent approaches used to prove hardness results for noisy BosonSampling~\cite{go2025quantum, go2025sufficient}.
We also note that  as in the ideal RCS setting, \#P-hardness of the $(\mathcal{A},  \gamma^*)$-\textsc{Noisy-Probability-Estimation} implies classical intractability of a noisy-RCS sampling task in which the noise strength is supplied as an input from $[\gamma^*,1]$; see Lemma~\ref{lemma: stockmeyer} discussed later.

Based on this definition, we identify a threshold of $\gamma^*$ such that the $(\mathcal{A}_0,  \gamma^*)$-\textsc{Noisy-Probability-Estimation} task for the architecture $\mathcal{A}_0$ in Conjecture~\ref{conj: average-case hardness} is \#P-hard under Conjecture~\ref{conj: average-case hardness}.

%and for any $\gamma^* \leq \gamma_{\rm th}$,

\begin{theorem}[\#P-hardness of average-case noisy probability estimation]\label{thm 1}
    Let $\mathcal{A}_0$ be a circuit architecture in Conjecture~\ref{conj: average-case hardness} and let $d = \mathsf{depth}_{\mathcal{A}_0}(n)$.
    If
    \begin{align}\label{eq: thm: noise threshold 1}
        \gamma^* = O\left( \frac{\log n}{nd} \right)   ,
    \end{align}
    then $(\mathcal{A}_0,  \gamma^*)$-\textsc{Noisy-Probability-Estimation} in Definition~\ref{def: noisy probability estimation problem} is \#P-hard under Conjecture~\ref{conj: average-case hardness}.
\end{theorem}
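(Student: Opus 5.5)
We reduce $\mathcal{A}_0$-\textsc{Ideal-Probability-Estimation} to $(\mathcal{A}_0,\gamma^*)$-\textsc{Noisy-Probability-Estimation}. Given an oracle $\mathcal{O}$ for the latter, we estimate $p(C)=\widetilde p(C,0)$ by querying $\mathcal{O}$ at several noise strengths $\gamma_1,\dots,\gamma_{K}\in[\gamma^*,1]$ and extrapolating to $\gamma=0$. The starting point is Eq.~\eqref{eq: noisy output probability}. For fixed $C$, the map $\gamma\mapsto\widetilde p(C,\gamma)$ is a polynomial $q_C(\gamma)$ of degree at most $N=n(d+1)$. Write it as a function of $t=\gamma/\gamma^*$:
\[
q_C(\gamma^* t)=\sum_{k=0}^{N} a_k(C)\,(\gamma^* t)^k .
\]
The degree $N$ is polynomial, but naive Lagrange extrapolation from $[\gamma^*,1]$ to $0$ through degree $N$ would incur an exponential blow-up. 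We therefore first truncate to low degree. Regroup Eq.~\eqref{eq: noisy output probability} by path weight $w=|s|$:
\[
\widetilde p(C,\gamma)=\sum_{w}(1-\tfrac34\gamma)^{N-w}(\tfrac14\gamma)^w S_w(C),
\qquad S_w(C)=\sum_{|s|=w}p(C_s).
\]
By Pauli invariance of $\mathcal{H}_{\mathcal{A}_0}$, each $C_s$ is again distributed as $\mathcal{H}_{\mathcal{A}_0}$, so $\E_C[S_w]=\binom{N}{w}3^w\,\E[p(C)]=\binom{N}{w}3^w2^{-n}$. The $w$-th term therefore has expected size $2^{-n}$ times a binomial weight $\binom{N}{w}(\tfrac34\gamma)^w(1-\tfrac34\gamma)^{N-w}$, with mean $\tfrac34 N\gamma$.

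The plan is to work on the interval $\gamma\in[\gamma^*, c\gamma^*]$ for a constant $c$. There $N\gamma=O(\log n)$, and a Chernoff/Markov bound shows that, on average over $C$, the tail $w>L$ contributes at most $\varepsilon 2^{-n}$ once $L=O(\log n+\log(1/\varepsilon\delta))$. We then fix a degree-$L$ polynomial $r_C(\gamma)$ that approximates $q_C$ on $[0,c\gamma^*]$ to within $\varepsilon 2^{-n}$ for all but a $\delta$-fraction of $C$, obtained by Markov's inequality over $C$. Such a polynomial can be taken as the Taylor truncation of $q_C$ in $\gamma$ around $0$. Its tail in the monomial basis is controlled in the same way: expanding $(1-\tfrac34\gamma)^{N-w}$ gives coefficients bounded by $\E$-weights $\sim (N\gamma)^k/k!$.

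Next, query $\mathcal{O}$ at $L+1$ points $\gamma_j\in[\gamma^*,c\gamma^*]$, for example Chebyshev nodes, and Lagrange-interpolate to obtain $\hat r(0)$. The standard extrapolation bound (Remez/Chebyshev growth) states that a degree-$L$ polynomial bounded by $\eta$ on $[1,c]$ in $t$ is bounded at $t=0$ by $\eta\, e^{O(L)}$. More precisely, the bound is $|T_L(x_0)|\eta$ with $x_0=(c+1)/(c-1)$, and with $L=O(\log n)$ this overhead is $\poly(n)$. The per-query oracle accuracy $\varepsilon$ and failure probability $\delta$ are set to $\varepsilon_0/\poly(n)$ and $\delta_0/\poly(n)$, with a union bound over the $L+1$ queries. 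This union bound is valid because each query uses the same $C$, which is correctly distributed.

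The main obstacle is the interplay between the truncation error and the extrapolation amplification. The truncation error $e^{-\Omega(L)}$ must beat the amplification $e^{O(L)}\cdot e^{O(N\gamma^*)}$, and this is where $N\gamma^*=O(\log n)$ enters. Concretely, with $\mu=N\gamma^*=O(\log n)$, the Taylor tail beyond degree $L$ is about $(e\mu c/L)^L$, while amplification is about $(C'c)^L$ in the rescaled variable. Choosing $L=K\mu+O(\log(n/\varepsilon_0\delta_0))$ with $K$ a large constant makes the product $\le\varepsilon_0 2^{-n}$ on average, so Markov's inequality converts this into a per-circuit guarantee. If the Taylor approach gives poor constants, I would fall back to expanding directly in the binomial basis, with the variable-$t$ rescaling above, and bound $\E_C|a_k|$ via $\E[S_w]$. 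A final consistency check is that all queried $\gamma_j\ge\gamma^*$ and $\le1$, which holds since $c\gamma^*\le1$ for large $n$. Running time is $\poly(n,\varepsilon_0^{-1},\delta_0^{-1})$, so the oracle yields \#P-hardness by Conjecture~\ref{conj: average-case hardness}.
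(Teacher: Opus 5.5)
Your proposal is correct and follows the same architecture as the paper's proof (Lemmas~\ref{lemma: low-degree approximation} and~\ref{lemma: proof of theorem 1 app}). Both use a degree-$L$ surrogate that is exact at $\gamma=0$. Both derive an averaged error bound from Pauli invariance and $\E_C[S_w]=\binom{N}{w}3^w2^{-n}$, then convert it to a per-circuit bound by Markov's inequality. Both make $L+1$ oracle queries on $[\gamma^*,c\gamma^*]$ (the paper takes $c=3$) and extrapolate to $\gamma=0$ with $e^{O(L)}$ amplification, where $L=O(N\gamma^*+\log(nd/\varepsilon_0\delta_0))$.

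The real difference is the surrogate. The paper truncates the Chebyshev expansion of each weight $w_k(\gamma)$ on $[0,\gamma_{\max}]$. It must then add the correction $(\sum_{j>l}a_{k,j})T_l$ to restore $g_k^{(l)}(0)=\delta_{k0}$, and it bounds the tail with a Bernstein-ellipse estimate. Your Taylor truncation at $\gamma=0$ preserves $p(C)$ automatically, and its tail bound is elementary. Expanding $(1-\tfrac34\gamma)^{N-w}$ and using $\sum_w\binom{N}{w}\binom{N-w}{k-w}=2^k\binom{N}{k}$ gives
\[
\E_{C}\Bigl[\,\sup_{0\le\gamma\le c\gamma^*}\Bigl|\sum_{k>L}a_k(C)\gamma^k\Bigr|\,\Bigr]\le 2^{-n}\sum_{k>L}\binom{N}{k}\Bigl(\frac{3c\gamma^*}{2}\Bigr)^{k},
\]
which is even uniform in $\gamma$. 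The price is a base $3eN\gamma_{\max}/(2L)$ instead of the paper's $3eN\gamma_{\max}/(8l)$. For the same interpolation nodes, the required $L$ therefore grows by roughly a factor of $4$, and so does the constant in the factor $e^{-O(nd\gamma^*)}$ of the admissible oracle precision. This does not affect the $O(\log n/(nd))$ threshold, but it is exactly the constant the Chebyshev construction is built to improve (compare the Remark following Lemma~\ref{lemma: proof of theorem 1 app}).

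Two minor clarifications. First, the weight truncation $|s|\le L$ you mention first is not itself a degree-$L$ polynomial in $\gamma$, because the factors $(1-\tfrac34\gamma)^{N-w}$ survive. It is the Taylor truncation that actually carries the argument. Second, the oracle supplies values only at the nodes, so the amplification you need is the Lagrange sum $\sum_j|\ell_j(0)|$, not the growth of a polynomial known to be bounded on the whole interval. For the Chebyshev extremal nodes this sum equals $|T_L(x_0)|$ exactly, because the $\ell_j(0)$ alternate in sign. The paper instead uses equispaced nodes together with Lemma~\ref{lemma: interpolation}.
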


\iffalse
\begin{theorem}\label{thm 1}
    Let $\mathcal{A}_0$ be a circuit architecture in Conjecture~\ref{conj: average-case hardness} and let $d = \mathsf{depth}_{\mathcal{A}_0}(n)$.
    %Suppose that the probability distribution $p(C)$ over circuits $C \sim \mathcal{H}_{\mathcal{A}_0}$ satisfies the weak anti-concentration condition in Definition~\ref{def: weak AC}. 
    Then, there exists a noise threshold $\gamma_{\rm th}$ given by 
    \begin{align}\label{eq: thm: noise threshold 1}
        \gamma_{\rm th} = \Theta\left( \frac{\log n}{nd} \right)   ,
    \end{align}
    such that $(\mathcal{A}_0,  \gamma^*)$-\textsc{Noisy-Probability-Estimation} in Definition~\ref{def: noisy probability estimation problem} with any $\gamma^* \leq \gamma_{\rm th}$ is \#P-hard under Conjecture~\ref{conj: average-case hardness}.
\end{theorem}
\fi

We prove Theorem~\ref{thm 1} in Sec.~\ref{section: proof of theorem 1}. 
The proof establishes a complexity-theoretic reduction from $\mathcal{A}_{0}$-\textsc{Ideal-Probability-Estimation} to $(\mathcal{A}_{0},\gamma^*)$-\textsc{Noisy-Probability-Estimation}, and shows that the condition on $\gamma^*$ in Eq.~\eqref{eq: thm: noise threshold 1} ensures that the reduction has only polynomial overhead.
Thus, under Conjecture~\ref{conj: average-case hardness}, this yields the \#P-hardness claimed in Theorem~\ref{thm 1}.

\subsection{Hardness of classically simulating noisy RCS}

In the preceding arguments, we identified the boundary of $\gamma^*$ such that the $(\mathcal{A}_0,  \gamma^*)$-\textsc{Noisy-Probability-Estimation} task is \#P-hard under Conjecture~\ref{conj: average-case hardness}. 
By the standard reduction based on Stockmeyer’s algorithm~\cite{stockmeyer1985approximation}, this estimation task can be solved within a finite level of the PH given oracle access to an approximate sampler for noisy RCS over $\mathcal{H}_{\mathcal{A}_0}$ within inverse-polynomial TVD and that takes any noise strength $\gamma\in[\gamma^*,1]$ as an input (see Refs.~\cite{aaronson2011computational, bouland2019complexity, bouland2022noise, kondo2022quantum, movassagh2023hardness, krovi2022average, bouland2023complexity, bouland2025exponential} for detailed reduction procedures).
Hence, together with Toda's theorem $\rm{PH} \subseteq {\rm P^{\# P}}$~\cite{toda1991pp}, this gives the following lemma.

%By the standard reduction established in prior studies~\cite{aaronson2011computational, bouland2019complexity, bouland2022noise, kondo2022quantum, movassagh2023hardness, krovi2022average, bouland2023complexity, bouland2025exponential}, based on Stockmeyer’s algorithm~\cite{stockmeyer1985approximation}, oracle access to an approximate sampler that has bounded TVD enables estimation of most output probabilities with bounded error parameters within a finite level of the PH. 

\begin{lemma}[Stockmeyer's reduction]\label{lemma: stockmeyer}
    Suppose that $(\mathcal{A},  \gamma^*)$-\textsc{Noisy-Probability-Estimation} is \#P-hard for certain $(\mathcal{A},  \gamma^*)$.
    If there exists a polynomial-time classical sampler that, given a noise strength $\gamma\in[\gamma^*,1]$ as input, approximately samples from noisy RCS over $\mathcal{H}_{\mathcal{A}}$ within inverse-polynomial TVD, then PH collapses to a finite level.
    %Then, the existence of a polynomial-time classical simulator that approximately simulates noisy RCS over $\mathcal{H}_\mathcal{A}$ within inverse-polynomial TVD, and that is allowed to take the noise strength $\gamma \in [\gamma^*,1]$ as an input, implies a collapse of the PH to a finite level.
\end{lemma}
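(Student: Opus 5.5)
This is the standard Aaronson--Arkhipov/Bouland et al. Stockmeyer argument, adapted to the variable-noise sampler. Assume a polynomial-time classical sampler $\mathcal{S}$ that takes $C$, $\gamma\in[\gamma^*,1]$ and an accuracy parameter $\beta$. Its output distribution $q_{C,\gamma}$ satisfies $\|q_{C,\gamma}-\widetilde{p}(C,\gamma,\cdot)\|_{\rm TV}\le\beta$ in time $\poly(n,1/\beta)$. The goal is to show that $(\mathcal{A},\gamma^*)$-\textsc{Noisy-Probability-Estimation} lies in $\mathrm{BPP}^{\mathrm{NP}}$, a finite level of PH. Since this problem is \#P-hard, we then get $\mathrm{P}^{\#\mathrm{P}}\subseteq\mathrm{BPP}^{\mathrm{NP}}$. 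Toda's theorem $\mathrm{PH}\subseteq\mathrm{P}^{\#\mathrm{P}}$ then collapses PH to a finite level (using $\mathrm{BPP}^{\mathrm{NP}}\subseteq\Sigma_3$).

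\textbf{Step 1: hiding.} Given an instance $(C,\gamma,\varepsilon,\delta)$ of the estimation problem for the fixed string $0^n$, draw a uniformly random $x\in\{0,1\}^n$. Form $C_x$ by appending the Pauli $X^{x}$ after the last layer, or equivalently absorbing it into the last Haar gates on each qubit. Pauli-$X$ commutes with the depolarizing channel up to relabeling, so $\widetilde p(C_x,\gamma,x)=\widetilde p(C,\gamma,0^n)$. By left-invariance of the Haar measure, $C_x\sim\mathcal{H}_{\mathcal{A}}$ independently of $x$. This is exactly the Pauli-invariance/hiding property noted after Eq.~\eqref{eq: def of p(C)}. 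It holds for every $\gamma$, because the noise channel is Pauli-covariant. Consequently, the pair $(C_x,x)$ is distributed as a random circuit together with a uniformly random target string, and the input $\gamma$ is untouched.

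\textbf{Step 2: Stockmeyer counting on the sampler.} $\mathcal{S}$ is a polynomial-time randomized algorithm with random seed $r$. The quantity $q_{C_x,\gamma}(x)=\Pr_r[\mathcal{S}(C_x,\gamma,r)=x]$ is therefore a ratio of counts of a $\#\mathrm{P}$ function of $(C_x,\gamma,x)$. Stockmeyer's approximate counting gives a multiplicative $(1\pm 1/\poly)$ estimate of it in $\mathrm{BPP}^{\mathrm{NP}}$. Here $\gamma$ is supplied to $\mathcal{S}$ as part of the input, which is why the lemma requires a sampler accepting any $\gamma\in[\gamma^*,1]$.

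\textbf{Step 3: from TVD to additive error.} Write $\Delta_x=|q_{C,\gamma}(x)-\widetilde p(C,\gamma,x)|$. Then $\sum_x\Delta_x\le 2\beta$, so $\E_x[\Delta_x]\le 2\beta 2^{-n}$. Markov's inequality gives $\Delta_x\le 2\beta 2^{-n}/\delta'$ except with probability $\delta'$ over $x$. Similarly $\E_x[q(x)]=2^{-n}$, so the multiplicative Stockmeyer error becomes additive $O(2^{-n}/(\poly\cdot\delta'))$ except with probability $\delta'$. Combining these with Step 1, and averaging over $C\sim\mathcal{H}_{\mathcal{A}}$, yields an estimate of $\widetilde p(C,\gamma)$ within $\pm\varepsilon 2^{-n}$ with failure probability at most $\delta$ over $C$ and $x$. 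The required choices are $\beta=\varepsilon\delta/8$, $\delta'=\delta/2$ and Stockmeyer precision $\varepsilon\delta/8$. All of these are inverse-polynomial, so the procedure stays polynomial-time.

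\textbf{Main obstacle.} The one point to verify carefully is Step 1. Hiding must hold uniformly in the noise parameter, so that the failure probability over the \emph{circuit} distribution is not skewed by the choice of target string. Conditioning on $(C_x,x)$ must reproduce exactly the average-case distribution over $C$ demanded in Definition~\ref{def: noisy probability estimation problem}, for each fixed $\gamma$. I would establish this by proving Pauli-covariance of $\mathcal{E}_\gamma^{\otimes n}$ in the form $X^x\mathcal{E}_\gamma^{\otimes n}(\rho)X^x=\mathcal{E}_\gamma^{\otimes n}(X^x\rho X^x)$, and combining it with Haar invariance of the final-layer gates. The latter is available because every qubit experiences a gate by Definition~\ref{def: architecture}. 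Everything else follows the standard reduction verbatim.
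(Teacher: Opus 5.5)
Your proposal is correct and is essentially the argument the paper relies on. The paper gives no separate proof: it invokes the standard Stockmeyer reduction of Aaronson--Arkhipov and Bouland et al., with the noise strength $\gamma\in[\gamma^*,1]$ passed to the sampler as an input, and combines it with Toda's theorem. Your hiding step, which absorbs Pauli-$X$ into the last gates and uses Pauli covariance of $\mathcal{E}_\gamma$, is exactly the invariance the paper records after Eq.~\eqref{eq: def of p(C)}. The only cosmetic slack is that with $\delta'=\delta/2$ the tiny failure probability of Stockmeyer's algorithm pushes the total failure slightly above $\delta$; taking, e.g., $\delta'=\delta/3$ fixes this.
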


Therefore, Lemma~\ref{lemma: stockmeyer} implies that for $\gamma^*$ given in Theorem~\ref{thm 1}, noisy RCS over $\mathcal{H}_{\mathcal{A}_0}$ is classically intractable when the noise strength $\gamma\in[\gamma^*,1]$ is supplied as an input, under Conjecture~\ref{conj: average-case hardness} and non-collapse of PH. 
%In other words, there exists at least one $\gamma \in [\gamma^*,1]$ for which simulating noisy RCS with noise strength $\gamma$ is classically hard. 
However, this implication alone does not establish classical-simulation hardness for noisy RCS in the standard fixed-noise formulation, in which the noise strength is specified in advance as the target value $\gamma^*$~\cite{aharonov1996limitations, gao2018efficient, deshpande2022tight, aharonov2023polynomial,  nelson2025limitations, nelson2026polynomial, lee2025classical, zhang2025classically, noh2020efficient, zhang2023noisy, chen2018classical, cirstoiu2024fourier, huang2020classical, hangleiter2023computational}.

To bridge this gap, an additional ingredient is required, namely, a guarantee that increasing the depolarizing noise strength cannot make the sampling problem harder.
In fact, this is a natural expectation, supported by an extensive body of evidence that stronger noise makes quantum systems easier to simulate classically~\cite{renema2018efficient,  moylett2019classically, shchesnovich2019noise, garcia2019simulating, oszmaniec2018classical,  brod2020classical,  oh2021classical, oh2023classical, oh2025classical, noh2020efficient, aharonov2023polynomial, gao2018efficient, bremner2017achieving, rajakumar2025polynomial, aharonov1996limitations, lee2025classical, muller2016relative, nelson2025limitations, oh2025recent, deshpande2022tight, nelson2026polynomial, dalzell2024random, zhang2025classically}.
%, both for sampling from their output probability distributions and for estimating their output probabilities
%As similarly assumed in Ref.~\cite{go2025quantum, go2025sufficient}, these observations motivate the following physically plausible conjecture. 
Motivated by this intuition, we prove a monotonic decrease in complexity for our noisy RCS setting: simulating noisy RCS with a smaller noise strength implies the ability to simulate noisy RCS with any larger noise strength.

%that such a monotonic decrease in complexity generally holds for our noisy RCS setting, formulated as follows.

\begin{theorem}[Monotonicity of noisy RCS simulation]\label{thm: complexity decrease}
    Fix $\gamma_1\in[0,1]$ and $\beta\in[0,1]$. 
    If noisy RCS for a Haar-random circuit ensemble $\mathcal{H}_{\mathcal{A}}$ under the depolarizing noise model of Definition~\ref{def: noisy quantum circuit} admits a polynomial-time sampler within TVD $\beta$ at noise strength $\gamma_1$, then for every $\gamma_2\in[\gamma_1,1]$, noisy RCS over the same ensemble at noise strength $\gamma_2$ also admits a polynomial-time sampler within TVD $\beta$.
    %Fix $\gamma_{1} \in [0,1]$ and $\beta \in [0,1]$.
    %Suppose there exists a polynomial-time sampler that approximately simulates noisy RCS over a fixed circuit ensemble under depolarizing noise in Definition~\ref{def: noisy quantum circuit} with a fixed noise strength $\gamma_{1}$, and within TVD $\beta$.
    %Then, one can simulate noisy RCS for arbitrarily given noise strength $\gamma_{2} \in [\gamma_{1}, 1]$ over the same circuit ensemble, within the same TVD $\beta$, in polynomial time. 
\end{theorem}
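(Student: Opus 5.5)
The plan is to use the semigroup property of depolarizing noise together with the Pauli invariance of the Haar ensemble. First, I will record the composition law: for $\gamma_1,\gamma'\in[0,1]$,
\begin{align}
\mathcal{E}_{\gamma'}\circ\mathcal{E}_{\gamma_1}=\mathcal{E}_{\gamma_2},\qquad 1-\gamma_2=(1-\gamma_1)(1-\gamma').
\end{align}
Given $\gamma_2\in[\gamma_1,1]$ with $\gamma_1<1$, I set $\gamma'=(\gamma_2-\gamma_1)/(1-\gamma_1)\in[0,1]$. The case $\gamma_1=1$ is trivial, since then $\gamma_2=1$. Thus every noise site at strength $\gamma_2$ can be realized as a site at strength $\gamma_1$ followed by an extra depolarizing channel of strength $\gamma'$.

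Second, I will use the Pauli-mixture form~\eqref{def: noise channel} of $\mathcal{E}_{\gamma'}$. This lets me unravel the extra channels into a random Pauli path $s=(s_0,\dots,s_d)$. Each site independently receives $I$ with probability $1-\tfrac34\gamma'$ and each of $X,Y,Z$ with probability $\gamma'/4$. Conditioned on $s$, the $\gamma_2$-noisy circuit $\mathcal{N}_{C,\gamma_2}$ equals the $\gamma_1$-noisy version of a modified circuit $C'$, in which the Pauli $s_i$ is inserted right after the $\gamma_1$-noise at layer $i$. Because the depolarizing channel is Pauli-covariant, $P\,\mathcal{E}_{\gamma_1}(\rho)P=\mathcal{E}_{\gamma_1}(P\rho P)$, so the insertion can equivalently be placed next to the unitary layers. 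Concretely, $s_i$ is merged into the next layer $U_{i+1}$, and the final $s_d$ is merged into the last gate acting on each qubit. Since every qubit undergoes at least one gate (Definition~\ref{def: architecture}), each single-qubit Pauli can be absorbed into an adjacent Haar-random one- or two-qubit gate. This uses the ``every qubit has a gate'' convention. A qubit with no gate before or after a given site still has some gate somewhere along its wire, and the single-qubit Paulis commute through $\mathcal{E}_{\gamma_1}$ channels and identity layers until they reach that gate.

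Third, I define the new sampler. On input $C$, it samples $s$ classically in $O(nd)$ time, computes the description of $C'=C_s$ with Paulis multiplied into the gates, and runs the given $\gamma_1$-sampler on $C'$. For fixed $C$, its output distribution is $\E_s[q_{\gamma_1}(C_s,\cdot)]$, where $q_{\gamma_1}$ denotes the assumed sampler's distribution. The target is $\E_s[\widetilde p(C_s,\gamma_1,\cdot)]=\widetilde p(C,\gamma_2,\cdot)$. By convexity of TVD, the error is at most $\E_s\,\mathrm{TVD}(q_{\gamma_1}(C_s),\widetilde p(C_s,\gamma_1))$.

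The main obstacle is matching the exact form of the TVD guarantee. If the sampler's guarantee is pointwise, holding for every circuit in the architecture, the bound is immediately $\beta$. If instead the guarantee is average-case, I will use Haar invariance: for each fixed $s$, the map $C\mapsto C_s$ preserves $\mathcal{H}_{\mathcal{A}}$, because left or right multiplication by a fixed unitary preserves the Haar measure on each gate. Hence the averaged error equals
\begin{align}
\E_s\E_{C}\,\mathrm{TVD}\bigl(q_{\gamma_1}(C_s),\widetilde p(C_s,\gamma_1)\bigr)=\E_{C}\,\mathrm{TVD}\bigl(q_{\gamma_1}(C),\widetilde p(C,\gamma_1)\bigr)\le\beta.
\end{align}
If the guarantee is instead stated as holding with high probability over $C$, the same measure-preservation shows that $C_s$ is distributed as $\mathcal{H}_{\mathcal{A}}$. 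The failure probability is therefore unchanged, and on success the TVD is at most $\beta$ by convexity. The runtime is polynomial, since the new sampler only adds the $O(nd)$ Pauli sampling and gate-merging steps to one call of the $\gamma_1$-sampler.
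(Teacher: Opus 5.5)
Your argument for the theorem is correct and is essentially the paper's proof. You factor $\mathcal{E}_{\gamma_2}$ into $\mathcal{E}_{\gamma_1}$ and an extra depolarizing channel of strength $(\gamma_2-\gamma_1)/(1-\gamma_1)$, unravel that channel into a random Pauli path, absorb the Paulis into the Haar gates, make one call to the $\gamma_1$-sampler, and bound the error by convexity of TVD. Your average-TVD argument via Haar invariance also matches the paper's. Your use of Pauli covariance of $\mathcal{E}_{\gamma_1}$ to slide Paulis along idle wires until they reach an actual gate spells out a step the paper only asserts.

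The one claim that does not hold is the probabilistic case in your last paragraph. For a fixed $C$, the new sampler outputs the mixture $\E_s[q_{\gamma_1}(C_s,\cdot)]$ over all Pauli paths. Write $g(C'):=\mathrm{TVD}\bigl(q_{\gamma_1}(C'),\widetilde{p}(C',\gamma_1,\cdot)\bigr)$. Convexity then only gives the bound $\E_s[g(C_s)]$, and this bound is no longer $\le\beta$ once some positive-weight $s$ sends $C$ into the bad set, where $g$ can be as large as $1$. The fact that $C_s\sim\mathcal{H}_{\mathcal{A}}$ for each fixed $s$ controls $\Pr_C[C_s\text{ bad}]$ only one path at a time. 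The set of $C$ whose Pauli orbit meets the bad set can have measure far larger than $\xi$; it is bounded only by $\xi$ times the number of paths. So there is no ``success event'' of probability $1-\xi$ on which the error is at most $\beta$.

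The paper repairs this with Markov's inequality. Let $f(C):=\Pr_s[C_s\text{ bad}]$. Haar invariance gives $\E_C f(C)\le\xi$, hence $\Pr_C[f(C)>\eta]\le\xi/\eta$, and whenever $f(C)\le\eta$ the error is at most $\beta+\eta$. In the probabilistic setting you therefore get parameters $(\beta+\eta,\xi/\eta)$, for example $(\beta+\sqrt{\xi},\sqrt{\xi})$, not the unchanged $(\beta,\xi)$. This does not affect the theorem itself, which the paper proves under the per-circuit guarantee, your first case.
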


%The computational complexity of estimating output probabilities of noisy RCS within a fixed additive error decreases monotonically with increasing $\gamma$.

We defer the detailed proof of Theorem~\ref{thm: complexity decrease} to Sec.~\ref{section: proof of theorem 3}. 
To sketch the proof, we use the observation that for every $\gamma_2\in[\gamma_1,1]$, the depolarizing channel $\mathcal{E}_{\gamma_2}$ can be decomposed as $\mathcal{E}_{\gamma_1}\circ\mathcal{E}_{\gamma_{\rm add}}$ for some additional noise strength $\gamma_{\rm add}\in[0,1]$.
Hence, given a classical sampler for noise strength $\gamma_1$, one can simulate any larger noise strength $\gamma_2\ge\gamma_1$ by sampling the additional random Pauli errors induced by $\mathcal{E}_{\gamma_{\rm add}}$, absorbing these extra Pauli errors into the circuit, and then running the $\gamma_1$-sampler.
The resulting distribution is a convex mixture of distributions simulated by the $\gamma_1$-sampler; therefore, the total variation error does not increase.

%Finally, by Theorem~\ref{thm: complexity decrease}, if we have an efficient sampler for noisy RCS with fixed noise strength $\gamma^*$, then we also have efficient samplers for any $\gamma \in [\gamma^*,1]$.
%Therefore, combining Theorem~\ref{thm 1}, Lemma~\ref{lemma: stockmeyer}, and Theorem~\ref{thm: complexity decrease}, we arrive at our main conclusion.

Therefore, by Theorem~\ref{thm: complexity decrease}, efficient simulation of noisy RCS at noise strength $\gamma^*$ implies efficient simulation at every larger noise strength $\gamma \in[\gamma^*,1]$. 
Finally, combining this monotonicity result with Theorem~\ref{thm 1} and Lemma~\ref{lemma: stockmeyer}, we arrive at our main conclusion.

%Equivalently, if there exists at least one $\gamma \in [\gamma^*,1]$ such that simulating noisy RCS with noise strength $\gamma$ is classically hard, then noisy RCS with a \textit{fixed} noise strength $\gamma^*$ is also classically hard.

\begin{theorem}[Hardness of noisy RCS]\label{corol: hardness}
    Let $\mathcal{A}_{0}$ be a circuit architecture in Conjecture~\ref{conj: average-case hardness}, and let $d = \mathsf{depth}_{\mathcal{A}_0}(n)$.
    Then, unless PH collapses to a finite level, no polynomial-time classical sampler can approximately simulate noisy RCS over $\mathcal{H}_{\mathcal{A}_{0}}$ within inverse-polynomial TVD at noise strength $\gamma^*$ satisfying
    \begin{align}
    \gamma^* = O\left(\frac{\log n}{nd}\right)  ,
    \end{align}
    under Conjecture~\ref{conj: average-case hardness}.
\end{theorem}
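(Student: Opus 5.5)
The plan is a proof by contradiction that chains the three earlier results in the order shown in Fig.~\ref{fig: outline}. Fix $\gamma^* = O(\log n/(nd))$ with $d=\mathsf{depth}_{\mathcal{A}_0}(n)$, and suppose there is a polynomial-time classical sampler $\mathcal{S}_{\gamma^*}$. Given any inverse-polynomial $\beta$, this sampler outputs strings from a distribution within TVD $\beta$ of $\widetilde{p}(C,\gamma^*,\cdot)$ for $C\sim\mathcal{H}_{\mathcal{A}_0}$. The aim is to turn $\mathcal{S}_{\gamma^*}$ into a sampler that accepts a variable noise strength $\gamma\in[\gamma^*,1]$ as input. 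Combined with Theorem~\ref{thm 1} and Lemma~\ref{lemma: stockmeyer}, that sampler would collapse PH.

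\textbf{Step 1 (monotonicity).} Apply Theorem~\ref{thm: complexity decrease} with $\gamma_1=\gamma^*$. For each $\gamma_2\in[\gamma^*,1]$ this gives a polynomial-time sampler within the same TVD $\beta$. We need one algorithm that takes $\gamma$ as input, not a separate sampler for each fixed value, so I will unpack the proof sketch of the theorem to get uniformity. From $\gamma$, compute $\gamma_{\rm add}$ through $(1-\gamma)=(1-\gamma^*)(1-\gamma_{\rm add})$. Sample i.i.d.\ Pauli errors from the Pauli-mixture representation Eq.~\eqref{def: noise channel} of $\mathcal{E}_{\gamma_{\rm add}}$ at each of the $N=n(d+1)$ sites. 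Absorb these errors into the adjacent gates, producing a circuit $C'$. Finally, run $\mathcal{S}_{\gamma^*}$ on $C'$.

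Two facts make this valid. First, Haar invariance means $C'$ is again distributed as $\mathcal{H}_{\mathcal{A}_0}$. Second, the initial-layer Paulis acting on $\ket{0^n}$ contribute only phases or bit flips that can be handled the same way. Every step is polynomial in $n$ and in the bit description of $\gamma$. The output is therefore a convex mixture of outputs of $\mathcal{S}_{\gamma^*}$, and its TVD to $\widetilde{p}(C,\gamma,\cdot)$ is at most $\beta$ (averaged appropriately over $C$, consistent with the average-case sense used in Lemma~\ref{lemma: stockmeyer}). The result is a single polynomial-time sampler that takes $\gamma\in[\gamma^*,1]$ as input.

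\textbf{Step 2 (combine).} By Theorem~\ref{thm 1} and Conjecture~\ref{conj: average-case hardness}, $(\mathcal{A}_0,\gamma^*)$-\textsc{Noisy-Probability-Estimation} is \#P-hard for this $\gamma^*$. Lemma~\ref{lemma: stockmeyer} then applies to the variable-noise sampler from Step 1 and yields a collapse of PH. Taking the contrapositive, no such $\mathcal{S}_{\gamma^*}$ exists unless PH collapses, which is the claim.

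\textbf{Main obstacle.} The main difficulty is making sure the TVD notion matches across the steps. Lemma~\ref{lemma: stockmeyer} requires a sampler that is inverse-polynomially close for the given input circuit; the standard formulation allows average closeness over $C$, which Markov's inequality converts into closeness for most $C$. Theorem~\ref{thm: complexity decrease} must preserve exactly this notion, and the reduction must be uniform in $\gamma$. I would check this by writing the simulated distribution as $\E_{s}[q_{\mathcal{S}}(C_s,\cdot)]$. Applying the triangle inequality to the convex mixture, with $C_s\sim\mathcal{H}_{\mathcal{A}_0}$, bounds the averaged TVD by $\beta$.
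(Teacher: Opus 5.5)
Your proposal is correct and follows essentially the same route as the paper. You use Theorem~\ref{thm: complexity decrease} to lift a $\gamma^*$-sampler to a single sampler that takes $\gamma\in[\gamma^*,1]$ as input, and then combine Theorem~\ref{thm 1} with Lemma~\ref{lemma: stockmeyer} to collapse PH. Your explicit checks of uniformity in $\gamma$ and of matching TVD notions are handled in the paper by the constructive procedure in the proof of the monotonicity theorem and by its corollary on average and probabilistic TVD guarantees.
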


To aid understanding, we summarize the overall structure of the argument in Fig.~\ref{fig: outline}.

\section{Proof of Theorem~\ref{thm 1}}\label{section: proof of theorem 1}

This section proves Theorem~\ref{thm 1}.
To this end, we establish a complexity-theoretic reduction from the $\mathcal{A}_0$-\textsc{Ideal-Probability-Estimation} in Definition~\ref{def: ideal probability estimation problem} to the $(\mathcal{A}_0,  \gamma^*)$-\textsc{Noisy-Probability-Estimation} in Definition~\ref{def: noisy probability estimation problem}.
That is, we show how to estimate the ideal output probability $p(C)$ using estimates of the noisy output probability $\widetilde{p}(C,\gamma)$ for $\gamma \in [\gamma^*, 1]$.
We then identify the boundary of the noise strength $\gamma^*$ for which this reduction has only polynomial overhead.
Finally, under Conjecture~\ref{conj: average-case hardness} that the $\mathcal{A}_0$-\textsc{Ideal-Probability-Estimation} is \#P-hard, it follows that the $(\mathcal{A}_0,  \gamma^*)$-\textsc{Noisy-Probability-Estimation} remains \#P-hard, as claimed in Theorem~\ref{thm 1}. 
%\cor{(provided anticoncentration and \#P-hardness for )}

%following an approach similar to that used in hardness analyses of noisy boson sampling~\cite{aaronson2016bosonsampling, go2025quantum, go2025sufficient}.

To proceed, consider an $n$-qubit circuit $C \sim \mathcal{H}_{\mathcal{A}_0}$ with depth $d = \mathsf{depth}_{\mathcal{A}_0}(n)$. 
Note that for the ideal output probability $p(C) = p(C,0^{n})$ in Eq.~\eqref{eq: def of p(C)}, the noisy output probability $\widetilde{p}(C,\gamma)$ in Eq.~\eqref{eq: def of p(C)} can be expressed as
\begin{align}\label{eq: noisy output probability 2}
    \widetilde{p}(C,\gamma) = \sum_{k=0}^{N}  \left(1-\frac{3}{4}\gamma\right)^{N -k} \left( \frac{1}{4}\gamma \right)^{k} \sum_{\substack{s \in \mathsf{P}_{n}^{d+1} \\ |s| = k}}  p(C_{s})   ,
\end{align}
where $N=n(d+1)$ is the number of depolarizing sites, equivalently, the maximum possible Hamming weight of a Pauli path, and $C_{s}$ is the circuit obtained from $C$ by inserting Pauli operators according to $s$.
Crucially, $\widetilde{p}(C,\gamma)$ is a polynomial in $\gamma$ of degree at most $N$, and converges to its ideal counterpart $p(C)$ in the noiseless limit $\gamma \rightarrow 0$.

We then construct a low-degree polynomial in $\gamma$ that approximates the noisy output probability $\widetilde{p}(C,\gamma)$. 
Specifically, for $l < N$, we define the $l$-degree approximation $\widetilde{p}_{l}(C,\gamma)$ as
\begin{align}\label{eq: low degree approx for noisy prob}
    \widetilde{p}_{l}(C,\gamma) \;\coloneqq\;  \sum_{k=0}^{N} g_{k}^{(l)}(\gamma) \sum_{\substack{s \in \mathsf{P}_{n}^{d+1} \\ |s| = k}} p(C_s)  , 
\end{align}
where $g_{k}^{(l)}(\gamma)$ for each $k$ is a yet-to-be-determined polynomial in $\gamma$ of allowed degree at most $l$ that approximates 
\begin{align}
    g_{k}^{(l)}(\gamma) \;\approx  \;  \left(1-\frac{3}{4}\gamma\right)^{N -k} \left( \frac{1}{4}\gamma \right)^{k}  ,
\end{align}
and satisfies $g_{k}^{(l)}(0) = \delta_{k0}$ (i.e., $1$ for $k=0$, and $0$ otherwise), thus ensuring that $\widetilde{p}_{l}(C,0) = \widetilde{p}(C,0) = p(C)$.

In the reduction, we will query the oracle only at noise values in a smaller interval $[\gamma^*,\gamma_{\max}]$, where $\gamma_{\max}>\gamma^*$ will be chosen later.
%For the input noise parameter $\gamma \in [\gamma^*,1]$ of $(\mathcal{A}_0,  \gamma^*)$-\textsc{Noisy-Probability-Estimation} in Definition~\ref{def: noisy probability estimation problem}, we temporarily set the maximum value $\gamma_{\rm max} > \gamma^*$ such that the input $\gamma$ is bounded in the interval $\gamma \in [\gamma^*, \gamma_{\rm max}]$, where $\gamma_{\rm max}$ is left unspecified at this moment.
For query values restricted to this interval, we construct a family of polynomials $\{g_{k}^{(l)}(\gamma)\}_{k}$ such that the approximation $ \widetilde{p}_{l}(C,\gamma) $ is close to $ \widetilde{p}(C,\gamma) $ over a large fraction of $C \sim \mathcal{H}_{\mathcal{A}_0}$, with the approximation error bounded in terms of the maximum input $\gamma_{\rm max}$.

%approximate $\{(1-\gamma)^{k}\}_{k}$, 

\begin{lemma}[Low-degree approximation of $\widetilde{p}(C,\gamma)$]\label{lemma: low-degree approximation}
For any $0<\delta\le1$, any $\gamma_{\max}\in(0,1]$, and any
integer $1\le l<N$, there exists a family of polynomials $\{g_{k}^{(l)}(\gamma)\}_{k=0}^{N}$ such that each polynomial $g_{k}^{(l)}(\gamma)$ has degree at most $l$ and satisfies $g_{k}^{(l)}(0) = \delta_{k0}$, while making the approximation $\widetilde{p}_{l}(C,\gamma) $ in Eq.~\eqref{eq: low degree approx for noisy prob} close to $\widetilde{p}(C,\gamma)$ in Eq.~\eqref{eq: noisy output probability 2} for each fixed $\gamma \in [0, \gamma_{\rm max}]$ by
\begin{align}\label{eq: lemma: low-degree approximation}
    \Pr_{C \sim \mathcal{H}_{\mathcal{A}}} \left[ \left| \widetilde{p}(C,\gamma) - \widetilde{p}_{l}(C,\gamma)  \right|  > \frac{\varepsilon'}{2^n} \right] \leq \delta    ,
\end{align}
where the approximation error $\varepsilon'$ can be chosen as 
\begin{align}\label{eq: epsilon'}
    \varepsilon'
    := \frac{4N^2\gamma_{\rm max}}{l \delta }\left(\frac{3e\,N\gamma_{\max}}{8l}\right)^{l}.
\end{align}
\end{lemma}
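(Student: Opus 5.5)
Our plan is to take $g_k^{(l)}$ to be a Taylor truncation and then control the error with Markov's inequality, using Pauli invariance of the Haar ensemble to compute the expected error.

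\emph{Choice of polynomials.} Write $f_k(\gamma)=(1-\tfrac34\gamma)^{N-k}(\tfrac14\gamma)^k$. For $k\le l$, let $g_k^{(l)}$ be the degree-$l$ Taylor polynomial of $f_k$ about $\gamma=0$. Since $f_k$ has a zero of order $k$ at $0$, this gives $g_k^{(l)}(0)=\delta_{k0}$. For $k>l$, set $g_k^{(l)}\equiv 0$; this is consistent with $g_k^{(l)}(0)=0$. Expanding $f_k$ binomially, $f_k(\gamma)=\sum_{j=0}^{N-k}\binom{N-k}{j}(-\tfrac34)^j 4^{-k}\gamma^{k+j}$. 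The truncation error for $k\le l$ is therefore the tail over $k+j>l$. For $k>l$ the error is $f_k$ itself.

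\emph{Averaging over circuits.} The error is
\[
\widetilde p-\widetilde p_l=\sum_k\bigl(f_k-g_k^{(l)}\bigr)S_k,\qquad S_k=\sum_{|s|=k}p(C_s).
\]
Inserting Pauli operators leaves $\mathcal H_{\mathcal A}$ invariant, because each qubit meets a Haar gate and a Pauli can be absorbed into an adjacent gate. Hence $\E_C p(C_s)=\E_C p(C)=2^{-n}$. There are $\binom{N}{k}3^k$ paths of weight $k$, so $\E S_k=\binom Nk 3^k 2^{-n}$. Since $p(C_s)\ge0$, the triangle inequality gives
\[
\E|\widetilde p-\widetilde p_l|\le 2^{-n}\sum_k|f_k-g_k^{(l)}|\binom Nk3^k.
\]
Markov's inequality then yields the probability bound $\delta$ with $\varepsilon'=\delta^{-1}\sum_k|f_k-g_k^{(l)}|\binom Nk3^k$. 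It remains to bound this sum by $\tfrac{4N^2\gamma_{\max}}{l}\bigl(\tfrac{3eN\gamma_{\max}}{8l}\bigr)^l$ uniformly for $\gamma\in[0,\gamma_{\max}]$.

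\emph{Bounding the sum, the main step.} Bound every tail term in absolute value. Using $\binom Nk\binom{N-k}{j}=\binom{N}{k+j}\binom{k+j}{k}$ and grouping by total degree $m=k+j>l$, the sum is at most
\[
\sum_{m>l}\binom Nm\gamma^m\sum_{k}\binom mk(3/4)^k(3/4)^{m-k}=\sum_{m>l}\binom Nm\bigl(\tfrac{3\gamma}{2}\bigr)^m .
\]
The $k>l$ terms fit into the same regrouping, since their monomials also have degree $m\ge k>l$. Using $\binom Nm\le (eN/m)^m$, each term is at most $(3e\gamma N/(2m))^m$. This is a geometric-type tail starting at $m=l+1$. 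I expect the main obstacle to be reconciling the constant $3/8$ in Eq.~\eqref{eq: epsilon'}, where my crude estimate gives $3/2$, and the prefactor $N^2\gamma_{\max}/l$. A sharper accounting should fix this. One option is to use the signed alternating structure, $\sum_k\binom mk(-1)^{m-k}(1/4)^k(3/4)^{m-k}$, i.e.\ $(1/4-3/4)^m$ in magnitude $(1/2)^m$, possibly up to a factor of $3^k$, which would give the $3/8$ scale. If the ratio is not below $1$, I would bound the number of terms by $N$ and the largest term by the $m=l+1$ term times a factor polynomial in $N$. Either way the result has the form $\mathrm{poly}(N)\gamma_{\max}(cN\gamma_{\max}/l)^l$. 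I would then fix the constants by checking the extreme terms carefully, using monotonicity in $\gamma\le\gamma_{\max}$ to take the supremum.
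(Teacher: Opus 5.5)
Your reduction of the lemma to a deterministic inequality matches the paper's: the triangle inequality using $S_k\ge0$, Pauli invariance giving $\E S_k=\binom Nk3^k2^{-n}$, then Markov. So everything rests on showing $\sum_k\binom Nk3^k\,|f_k(\gamma)-g_k^{(l)}(\gamma)|\le\frac{4N^2\gamma_{\max}}{l}\bigl(\frac{3eN\gamma_{\max}}{8l}\bigr)^l$ for $\gamma\in[0,\gamma_{\max}]$.

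\textbf{The gap.} For Taylor polynomials at $\gamma=0$ this inequality fails in general, so sharper bookkeeping cannot rescue it. Already the $k>l$ terms, where you set $g_k^{(l)}\equiv0$, contribute at $\gamma=\gamma_{\max}$ exactly $\sum_{k>l}\binom Nk(3\gamma_{\max}/4)^k(1-3\gamma_{\max}/4)^{N-k}=\Pr[\mathrm{Bin}(N,3\gamma_{\max}/4)>l]$. For $\gamma_{\max}=1/N$ and $l=N/2$, this exceeds the target by a factor of order $(4/e)^{N/2}N^{-3/2}$. For $k\le l$, even bounding each alternating Taylor tail by its first omitted term leaves $\binom N{l+1}(3\gamma/4)^{l+1}(2^{l+1}-1)$, i.e.\ base $3/2$ rather than $3/8$.

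Your sign-cancellation fix is not available. After $|\sum_kc_kS_k|\le\sum_k|c_k|S_k$, the absolute values sit on each $k$ separately, because the $S_k$ are different random variables. So you cannot combine terms of different $k$ at fixed total degree $m$. Even if you could, with the correct weights $3^k$ the signed sum $\sum_k\binom mk(3/4)^k(-3/4)^{m-k}$ vanishes. That only restates $\E[\widetilde p-\widetilde p_l]=0$ and says nothing about $\E|\widetilde p-\widetilde p_l|$. Your fallback of bounding by the $m=l+1$ term also fails for $l\lesssim 3N\gamma_{\max}/2$, since the terms $\binom Nm(3\gamma_{\max}/2)^m$ keep increasing past $m=l+1$.

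\textbf{The missing idea.} Approximate each $w_k$ (your $f_k$) uniformly on the whole interval $[0,\gamma_{\max}]$ instead of expanding about its endpoint. The lost factor is essentially $4^l$. For instance, the best degree-$l$ approximation of $\gamma^{l+1}$ on $[0,\gamma_{\max}]$ has error $2(\gamma_{\max}/4)^{l+1}$, whereas Taylor truncation leaves $\gamma_{\max}^{l+1}$.

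The paper proceeds as follows.
\begin{itemize}
\item It substitutes $t=1-2\gamma/\gamma_{\max}$ and truncates the Chebyshev expansion $w_k=\sum_ja_{k,j}T_j(t)$ at degree $l$.
\item It adds $(\sum_{j>l}a_{k,j})T_l(t)$, which enforces $g_k^{(l)}(0)=\delta_{k0}$ because $T_j(1)=1$.
\item The Bernstein-ellipse coefficient bound then gives $\Delta_k\le4M_{k,\rho}/(\rho^l(\rho-1))$.
\item The weighted sum over $k$ is controlled through $A(z)+3B(z)=1$ for the two affine factors, which yields $\binom Nk3^kM_{k,\rho}\le\bigl(1+\frac{3\gamma_{\max}}{8}\frac{(\rho-1)^2}{\rho}\bigr)^N$.
\item Choosing $\rho_\star=(2+\sqrt{\lambda^2+4})/\lambda\ge4/\lambda$ with $\lambda=3N\gamma_{\max}/(2l)$ gives both the base $e\lambda/4=3eN\gamma_{\max}/(8l)$ and the prefactor $2(N+1)\lambda\le4N^2\gamma_{\max}/l$.
\end{itemize}

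Your Taylor route does prove a valid weaker lemma, with $\varepsilon'=\delta^{-1}\sum_{m>l}\binom Nm(3\gamma_{\max}/2)^m$. That would still give the asymptotic threshold $\gamma^*=O(\log n/(nd))$, with a larger constant in the exponent. It does not prove the statement with the stated $\varepsilon'$, nor the explicit constant $4e^2$ used in Lemma~\ref{lemma: proof of theorem 1 app}.
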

\begin{proof}
    See Appendix~\ref{sec: proof of low degree approximation}
\end{proof}

We now state the main reduction lemma for proving Theorem~\ref{thm 1}, in which we explicitly establish a complexity-theoretic reduction.
Specifically, we show that given access to an oracle for $(\mathcal{A}_0,  \gamma^*)$-\textsc{Noisy-Probability-Estimation} as defined in Definition~\ref{def: noisy probability estimation problem}, one can solve $\mathcal{A}_0$-\textsc{Ideal-Probability-Estimation} in Definition~\ref{def: ideal probability estimation problem}, provided that the oracle error parameters are chosen appropriately.

\begin{lemma}\label{lemma: proof of theorem 1}
Let $\mathcal{A}_0$ be a circuit architecture in Conjecture~\ref{conj: average-case hardness}, and let $d = \mathsf{depth}_{\mathcal{A}_0}(n)$.
%Fix a noise strength $\gamma^* \in [0,1]$.
Let $\mathcal{O}$ be an oracle that solves $(\mathcal{A}_0,  \gamma^*)$-\textsc{Noisy-Probability-Estimation} in Definition~\ref{def: noisy probability estimation problem}.
If the oracle error parameters satisfy
\begin{align}\label{eq: lemma: error bounds 1 main}
\begin{split}
    \varepsilon &=  \poly(n, d, \varepsilon_0^{-1}, \delta_0^{-1})^{-1}e^{-O(nd \gamma^*) }  ,
    \\ 
    \delta &= O\!\left( \frac{\delta_0}{nd\gamma^*  +  \log (nd\varepsilon_0^{-1}\delta_0^{-1})}
    \right)  ,
\end{split}
\end{align} 
then the $\mathcal{A}_0$-\textsc{Ideal-Probability-Estimation} in Definition~\ref{def: ideal probability estimation problem} can be solved within $\rm{BPP}^{\mathcal{O}}$.
\end{lemma}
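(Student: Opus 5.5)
The reduction queries the oracle at $l+1$ noise values in $[\gamma^*,\gamma_{\max}]$, interpolates a degree-$l$ polynomial through the answers, and evaluates it at $\gamma=0$. By Lemma~\ref{lemma: low-degree approximation}, $\widetilde p_l(C,0)=p(C)$ and $\widetilde p_l(C,\gamma)$ is a polynomial of degree at most $l$ in $\gamma$. So if we know $\widetilde p_l(C,\gamma_j)$ at nodes $\gamma_j$, Lagrange interpolation recovers $p(C)=\sum_j L_j(0)\,\widetilde p_l(C,\gamma_j)$ exactly. The oracle actually returns $\widetilde p(C,\gamma_j)$ up to $\pm\varepsilon 2^{-n}$, and this differs from $\widetilde p_l(C,\gamma_j)$ by at most $\varepsilon' 2^{-n}$ with probability $\ge 1-\delta'$ for each fixed $\gamma_j$. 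A union bound over the $l+1$ nodes shows that, with probability at least $1-(l+1)(\delta+\delta')$ over $C$, the output error is at most $\Lambda(\varepsilon+\varepsilon')2^{-n}$, where $\Lambda=\sum_j|L_j(0)|$ is the Lebesgue-type constant for extrapolation to $0$. Each oracle call uses the same random $C$ and a different input $\gamma_j$, which Definition~\ref{def: noisy probability estimation problem} allows because $\gamma_j\in[\gamma^*,1]$. The procedure is therefore a deterministic polynomial-time algorithm with $l+1$ oracle calls, hence in $\mathrm{BPP}^{\mathcal O}$.

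The parameters are fixed as follows. Set $\gamma_{\max}=c\,\gamma^*$ for a constant $c>1$, or $\gamma_{\max}=\max\{c\gamma^*,\,\Theta(l/N)\}$ if needed, and use equally spaced nodes $\gamma_j=\gamma^*+j(\gamma_{\max}-\gamma^*)/l$. For such nodes,
\begin{equation*}
|L_j(0)|=\prod_{i\ne j}\frac{\gamma_i}{|\gamma_i-\gamma_j|}\le \binom{l}{j}\Bigl(\frac{\gamma_{\max}}{h}\Bigr)^{l}\frac{1}{l!},\qquad h=\frac{\gamma_{\max}-\gamma^*}{l},
\end{equation*}
so $\Lambda\le e^{O(l)}$ when $\gamma_{\max}/(\gamma_{\max}-\gamma^*)=O(1)$. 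Next choose $l=\Theta(N\gamma_{\max})+\Theta(\log(nd/(\varepsilon_0\delta_0)))$, taking the constant large enough that the ratio $3eN\gamma_{\max}/(8l)$ is at most a small constant, say $e^{-2}$ times the base of $\Lambda$. Then Lemma~\ref{lemma: low-degree approximation}, with failure probability $\delta'=\delta_0/(4(l+1))$, gives $\Lambda\varepsilon'\le \varepsilon_0/2$. The polynomial prefactor $4N^2\gamma_{\max}/(l\delta')$ is absorbed by the $\log(nd\,\varepsilon_0^{-1}\delta_0^{-1})$ term in $l$.

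With this choice, $\Lambda=e^{O(l)}=e^{O(nd\gamma^*)}\poly(n,d,\varepsilon_0^{-1},\delta_0^{-1})$. Setting $\varepsilon=\varepsilon_0/(2\Lambda)$ matches the first line of Eq.~\eqref{eq: lemma: error bounds 1 main}. Setting $\delta=\delta_0/(4(l+1))$ matches the second line, since $l=O(nd\gamma^*+\log(nd\,\varepsilon_0^{-1}\delta_0^{-1}))$. The total failure probability is then at most $\delta_0/2$, and the total error is at most $\varepsilon_0 2^{-n}$. The oracle's running time $\poly(n,\varepsilon^{-1},\delta^{-1})$ is polynomial whenever $nd\gamma^*=O(\log n)$; otherwise it is still finite, with the stated dependence.

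The main obstacle is balancing the interpolation blow-up $\Lambda$ against the truncation error $\varepsilon'$ through the same parameter $l$. Both scale exponentially in $l$: $\Lambda$ grows like $C_1^{\,l}$, while $\varepsilon'$ decays like $(C_2N\gamma_{\max}/l)^l$. One must verify that choosing $l/(N\gamma_{\max})$ as a sufficiently large constant makes the product decay; it does, because $\varepsilon'$ decays super-exponentially in $l/(N\gamma_{\max})$. The boundary case $\gamma^*\ll 1/N$ also needs care, since $\gamma_{\max}$ must be kept a constant multiple of $\gamma^*$ to avoid an unnecessarily large $\Lambda$, while $l\ge1$ still holds. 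If the equal-spacing bound on $\Lambda$ proves too crude, the first fallback is Chebyshev nodes on $[\gamma^*,\gamma_{\max}]$, whose extrapolation constant at $0$ is known to be $e^{O(l)}$ when $\gamma^*/\gamma_{\max}$ is bounded away from $1$.
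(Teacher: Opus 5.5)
Your proposal is correct and follows the paper's proof essentially step for step. The shared steps are: pass to $\widetilde p_l$ via Lemma~\ref{lemma: low-degree approximation}, query at $l+1$ equally spaced noise values in $[\gamma^*,\Theta(\gamma^*)]$, extrapolate the Lagrange interpolant to $\gamma=0$ with an $e^{O(l)}$ blow-up, apply a union bound giving $\delta=\Theta(\delta_0/l)$, and choose $l=\Theta(N\gamma_{\max})+\Theta(\log(nd\,\varepsilon_0^{-1}\delta_0^{-1}))$. The differences are cosmetic. You bound the extrapolation constant directly by $\sum_j|L_j(0)|\le\bigl(2e\gamma_{\max}/(\gamma_{\max}-\gamma^*)\bigr)^l$ instead of citing Kondo et al.'s lemma, and you use a separate failure parameter for Lemma~\ref{lemma: low-degree approximation}. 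As the paper does, you should also cap $l$ at $N$, where the interpolation becomes exact, and require $c\gamma^*\le 1$ so that all queries stay in $[\gamma^*,1]$.
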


Using Lemma~\ref{lemma: proof of theorem 1}, the proof of Theorem~\ref{thm 1} is straightforward. 

\begin{proof}[Proof of Theorem~\ref{thm 1}]
Let $\gamma^* = O(\log n / (nd))$ as given in Theorem~\ref{thm 1}.
By Definition~\ref{def: architecture}, the depth satisfies $d=\mathsf{depth}_{\mathcal A_0}(n)\le \poly(n)$.
These imply that the error parameters in Eq.~\eqref{eq: lemma: error bounds 1 main} of Lemma~\ref{lemma: proof of theorem 1} are given by $\varepsilon, \delta = \poly(n,\varepsilon_0^{-1}, \delta_0^{-1})^{-1}$.
Under this condition, there is a polynomial-time reduction from $\mathcal{A}_0$-\textsc{Ideal-Probability-Estimation} to $(\mathcal{A}_0,  \gamma^*)$-\textsc{Noisy-Probability-Estimation}. 
Finally, under the conjecture that $\mathcal{A}_0$-\textsc{Ideal-Probability-Estimation} is \#P-hard (i.e., Conjecture~\ref{conj: average-case hardness}), $(\mathcal{A}_0,  \gamma^*)$-\textsc{Noisy-Probability-Estimation} is also \#P-hard, thereby concluding the proof.    
\end{proof}

%We sketch the proof of Lemma~\ref{lemma: proof of theorem 1} at the end of this section

We conclude this section by sketching the proof of Lemma~\ref{lemma: proof of theorem 1}, which is a key ingredient in the proof of Theorem~\ref{thm 1}.
A detailed proof of Lemma~\ref{lemma: proof of theorem 1} is provided in Appendix~\ref{appendix: section: proof of theorem 1}.

%See Appendix~\ref{appendix: section: proof of theorem 1} for the full proof. 

\begin{proof}[Proof Sketch of Lemma~\ref{lemma: proof of theorem 1}]

%Given an oracle access to $(\mathcal{A}_0,  \gamma^*)$-\textsc{Noisy-Probability-Estimation} that outputs an estimate of $\widetilde{p}(C,\gamma)$, one can also estimate its low-degree approximation $\widetilde{p}_{l}(C,\gamma)$ by Lemma~\ref{lemma: low-degree approximation}.
%Then, by collecting the estimation values of $\widetilde{p}_{l}(C,\gamma)$ for different values of $\gamma \in [\gamma^*, \gamma_{\rm max}]$, one can perform the polynomial interpolation, which in turn enables one to estimate the value $\widetilde{p}_{l}(C,0) = p(C)$.

By Eq.~\eqref{eq: noisy output probability 2}, $\widetilde{p}(C,\gamma)$ is a polynomial in $\gamma$ of degree at most $N=n(d+1)$ and satisfies $\widetilde{p}(C,0)=p(C)$.
Hence, given oracle access to $(\mathcal{A}_0,  \gamma^*)$-\textsc{Noisy-Probability-Estimation}, one can estimate $p(C)$ by performing polynomial interpolation using estimates of $\widetilde{p}(C,\gamma)$ obtained from oracle calls at different noise strengths $\gamma\in[\gamma^* , 1]$.
However, directly interpolating a degree-$N$ representation induces an exponential imprecision blowup $e^{O(nd)}$.
Therefore, we instead use an $l$-degree approximation $\widetilde p_l(C,\gamma)$ constructed in Lemma~\ref{lemma: low-degree approximation}, which satisfies the probabilistic error bound in Eq.~\eqref{eq: lemma: low-degree approximation} and obeys $\widetilde p_l(C,0)=p(C)$.

Now, given oracle access to $(\mathcal A_0,\gamma^*)$-\textsc{Noisy-Probability-Estimation}, we query the oracle at $l+1$ noise values $\gamma_i \in [\gamma^*,\gamma_{\max}]$. 
Then the oracle returns estimates of $\widetilde p(C,\gamma_i)$, and Lemma~\ref{lemma: low-degree approximation} guarantees that these are also good estimates of $\widetilde p_l(C,\gamma_i)$ by Eq.~\eqref{eq: lemma: low-degree approximation}.
Since $\widetilde p_l(C,\gamma)$ has degree at most $l$, by polynomial interpolation, these $l+1$ evaluations determine an interpolating polynomial approximating $\widetilde p_l(C,\gamma)$.
Evaluating the interpolating polynomial at $\gamma=0$ then gives an estimate of $\widetilde p_l(C,0)=p(C)$.

%three sources of error: oracle error in the estimates of $\widetilde p(C,\gamma_i)$, low-degree truncation error in approximating $\widetilde p(C,\gamma_i)$ by $\widetilde p_l(C,\gamma_i)$, and extrapolation error amplified when extending from $[\gamma^*,\gamma_{\max}]$ to $\gamma=0$.
%all of which are controlled by adjusting $\gamma_{\max}$ and $l$. 

The proof reduces to bounding the total induced error and failure probability by appropriately choosing $\gamma_{\max}$ and $l$. 
By setting $\gamma_{\max}= O(\gamma^*)$ and taking
\begin{align}
l
=
O\!\left(
nd\gamma^*
+
\log\!\left(nd\epsilon_0^{-1}\delta_0^{-1}\right)
\right)  ,
\end{align}
we have the small low-degree approximation error $\varepsilon' \ll \varepsilon_0$ in Lemma~\ref{lemma: low-degree approximation}.
Moreover, polynomial interpolation increases the error by a factor of
\begin{align}
e^{O(l)} =
\poly(n, d,\varepsilon_0^{-1},\delta_0^{-1})
e^{O(nd\gamma^*)} .
\end{align}
Finally, the union bound over the $l+1$ interpolation points increases the total failure probability by a factor of order $O(l)$. 
Combining these bounds yields the sufficient conditions on the oracle imprecision and failure probability stated in Eq.~\eqref{eq: lemma: error bounds 1 main}.

\end{proof}

%This yields the same asymptotic scaling of the noise-strength threshold $\gamma^* = O(\log n/(nd))$, but with a worse bound on $\varepsilon$, namely $\varepsilon = \poly(nd, \varepsilon_0^{-1}, \delta_0^{-1})^{-1}\exp(-Cn(d+1)\gamma^*)$ with $C \approx 60$; approximately twice the exponent constant in the present bound $C = 4e^2 \approx 30$ given in Eq.~\eqref{eq: lemma: error bounds 1}.

\section{Proof of Theorem~\ref{thm: complexity decrease}}\label{section: proof of theorem 3}

In this section, we provide an explicit proof of  Theorem~\ref{thm: complexity decrease}, which is crucial to deriving the ultimate classical simulation-hardness statement for noisy RCS in Theorem~\ref{corol: hardness}. 
Specifically, we show that if there exists a polynomial-time classical sampler that approximates noisy RCS at noise strength $\gamma_{1}$ within TVD error $\beta$, then for any $\gamma_2 \geq \gamma_1$ there also exists a polynomial-time classical sampler that approximates noisy RCS at noise strength $\gamma_2$ within the same TVD error $\beta$.
Here, ``simulation" refers to the standard classical approximate-sampling problem~\cite{bouland2019complexity,movassagh2023hardness,bouland2022noise}:
For a fixed architecture $\mathcal{A}$ and noise strength $\gamma$, given an $n$-qubit circuit $C \sim \mathcal{H}_{\mathcal{A}}$ and an error parameter $\beta$, the algorithm samples from a distribution within TVD $\beta$ of the noisy output distribution $\widetilde{p}(C,\gamma,\cdot)$ in Eq.~\eqref{eq: noisy output probability}, in time $\poly(n,1/\beta)$.

Our first observation is that the single-qubit depolarizing noise channel $\mathcal{E}_{\gamma}$ in Definition~\ref{def: noisy quantum circuit} is closed under composition, i.e., the composition of two depolarizing channels is again a depolarizing channel:

\begin{lemma}\label{lem:composition}
For any $\gamma_a,\gamma_b\in[0,1]$, we have
\begin{align}
\label{eq:composition}
\mathcal{E}_{\gamma_a}\circ \mathcal{E}_{\gamma_b}
\;=\;
\mathcal{E}_{\gamma_a+\gamma_b-\gamma_a\gamma_b}.
\end{align}
\end{lemma}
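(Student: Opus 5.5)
The plan is a direct computation from Definition~\ref{def: noisy quantum circuit}. Write $\mathcal{E}_{\gamma}(\rho) = (1-\gamma)\rho + \gamma\,\frac{I}{2}\Tr(\rho)$, and let $\mathcal{D}(\rho) := \frac{I}{2}\Tr(\rho)$ denote the completely depolarizing map. Then $\mathcal{E}_\gamma = (1-\gamma)\,\mathrm{id} + \gamma\,\mathcal{D}$, where $\mathrm{id}$ is the identity channel. The composition will be expanded bilinearly using three elementary identities:
\begin{align}
\mathrm{id}\circ\mathrm{id}=\mathrm{id},
\qquad
\mathrm{id}\circ\mathcal{D}=\mathcal{D}\circ\mathrm{id}=\mathcal{D},
\qquad
\mathcal{D}\circ\mathcal{D}=\mathcal{D}.
\end{align}
The last identity holds because $\Tr(I/2)=1$ and the trace factor is preserved, so $\mathcal{D}(\mathcal{D}(\rho)) = \frac{I}{2}\Tr(\rho)$.

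Expanding the composition gives
\begin{align}
\mathcal{E}_{\gamma_a}\circ\mathcal{E}_{\gamma_b}
=(1-\gamma_a)(1-\gamma_b)\,\mathrm{id}
+\bigl[(1-\gamma_a)\gamma_b+\gamma_a(1-\gamma_b)+\gamma_a\gamma_b\bigr]\mathcal{D}.
\end{align}
Setting $\gamma_c := \gamma_a+\gamma_b-\gamma_a\gamma_b$, one checks that $(1-\gamma_a)(1-\gamma_b) = 1-\gamma_c$. The bracketed coefficient simplifies to $\gamma_a+\gamma_b-\gamma_a\gamma_b = \gamma_c$. Hence $\mathcal{E}_{\gamma_a}\circ\mathcal{E}_{\gamma_b} = (1-\gamma_c)\,\mathrm{id} + \gamma_c\,\mathcal{D} = \mathcal{E}_{\gamma_c}$.

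It remains to verify that $\gamma_c\in[0,1]$, so that the right-hand side is a valid channel within the family of Definition~\ref{def: noisy quantum circuit}. This follows from $1-\gamma_c=(1-\gamma_a)(1-\gamma_b)\in[0,1]$. As a side remark, the same identity also shows that the composition is commutative, which is used implicitly when the extra noise is inserted in the monotonicity argument. There is no real obstacle; the only care needed is the $\mathcal{D}\circ\mathcal{D}$ step, where the trace must be handled correctly. Alternatively, one could verify the claim via the Pauli-transfer picture: $\mathcal{E}_\gamma$ fixes $I$ and scales $X,Y,Z$ by $1-\gamma$, so the scaling factors multiply under composition, giving the same result.
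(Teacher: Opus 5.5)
Your proof is correct and is essentially the paper's argument: the paper likewise expands $\mathcal{E}_{\gamma_a}(\mathcal{E}_{\gamma_b}(\rho))$ directly from the definition and collects the coefficients $(1-\gamma_a)(1-\gamma_b)$ and $1-(1-\gamma_a)(1-\gamma_b)$. Your decomposition $\mathcal{E}_\gamma=(1-\gamma)\,\mathrm{id}+\gamma\mathcal{D}$ with $\mathcal{D}\circ\mathcal{D}=\mathcal{D}$ makes explicit the trace-preservation step the paper uses implicitly, and your check that $\gamma_c\in[0,1]$ is a welcome small addition.
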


\begin{proof}
Let $\rho$ be any single-qubit state. 
By definition of depolarizing noise channel in Definition~\ref{def: noisy quantum circuit},
\begin{align}
&(\mathcal{E}_{\gamma_a}\circ \mathcal{E}_{\gamma_b})(\rho) \nonumber \\
&=\mathcal{E}_{\gamma_a}\!\left((1-\gamma_b)\rho+\gamma_b\frac{I}{2}\mathrm{Tr}(\rho)\right)\\
&=(1-\gamma_a)\left((1-\gamma_b)\rho+\gamma_b\frac{I}{2}\mathrm{Tr}(\rho)\right)
\;+\;\gamma_a\frac{I}{2}\mathrm{Tr}(\rho)\\
&=(1-\gamma_a)(1-\gamma_b)\rho
\;+\;\Bigl(1-(1-\gamma_a)(1-\gamma_b)\Bigr)\frac{I}{2}\mathrm{Tr}(\rho),
\end{align}
which is exactly $\mathcal{E}_{\gamma'}(\rho)$ with $\gamma'=1-(1-\gamma_a)(1-\gamma_b)=\gamma_a+\gamma_b-\gamma_a\gamma_b$.
\end{proof}

Recall that the single-qubit depolarizing noise channel admits the Pauli-mixture representation
\begin{align}\label{def: noise channel 2}
    \mathcal{E}_{\gamma}(\rho) = 
    \left(1-\frac{3}{4}\gamma\right)\rho + \frac{1}{4}\gamma\sum_{P \in \{X, Y, Z\}} P\rho P . 
\end{align}
Equivalently, $\mathcal{E}_{\gamma}$ can be viewed as the expectation over a random single-qubit Pauli $P \in \{I, X, Y, Z\}$ to $\rho$, where $P$ is drawn from the distribution $q_{\gamma}$ defined by
\begin{align}
\label{eq:qgamma}
q_\gamma(I)=1-\frac{3\gamma}{4},
\qquad
q_\gamma(X)=q_\gamma(Y)=q_\gamma(Z)=\frac{\gamma}{4}.
\end{align}
Let $w_{\gamma}$ denote the product distribution over Pauli paths $s \in \mathsf{P}_n^{d+1}$ in which every single-qubit Pauli in $s$ is drawn independently according to $q_{\gamma}$ in Eq.~\eqref{eq:qgamma}.
Then, for any function $f(s)$ of Pauli path $s$, we have
\begin{align}\label{eq: def: pauli path distribution}
\E_{s \sim w_{\gamma}} [f(s)] =  \sum_{s \in \mathsf{P}_n^{d+1}} \left(1-\frac{3}{4}\gamma \right)^{N - |s|} \left( \frac{1}{4}\gamma \right)^{|s|} f(s) . 
\end{align}

\subsection{Exact sampling at larger noise using a smaller-noise exact sampler}

We now prove the monotonicity reduction for exact sampling: namely, that exact sampling of noisy RCS with smaller noise strength enables exact sampling of noisy RCS for any larger noise strength, in polynomial time.
To this end, we prove the following lemma.

\begin{lemma}
\label{thm:noise-inflation}
Fix $0 \le \gamma_1 \le \gamma_2 \le 1$.
First, let $\gamma_1<1$, and define
\begin{align}
\label{eq:gamma-add}
\gamma_{\mathrm{add}}
\;:=\;
\frac{\gamma_2-\gamma_1}{1-\gamma_1}\in[0,1].
\end{align}
Then, for any circuit $C$ and output $x \in \{0,1\}^{n}$, one has 
\begin{align}
\widetilde{p}(C,\gamma_2, x) = \E_{s \sim w_{\gamma_{\mathrm{add}}}} \left[ \widetilde{p}(C_{s}, \gamma_1, x) \right] , \label{eq:mixture-identity}
\end{align}
where $C_{s}$ denotes the circuit obtained from $C$ by inserting single-qubit Paulis in a Pauli path $s \in \mathsf{P}_{n}^{d+1}$ at the corresponding locations of $C$.
%where $w_{\gamma_{\mathrm{add}}}$ is the product distribution on Pauli paths in which each single-qubit Pauli at each layer is drawn independently according to~\eqref{eq:qgamma} with $\gamma=\gamma_{\mathrm{add}}$.
Moreover, if $\gamma_1 = \gamma_2 =1$, then both $\widetilde{p}(C,\gamma_1, x)$ and $\widetilde{p}(C,\gamma_2, x)$ are the uniform distribution for any $C$ and $x$, and thus the claim is trivial.
\end{lemma}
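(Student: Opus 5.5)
The plan is to prove Eq.~\eqref{eq:mixture-identity} at the level of channels and then take expectations. First, check that $\gamma_{\mathrm{add}}\in[0,1]$: since $\gamma_1\le\gamma_2\le1$ and $\gamma_1<1$, we have $0\le\gamma_2-\gamma_1\le 1-\gamma_1$. Next, apply Lemma~\ref{lem:composition} with $\gamma_a=\gamma_1$ and $\gamma_b=\gamma_{\mathrm{add}}$. This gives
\begin{align}
\gamma_1+\gamma_{\mathrm{add}}-\gamma_1\gamma_{\mathrm{add}}
=\gamma_1+(1-\gamma_1)\gamma_{\mathrm{add}}
=\gamma_2 ,
\end{align}
so $\mathcal{E}_{\gamma_2}=\mathcal{E}_{\gamma_1}\circ\mathcal{E}_{\gamma_{\mathrm{add}}}$. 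Tensoring over qubits gives $\mathcal{E}_{\gamma_2}^{\otimes n}=\mathcal{E}_{\gamma_1}^{\otimes n}\circ\mathcal{E}_{\gamma_{\mathrm{add}}}^{\otimes n}$.

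Substitute this factorization at each of the $d+1$ noise locations in Eq.~\eqref{eq:noisy-channel}. The result is
\begin{align}
\mathcal{N}_{C,\gamma_2}
=\mathcal{E}_{\gamma_1}^{\otimes n}\circ\mathcal{E}_{\gamma_{\mathrm{add}}}^{\otimes n}\circ\mathcal{U}^{(d)}\circ\cdots\circ\mathcal{U}^{(1)}\circ\mathcal{E}_{\gamma_1}^{\otimes n}\circ\mathcal{E}_{\gamma_{\mathrm{add}}}^{\otimes n} .
\end{align}
Now expand every $\mathcal{E}_{\gamma_{\mathrm{add}}}^{\otimes n}$ using the Pauli-mixture form Eq.~\eqref{def: noise channel 2}, i.e., as $\E_{P\sim q_{\gamma_{\mathrm{add}}}^{\otimes n}}[P(\cdot)P]$. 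The composition is multilinear in these channel slots, so the expectations can be pulled outside. Because the draws at different layers are independent, the joint distribution of the inserted Paulis is exactly $w_{\gamma_{\mathrm{add}}}$ on $\mathsf{P}_n^{d+1}$, and
\begin{align}
\mathcal{N}_{C,\gamma_2}=\E_{s\sim w_{\gamma_{\mathrm{add}}}}\Bigl[\mathcal{E}_{\gamma_1}^{\otimes n}\circ \mathcal{S}_d\circ\mathcal{U}^{(d)}\circ\cdots\circ\mathcal{U}^{(1)}\circ\mathcal{E}_{\gamma_1}^{\otimes n}\circ\mathcal{S}_0\Bigr],
\end{align}
where $\mathcal{S}_i(\rho)=s_i\rho s_i$.

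The only step needing care is identifying the bracketed channel with $\mathcal{N}_{C_s,\gamma_1}$. The Pauli $s_i$ sits between $\mathcal{U}^{(i)}$ and the $\gamma_1$-noise, whereas in $C_s$ it is part of the unitary layer (inserted after $U_i$), with noise applied afterwards. These orderings agree exactly as written: $s_0$ acts before the first noise and before $U_1$, and in $C_s$ it is also applied before $U_1$. If one prefers $s_i$ on the other side of the noise, the two operations commute, because depolarizing noise is Pauli-covariant: $\mathcal{E}_{\gamma}(P\rho P)=P\mathcal{E}_{\gamma}(\rho)P$, which follows since $I/2$ and $\Tr$ are Pauli-invariant. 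Either way, the bracket equals $\mathcal{N}_{C_s,\gamma_1}$. Then apply $\Tr[\ket{x}\!\bra{x}\,\cdot\,(\ket{0^n}\!\bra{0^n})]$ and use linearity to obtain Eq.~\eqref{eq:mixture-identity}.

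For the degenerate case $\gamma_1=\gamma_2=1$, $\mathcal{E}_1(\rho)=\Tr(\rho)I/2$. The final noise layer therefore outputs $I/2^n$, so $\widetilde p(C,1,x)=2^{-n}$ for every $C$ and $x$, and the claim is trivial.
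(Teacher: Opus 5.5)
Your proof is correct and follows the paper's argument essentially step for step. You factor $\mathcal{E}_{\gamma_2}=\mathcal{E}_{\gamma_1}\circ\mathcal{E}_{\gamma_{\mathrm{add}}}$ via Lemma~\ref{lem:composition}, expand each $\mathcal{E}_{\gamma_{\mathrm{add}}}^{\otimes n}$ as a Pauli mixture, pull the expectation out by linearity to get $\mathcal{N}_{C,\gamma_2}=\E_{s\sim w_{\gamma_{\mathrm{add}}}}[\mathcal{N}_{C_s,\gamma_1}]$, and then take the $\ket{x}\!\bra{x}$ trace; your use of Pauli covariance $\mathcal{E}_\gamma(P\rho P)=P\,\mathcal{E}_\gamma(\rho)P$ to place $s_i$ relative to the $\gamma_1$-noise fills in a detail the paper leaves implicit.
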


\begin{proof}
For $\gamma_{\mathrm{add}}$ in Eq.~\eqref{eq:gamma-add}, we have $\gamma_2=\gamma_1+\gamma_{\mathrm{add}}-\gamma_1\gamma_{\mathrm{add}}$ and thus  Lemma~\ref{lem:composition} implies 
\begin{align}
\label{eq:channel-factor}
\mathcal{E}_{\gamma_2}
\;=\;
\mathcal{E}_{\gamma_1}\circ \mathcal{E}_{\gamma_{\mathrm{add}}}   .
\end{align}
Tensoring this identity over all qubits then yields
\begin{align}
\label{eq:tensor-factor}
\mathcal{E}_{\gamma_2}^{\otimes n}
\;=\;
\mathcal{E}_{\gamma_1}^{\otimes n}\circ \mathcal{E}_{\gamma_{\mathrm{add}}}^{\otimes n}   .
\end{align}
Substituting Eq.~\eqref{eq:tensor-factor} into the definition of noisy circuit channel in Eq.~\eqref{eq:noisy-channel} gives 
\begin{align}
\mathcal{N}_{C, \gamma_2}
&=
\bigl(\mathcal{E}_{\gamma_2}^{\otimes n}\bigr) \circ \mathcal{U}^{(d)}
\circ \cdots \circ
\bigl(\mathcal{E}_{\gamma_2}^{\otimes n}\bigr) \circ \mathcal{U}^{(1)} 
\circ \bigl(\mathcal{E}_{\gamma_2}^{\otimes n}\bigr) 
\\
&=
\bigl(\mathcal{E}_{\gamma_1}^{\otimes n}\circ \mathcal{E}_{\gamma_{\mathrm{add}}}^{\otimes n}\bigr)\circ \mathcal{U}^{(d)}
\circ \cdots
\nonumber \\
& \quad \;\;  \cdots  \circ
\bigl(\mathcal{E}_{\gamma_1}^{\otimes n}\circ \mathcal{E}_{\gamma_{\mathrm{add}}}^{\otimes n}\bigr) \circ \mathcal{U}^{(1)}   \circ \bigl(\mathcal{E}_{\gamma_1}^{\otimes n} \circ \mathcal{E}_{\gamma_{\mathrm{add}}}^{\otimes n}\bigr).
\label{eq:expanded}
\end{align}
Now replace each occurrence of $\mathcal{E}_{\gamma_{\mathrm{add}}}$ in Eq.~\eqref{eq:expanded} by its Pauli-mixture representation in Eq.~\eqref{def: noise channel 2}.
By linearity of quantum channels, $\mathcal{N}_{C, \gamma_2} $ can then be written as an expectation (convex sum) over Pauli paths $s \in \mathsf{P}_n^{d+1}$ with each single-qubit Pauli in $s$ drawn independently from $q_{\gamma_{\rm add}}$ in Eq.~\eqref{eq:qgamma}, where conditioned on the specified $s$, it applies the channel $\mathcal{N}_{C_s, \gamma_1}$ with circuit $C_s$ stated in Lemma~\ref{thm:noise-inflation}.
Equivalently, using the Pauli-path distribution $w_{\gamma_{\mathrm{add}}}$ introduced in Eq.~\eqref{eq: def: pauli path distribution}, we obtain
\begin{align}\label{eq:channel-mixture}
\mathcal{N}_{C, \gamma_2}
=
\E_{s \sim w_{\gamma_{\mathrm{add}}}} \left[
\mathcal{N}_{C_{s}, \gamma_1}  \right].
\end{align}
Accordingly, one has 
\begin{align}
     \widetilde{p}(C,\gamma_2, x)
     &=
     \Tr\left[  \ket{x}\!\bra{x} \; \mathcal{N}_{C, \gamma_2} \bigl( \,\ket{0^n}\!\bra{0^n} \,\bigr)  \right] \\
     &= 
     \Tr\left[  \ket{x}\!\bra{x} \; \E_{s \sim w_{\gamma_{\mathrm{add}}}} \left[
     \mathcal{N}_{C_{s}, \gamma_1}\bigl( \,\ket{0^n}\!\bra{0^n} \,\bigr)  \right]   \right]   \\
     &= 
     \E_{s \sim w_{\gamma_{\mathrm{add}}}} \left[ \Tr\left[  \ket{x}\!\bra{x} \;  \mathcal{N}_{C_{s}, \gamma_1} \bigl( \,\ket{0^n}\!\bra{0^n} \,\bigr)  \right]  \right]  \\
     &= 
     \E_{s \sim w_{\gamma_{\mathrm{add}}}} \left[ \widetilde{p}(C_{s}, \gamma_1, x) \right] ,
\end{align}
thus concluding the proof.

\iffalse
Let $\rho_{\mathrm{in}}=\ket{0^n}\!\bra{0^n}$ and denote the output state after noisy quantum circuit as $\rho_{\gamma}(C)= \mathcal{N}_{C,\gamma}(\rho_{\mathrm{in}})$.
Then, applying~\eqref{eq:channel-mixture} to $\rho_{\mathrm{in}}$ similarly gives
\begin{align}\label{eq: output state mixture}
\rho_{\gamma_2}(C)
=
\E_{s \sim w_{\gamma_{\mathrm{add}}}} \left[
\rho_{\gamma_1}(C_s)   \right] .
\end{align}
Finally, measuring the output state in the computational basis $\ket{x}$ is a linear map, hence it preserves convex combinations.
Therefore, Eq.~\eqref{eq:mixture-identity} is straightforward,
\fi

\end{proof}

Consequently, by Lemma~\ref{thm:noise-inflation}, given oracle access to an exact sampler for noisy RCS with noise strength $\gamma_1$, one can exactly sample from the distribution $\widetilde p(C,\gamma_2,\cdot)$ for an arbitrary circuit $C$ and any $\gamma_{2} \in [\gamma_{1}, 1]$.
The procedure is as follows: first draw a Pauli path $s \sim w_{\gamma_{\rm add}}$ with $\gamma_{\rm add}$ as in Eq.~\eqref{eq:gamma-add}; then construct the modified circuit $C_s$ specified by the sampled Pauli path $s$; and finally make a single oracle call to sample $x \sim \widetilde{p}(C_{s}, \gamma_1, x)$.

Here, since $\mathcal{H}_{\mathcal{A}}$ is Pauli-invariant, Pauli insertions specified by $s$ on $C \sim \mathcal{H}_{\mathcal{A}}$ can be absorbed into adjacent Haar-random gates without changing the circuit structure or introducing additional noisy layers. 
Therefore, $C_s$ is also a valid oracle input drawn from the same ensemble.
The overall procedure requires $O(nd)$ operations to sample $s$ and construct $C_s$, together with one oracle query, and hence runs in polynomial time.

\subsection{Extension to approximate sampling}

Indeed, the same reduction also applies to approximate sampling with TVD error, without any blowup in the TVD bound.

Suppose there exists an approximate sampler for noisy RCS with noise strength $\gamma_1$, such that for any input $C$, it samples the output $x \in \{0,1\}^{n}$ according to the distribution $\bar{p}(C,\gamma_1, x)$ that satisfies the TVD bound: 
\begin{align}\label{eq:assumptionforTVD}
    \frac{1}{2}\sum_{x\in\{0,1\}^n}\bigl|\bar{p}(C,\gamma_1, x) - \widetilde{p}(C,\gamma_1, x) \bigr| \leq \beta ,
\end{align}
for a fixed $\beta < 1$.

Now define $\bar{p}(C,\gamma_2, x)$ for $x \in \{0,1\}^{n}$ as the distribution obtained by (i) sampling a Pauli path $s\sim w_{\gamma_{\mathrm{add}}}$ in Eq.~\eqref{eq: def: pauli path distribution} and (ii) sampling $x$ according to $\bar{p}(C_s,\gamma_1, x)$.
By construction,  
\begin{align}\label{eq:p_bar}
    \bar{p}(C,\gamma_2, x) = \E_{s\sim w_{\gamma_{\mathrm{add}}}}\!\left[ \bar{p}(C_s,\gamma_1, x) \right]  .
\end{align}
The TVD between $\bar{p}(C,\gamma_2, x)$ and $\widetilde{p}(C,\gamma_2, x)$ is bounded by 
\begin{align}
    &\frac{1}{2}\sum_{x\in\{0,1\}^n}\bigl|\bar{p}(C,\gamma_2, x) - \widetilde{p}(C,\gamma_2, x) \bigr| \nonumber \\
    &=
    \frac{1}{2}\sum_{x\in\{0,1\}^n}\bigl| \E_{s\sim w_{\gamma_{\mathrm{add}}}}\!\left[ \bar{p}(C_s,\gamma_1, x) - \widetilde{p}(C_s,\gamma_1, x) \right]  \bigr| \label{eq:dcfsd} \\
    &\leq
    \E_{s\sim w_{\gamma_{\mathrm{add}}}}\!\left[ \frac{1}{2}\sum_{x\in\{0,1\}^n}\bigl| \bar{p}(C_s,\gamma_1, x) - \widetilde{p}(C_s,\gamma_1, x) \bigr| \right] \label{eq:convexity of TVD} \\
    &\leq
    \beta   , \label{eq:approximate TVD bound}
\end{align}
where the equality uses Eqs.~\eqref{eq:p_bar} and
\eqref{eq:mixture-identity}, the first inequality follows from convexity of TVD, and the final inequality follows from Eq.~\eqref{eq:assumptionforTVD}.

We can now prove Theorem~\ref{thm: complexity decrease}.

\begin{proof}[Proof of Theorem~\ref{thm: complexity decrease}]
Suppose we are given oracle access to an approximate sampler for noisy RCS at a noise strength $\gamma_1$. 
For an input circuit $C_0 \sim\mathcal{H}_{\mathcal{A}}$ over a fixed architecture $\mathcal{A}$, the sampler outputs $x\in\{0,1\}^n$ according to some distribution $\bar p(C_0,\gamma_1,x)$ satisfying the TVD bound in Eq.~\eqref{eq:assumptionforTVD}.
Using this oracle, given a circuit $C\sim\mathcal{H}_{\mathcal{A}}$ and any $\gamma_{2} \in [\gamma_{1},1]$, consider the following sampling task: 
\begin{enumerate}
    \item[(1)] Sample a Pauli path $s \sim w_{\gamma_{\rm add}}$ with $\gamma_{\rm add}$ defined in Eq.~\eqref{eq:gamma-add}.
    \item[(2)] Construct the corresponding circuit $C_s$ using a sampled $s$.
    \item[(3)] Query the oracle on input $C_s$ and sample $x \sim \bar{p}(C_{s}, \gamma_1, x)$.
\end{enumerate}
By construction, this procedure samples exactly from $\bar{p}(C,\gamma_2, x)$ in Eq.~\eqref{eq:p_bar}, and Eq.~\eqref{eq:approximate TVD bound} implies that $\bar{p}(C,\gamma_2, x)$ approximates $\widetilde{p}(C, \gamma_2, x)$ within the same TVD bound.
Moreover, as argued above, by the Pauli invariance of $C\sim\mathcal{H}_{\mathcal{A}}$, the sampled Pauli operators in $s$ can be absorbed into $C$, so that $C_s \sim \mathcal{H}_{\mathcal{A}}$ and remains a valid oracle input circuit.
Finally, as in the exact case, the overall process requires $O(nd)$ arithmetic operations and a single oracle query, and therefore runs in polynomial time.

\end{proof}

Lastly, many simulability results provide ``relaxed" guarantees over the random choice of circuit (see below), rather than a uniform TVD guarantee for every circuit.
%, although this differs from the sampling problem considered above, which requires a uniform TVD bound for all circuits drawn from $\mathcal{H}_{\mathcal{A}}$, many simulability results assume ``relaxed" TVD guarantees over random circuits as introduced below. 
We therefore also explain how the monotonicity argument extends to samplers satisfying such relaxed approximation guarantees.

Consider first an approximate sampler satisfying an \textit{average} TVD bound, as given in Refs.~\cite{dalzell2024random, deshpande2022tight}.
That is, suppose the sampler outputs samples from $\bar p(C,\gamma_1,x)$ satisfying
\begin{align}
    \E_{C \sim \mathcal{H}_{\mathcal{A}}} \left[ \frac{1}{2}\sum_{x\in\{0,1\}^n}\bigl|\bar{p}(C,\gamma_1, x) - \widetilde{p}(C,\gamma_1, x) \bigr| \right] \leq \beta .
\end{align}
Applying the same sampling procedure as in the proof of Theorem~\ref{thm: complexity decrease} then gives a sampler at any larger noise strength $\gamma_2\geq\gamma_1$ with output distribution $\bar p(C,\gamma_2,x)$ satisfying the same average TVD bound.

Next, consider an approximate sampler satisfying a \textit{probabilistic} TVD bound, as given in Refs.~\cite{napp2022efficient, aharonov2023polynomial}, which samples from $\bar p(C,\gamma_1,x)$ satisfying
\begin{align}\label{eq: probabilistic bound for gamma1}
    \Pr_{C \sim \mathcal{H}_{\mathcal{A}}} \left[ \frac{1}{2}\sum_{x\in\{0,1\}^n}\bigl|\bar{p}(C,\gamma_1, x) - \widetilde{p}(C,\gamma_1, x) \bigr|  \leq \beta \right] 
    \geq 
    1 - \xi.
\end{align}
Let $f(C)$ denote the probability over $s$ that $C_s$ fails the TVD bound in Eq.~\eqref{eq: probabilistic bound for gamma1}. 
Then, since $C$ is Haar-distributed and so is $C_s$, it holds that $\E_C f(C) \leq \xi$. 
By Markov's inequality, for any $\eta>0$, we have $\Pr_{C}[f(C) > \eta] \leq \xi/\eta$.
For circuits with $f(C) \leq \eta$, at most an $\eta$ fraction of $s$ are bad, which contribute at most $\eta$ to the TVD, since TVD is always bounded by $1$. 
Therefore, applying the sampling procedure in Theorem~\ref{thm: complexity decrease} yields, for any $\gamma_2\geq\gamma_1$,
\begin{align}
    \Pr_{C \sim \mathcal{H}_{\mathcal{A}}} \left[ \frac{1}{2}\sum_{x\in\{0,1\}^n}\bigl|\bar{p}(C,\gamma_2, x) - \widetilde{p}(C,\gamma_2, x) \bigr|  \leq \beta + \eta \right] \nonumber \\
    \geq 
    1 - \frac{\xi}{\eta}.
\end{align}
Accordingly, by choosing an appropriate inverse-polynomial $\eta$, the resulting error parameters remain inverse-polynomial whenever $(\beta, \xi)$ are inverse-polynomial.
For example, choosing $\eta=\sqrt{\xi}$ gives TVD error $\beta+\sqrt{\xi}$ and failure probability at most $\sqrt{\xi}$, which remain inverse-polynomial whenever $\beta$ and $\xi$ are inverse-polynomial.

%which satisfies the TVD bound with high probability over random circuits. 

Hence, we have the following corollary for the monotonicity argument. 

\begin{corollary}[Extension to average-case and probabilistic samplers]
The monotonicity result of Theorem~\ref{thm: complexity decrease} extends to average-case approximate samplers satisfying either an average TVD bound or a probabilistic TVD bound with inverse-polynomial failure probability.
\end{corollary}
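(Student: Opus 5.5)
The plan is to read the corollary as a restatement of the two relaxed-guarantee arguments given just above it, and to make them precise by working at the level of the random pair $(C,s)$. The key fact is a distributional one: if $C\sim\mathcal{H}_{\mathcal{A}}$ and $s\sim w_{\gamma_{\rm add}}$ independently, then $C_s\sim\mathcal{H}_{\mathcal{A}}$. This follows because each inserted single-qubit Pauli is absorbed into an adjacent Haar-random gate, and the Haar measure is invariant under left or right multiplication by a fixed unitary. Here I use that every qubit meets at least one gate (Definition~\ref{def: architecture}). For insertions after the last gate on a qubit, I absorb the Pauli to the left into that gate instead. Hence, for every fixed $s$, the map $C\mapsto C_s$ pushes $\mathcal{H}_{\mathcal{A}}$ forward to itself.

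For the average TVD case, I define the $\gamma_2$-sampler exactly as in the proof of Theorem~\ref{thm: complexity decrease}. Let $D_1(C)$ denote the TVD of the $\gamma_1$-sampler on input $C$. By Lemma~\ref{thm:noise-inflation} and the convexity step in Eq.~\eqref{eq:convexity of TVD}, the TVD of the new sampler on input $C$ satisfies $D_2(C)\le \E_{s}[D_1(C_s)]$. Taking $\E_C$ and applying Fubini gives $\E_C D_2(C)\le \E_s\E_C[D_1(C_s)]$. By the invariance fact, $\E_C[D_1(C_s)]=\E_C D_1(C)\le\beta$ for each fixed $s$, so the average bound $\beta$ is preserved.

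For the probabilistic case, let $B$ be the set of circuits with $D_1>\beta$, so that $\Pr_C[B]\le\xi$. Set $f(C)=\Pr_s[C_s\in B]$. By Fubini and invariance, $\E_C f(C)=\E_s\Pr_C[C_s\in B]\le\xi$, and Markov's inequality then gives $\Pr_C[f(C)>\eta]\le\xi/\eta$. When $f(C)\le\eta$, I split $\E_s[D_1(C_s)]$ into good and bad $s$: the good $s$ contribute at most $\beta$, and the bad $s$ contribute at most $\eta$ because TVD is bounded by $1$. This gives $D_2(C)\le\beta+\eta$. Choosing $\eta=\sqrt{\xi}$ yields error $\beta+\sqrt\xi$ with failure probability at most $\sqrt\xi$, both inverse-polynomial. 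The running time is the oracle cost plus $O(nd)$ steps to sample $s$ and build $C_s$, exactly as in the uniform case.

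The main obstacle is rigorously justifying the invariance claim, i.e., that $C_s$ has exactly the law $\mathcal{H}_{\mathcal{A}}$ for each fixed $s$, and that the oracle is applied to a circuit of the same architecture. This requires a careful description of how each Pauli in $s$ is merged into a specific neighbouring gate, including the boundary layers $s_0$ and $s_d$. Everything else is convexity and Markov's inequality.
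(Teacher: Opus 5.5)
Your proposal is correct and follows essentially the same route as the paper: the same Pauli-path sampler, convexity of TVD combined with Pauli invariance of $\mathcal{H}_{\mathcal{A}}$ (via Fubini) to preserve the average bound, and the same $f(C)$, Markov, good/bad-$s$ split with $\eta=\sqrt{\xi}$ for the probabilistic bound. Your extra care about absorbing the boundary Paulis $s_0,s_d$ just makes explicit what the paper asserts informally. It goes through because $\mathcal{E}_{\gamma_1}$ commutes with Pauli conjugation, so a Pauli can be moved across idle noisy sites to the nearest gate without changing $\widetilde p(C_s,\gamma_1,\cdot)$.
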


%in circuit architectures relevant to near-term RCS experiments~\cite{arute2019quantum, wu2021strong, zhu2022quantum,  morvan2024phase, decross2025computational, gao2025establishing}.

\iffalse
Then, by the sampling procedure used in the proof of Theorem~\ref{thm: complexity decrease}, one immediately obtains an approximate sampler for noise strength $\gamma_2\geq \gamma_1$ whose output distribution $\bar p(C,\gamma_2,x)$ satisfies
\begin{align}
    \E_{C \sim \mathcal{H}_{\mathcal{A}}} \left[ \frac{1}{2}\sum_{x\in\{0,1\}^n}\bigl|\bar{p}(C,\gamma_2, x) - \widetilde{p}(C,\gamma_2, x) \bigr| \right] \leq \beta    .
\end{align}
\fi

\section{Conclusion}\label{section: conclusion}

In this work, we characterized the noise-strength threshold below which noisy RCS retains the classical hardness of ideal RCS. 
Specifically, we showed that for circuit architectures for which ideal RCS is conjectured to be classically hard, noisy RCS retains this hardness whenever the noise strength satisfies $\gamma^* = O\left(\log n/(n d)\right)$.
We further proved a monotonicity property of noisy RCS: increasing the noise strength cannot increase the complexity of classical simulation, so efficient simulation at one noise level implies efficient simulation at all larger noise levels.
For architectures lying in the regime covered both by our hardness assumption and by the simulability result of
Ref.~\cite{dalzell2024random}, this bound matches the known simulability scaling, identifying the asymptotic transition scale $\gamma=\Theta(\log n/(nd))$ at the level of its dependence on $n$ and $d$.

We conclude by highlighting important open problems.
A central open problem is to generalize our hardness result to other noise models (e.g., general Pauli noise or non-unital noise) and to broader classes of circuit ensembles. 
Such extensions could lead to stronger hardness results that hold for more realistic implementations of RCS.
The key features of the depolarizing noise used in our analysis are that the noisy output probabilities depend polynomially on the noise parameter $\gamma$ and recover the ideal distribution at $\gamma = 0$.
We therefore expect that our arguments can be extended to more general noise models, provided that the corresponding noisy output probabilities retain a suitable polynomial structure.
Moreover, our analysis uses Haar-random circuits because of their Pauli invariance and the conjectured classical hardness of ideal RCS for this ensemble~\cite{bouland2019complexity, bouland2022noise, kondo2022quantum, movassagh2023hardness, krovi2022average, bouland2025exponential}. 
Hence, our results can be generalized to broader classes of random-circuit ensembles, provided that they exhibit Pauli invariance and admit analogous hardness evidence in the ideal setting.

Our hardness results are asymptotic complexity-theoretic statements.
Accordingly, while our thresholds characterize how the hardness boundary varies as the system size scales, they do not directly provide precise, non-asymptotic noise benchmarks that can certify the quantum advantage in a finite-size experiment.
An important open question, therefore, is how to translate the asymptotic noise thresholds obtained in this work into concrete finite-size benchmarks to assess whether a finite-size experiment has entered a classically intractable regime.

\acknowledgements

B.G. and H.J. were supported by the Korean government (Ministry of Science and ICT~(MSIT)),
the NRF grants funded by the Korea government~(MSIT)~(Nos.~RS-2024-00413957 and RS-2024-00438415),~and the Institute of Information \& Communications Technology Planning \& Evaluation (IITP) grant funded by the Korea government (MSIT) (IITP-2025-RS-2020-II201606 and  IITP-2025-RS-2024-00437191).
C.O. was supported by the NRF Grants (No. RS-2024-00431768 and No. RS-2025-00515456) funded by the Korean government (MSIT) and IITP grants funded by the Korea government (MSIT) (No. RS-2024-00437284, No. IITP-2025-RS-2025-02283189 and No. IITP-2025-RS-2025-02263264) and by Global Partnership Program of Leading Universities in Quantum Science and Technology (RS-2025-08542968) through the National Research Foundation of Korea~(NRF) funded by the Korean government (Ministry of Science and ICT(MSIT)).

%\end{acknowledgements}

\appendix

\section{Low-degree approximation of $\widetilde p(C,\gamma)$}\label{sec: proof of low degree approximation}

In this appendix, we give an explicit proof of Lemma~\ref{lemma: low-degree approximation} in the main text. 
To this end, we first restate the setup for clarity.
Let $N=n(d+1)$ and $\gamma\in[0,\gamma_{\max}]$.
For each $k\in\{0,1,\dots,N\}$, define
\begin{align}\label{eq:wk_Sk_def_A_opt}
w_k(\gamma):=\Bigl(1-\tfrac{3}{4}\gamma\Bigr)^{N-k}\Bigl(\tfrac{\gamma}{4}\Bigr)^k,
\end{align}
and
\begin{align}
S_k(C):=\sum_{\substack{s\in\mathsf{P}_n^{d+1}\\|s|=k}} p(C_s), 
\end{align}
such that the noisy output probability takes the form
\begin{align}\label{eq:noisy_prob_grouped_A_opt}
\widetilde p(C,\gamma)=\sum_{k=0}^{N} w_k(\gamma)\,S_k(C)  ,
\end{align}
where $w_k(0) = \delta_{k0}$.
Throughout the proof, we construct a family of polynomials $\{g_k^{(l)}(\gamma)\}_{k=0}^{N}$ of degree at most $l$ and satisfies
\begin{align}\label{eq:gk_boundary_condition_main_opt}
g_k^{(l)}(0)=\delta_{k0} .
\end{align}
The resulting $l$-degree approximation of $\widetilde p(C,\gamma)$ given by
\begin{align}\label{eq:noisy_prob_l_main_opt}
    \widetilde p_l(C,\gamma):=\sum_{k=0}^{N} g_k^{(l)}(\gamma)S_k(C) , 
\end{align}
satisfies, for each fixed $\gamma\in[0,\gamma_{\max}]$, the probabilistic error bound
\begin{align}\label{eq:lemma1_prob_bound_main_opt}
    \Pr_{C \sim \mathcal{H}_{\mathcal{A}}} \left[ \left| \widetilde{p}(C,\gamma) - \widetilde{p}_{l}(C,\gamma)  \right|  > \frac{\varepsilon'}{2^n} \right] \leq \delta    ,
\end{align}
where the approximation error $\varepsilon'$ can be chosen as
\begin{align}\label{eq:epsprime_main_opt}
    \varepsilon' = \frac{4N^2\gamma_{\rm max}}{l \delta }\left(\frac{3e\,N\gamma_{\max}}{8l}\right)^{l}    .
\end{align}

To construct a family of polynomials $\{g_k^{(l)}(\gamma)\}_{k=0}^{N}$, our core idea is to approximate each coefficient polynomial $w_k(\gamma)$ in Eq.~\eqref{eq:noisy_prob_grouped_A_opt} by truncating its Chebyshev expansion on the interval $\gamma\in[0,\gamma_{\max}]$, as shown in the following lemma.

\begin{lemma}\label{lem:cheb_wk_opt}
Fix $\gamma_{\max}\in(0,1]$ a positive integer $l<N$, and $N\ge3$.
For each $k\in\{0,1,\dots,N\}$, let $w_k(\gamma)$ be defined as in Eq.~\eqref{eq:wk_Sk_def_A_opt}.
Then there exists a polynomial $g_k^{(l)}(\gamma)$ of degree at most $l$ satisfying $g_k^{(l)}(0)=w_k(0)=\delta_{k0}$ such that defining the truncation error by
\begin{align}\label{eq:delta_k_rho_opt}
\Delta_k
:=\sup_{\gamma\in[0,\gamma_{\max}]}\bigl|w_k(\gamma)-g_k^{(l)}(\gamma)\bigr|   ,
\end{align}
and the combined truncation error by
\begin{align}\label{eq:eta_l_def_opt}
\eta_l:=\sum_{k=0}^{N}\binom{N}{k}3^k\,\Delta_k   ,
\end{align}
one has
\begin{align}\label{eq:eta_l_closed_form_opt}
\eta_l
\le
\frac{4N^2\gamma_{\max}}{l}
\left(\frac{3e\,N\gamma_{\max}}{8l}\right)^{l}   .
\end{align}
\end{lemma}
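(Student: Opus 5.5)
The plan is to approximate every $w_k$ at once by expanding the weighted sum $\sum_k \binom{N}{k}3^k w_k(\gamma)$ in powers of $\gamma$, and then replacing each high-degree monomial by a low-degree Chebyshev proxy on $[0,\gamma_{\max}]$. The starting identity is
\begin{align}
\binom{N}{k}3^k w_k(\gamma)=\binom{N}{k}\Bigl(\tfrac{3\gamma}{4}\Bigr)^k\Bigl(1-\tfrac{3\gamma}{4}\Bigr)^{N-k}
=\sum_{j=k}^{N} c_{k,j}\,\gamma^j ,
\end{align}
with $c_{k,j}=\binom{N}{k}\binom{N-k}{j-k}(-1)^{j-k}(3/4)^j$. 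Summing the absolute values over $k$ gives $\sum_k |c_{k,j}|=\binom{N}{j}(3/4)^j 2^j=\binom{N}{j}(3/2)^j$, by the Vandermonde-type identity $\sum_k\binom{N}{k}\binom{N-k}{j-k}=\binom{N}{j}2^j$. This reduces the whole family $\{w_k\}$ to controlling the individual monomials $\gamma^j$ with $j>l$, each carrying total weight $\binom{N}{j}(3/2)^j$ in $\eta_l$.

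\textbf{Step 1 (construction).} For $j\le l$, keep $\gamma^j$ exactly. For $j>l$, let $q_j$ be the degree-$l$ truncation of the Chebyshev expansion of $\gamma^j$ on $[0,\gamma_{\max}]$. Then set $g_k^{(l)}(\gamma):=\sum_{j\le l}c_{k,j}\binom{N}{k}^{-1}3^{-k}\gamma^j+\sum_{j>l}c_{k,j}\binom{N}{k}^{-1}3^{-k}\bigl(q_j(\gamma)-q_j(0)\bigr)$. Subtracting $q_j(0)$ kills the constant term, so $g_k^{(l)}(0)=w_k(0)=\delta_{k0}$ holds automatically: the only constant term comes from $c_{0,0}=1$. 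By construction each $g_k^{(l)}$ has degree at most $l$. The subtraction at most doubles the sup-error, because $|q_j(0)|=|q_j(0)-0^j|\le\sup_{[0,\gamma_{\max}]}|q_j-\gamma^j|$.

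\textbf{Step 2 (monomial bound).} Substitute $\gamma=\gamma_{\max}(1+t)/2$ with $t=\cos\theta\in[-1,1]$. Then $\gamma^j=\gamma_{\max}^j\cos^{2j}(\theta/2)$, and $\cos^{2j}(\theta/2)=4^{-j}\bigl[\binom{2j}{j}+2\sum_{m\ge1}\binom{2j}{j-m}T_m(t)\bigr]$. The truncation error at degree $l$ is therefore at most $2\cdot4^{-j}\gamma_{\max}^j\sum_{m>l}\binom{2j}{j-m}\le 2\cdot 4^{-j}\gamma_{\max}^j\cdot 2^{2j}$ in the crude form. I intend to sharpen this, for $j$ not much larger than $l$, using that only $j-l$ coefficients survive, to a bound like $2j\,4^{-j}\binom{2j}{j-l-1}\gamma_{\max}^j$; combined with the doubling from Step 1 this gives an error of order $4j\,(\gamma_{\max}/4)^j\binom{2j}{j-l-1}$.

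\textbf{Step 3 (summation).} Combine Steps 1--2 and Stirling bounds ($\binom{N}{j}\le (eN/j)^j$) to get
\[
\eta_l\le\sum_{j>l}\binom{N}{j}\Bigl(\tfrac32\Bigr)^j\cdot(\text{monomial error})\lesssim\sum_{j>l}(\text{poly factor})\Bigl(\tfrac{3eN\gamma_{\max}}{8j}\Bigr)^j .
\]
Bounding this geometric-type tail by its first term times a factor $O(N^2\gamma_{\max}/l)$ should yield the stated $\tfrac{4N^2\gamma_{\max}}{l}\bigl(\tfrac{3eN\gamma_{\max}}{8l}\bigr)^l$. When $3eN\gamma_{\max}/(8l)\gtrsim1$ the bound is trivial anyway, since $\eta_l$ can also be bounded directly by $O(\cdot)$ of the right-hand side.

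\textbf{Main obstacle.} I expect the difficulty to lie in the constants. A naive Taylor truncation ($g_k=$ the Taylor polynomial of $w_k$, with $g_k=0$ for $k>l$) already yields $\eta_l\le\sum_{j>l}\binom{N}{j}(3\gamma_{\max}/2)^j$, i.e.\ the base $3eN\gamma_{\max}/(2l)$. That is the correct scaling but off by a factor $4$ inside the power. The gain of $4^{-j}$ must come from the Chebyshev step on the half-width interval, and the hardest part will be making the tail $\sum_{m>l}\binom{2j}{j-m}$ small enough that this gain is not eaten back by the $2^{2j}$ total mass of the binomial row. If that fails, I would first settle for the Taylor version, which gives the same $nd\gamma^*=O(\log n)$ threshold with a worse constant, and then refine.
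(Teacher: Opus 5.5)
Your construction of $g_k^{(l)}$ is fine, but the analysis has a genuine gap: it cannot reach the stated bound.

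First, the construction. Chebyshev truncation is linear, so $\sum_{j\le l}c_{k,j}\gamma^j+\sum_{j>l}c_{k,j}q_j$ is exactly the degree-$l$ Chebyshev truncation of $\binom{N}{k}3^k w_k$. Your $g_k^{(l)}$ is therefore the paper's polynomial, differing only in how the value at $\gamma=0$ is restored: you subtract a constant, while the paper adds $(\sum_{m>l}a_{k,m})T_l$, with $a_{k,m}$ the Chebyshev coefficients of $w_k$. Both cost a factor $2$ on the tail.

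The step that fails is Step~3, because the factor $\binom{2j}{j-l-1}$ in your monomial error is not polynomial. Write $x=3\gamma_{\max}/8$ and $j=l+r$. Your summands are of order $4j\binom{N}{j}x^{j}\binom{2j}{r-1}$. For $r\ll l\ll N$, consecutive summands have ratio about $\frac{Nx}{l}\cdot\frac{2l}{r}=\frac{3N\gamma_{\max}}{4r}$. So the tail is not dominated by its first term. It peaks near $r\approx\frac34 N\gamma_{\max}$ and exceeds the first term by roughly $e^{3N\gamma_{\max}/4}$. That factor cannot be absorbed into $4N^2\gamma_{\max}/l$ once $N\gamma_{\max}\gg\log N$. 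For instance, with $N=10^4$, $\gamma_{\max}=10^{-2}$, $l=800$, your bound exceeds the claimed right-hand side by a factor larger than $e^{20}$.

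Sharpening Step~2 cannot repair this. All Chebyshev coefficients of $\gamma^j$ are nonnegative, so its truncation error at $\gamma=\gamma_{\max}$ (where every $T_m=1$) equals its full coefficient tail; no sharper per-monomial estimate exists. The real loss is in Step~1. Replacing $c_{k,j}$ by $|c_{k,j}|$ and summing over $k$ amounts to bounding the Chebyshev tail of the majorant $(1+\tfrac32\gamma)^N$. This discards the sign cancellation in $(1-\tfrac34\gamma)^{N-k}$, equivalently the identity $\sum_k\binom{N}{k}3^k w_k\equiv 1$.

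The missing idea is to bound the Chebyshev tail of each $w_k$ directly on a Bernstein ellipse:
\[
\sum_{m>l}|a_{k,m}|\le \frac{2M_{k,\rho}}{\rho^{l}(\rho-1)}.
\]
Then control each $\binom{N}{k}3^kM_{k,\rho}$ pointwise on the ellipse by the full binomial sum, $\binom{N}{k}3^k|w_k(\gamma)|\le\bigl(|1-\tfrac34\gamma|+\tfrac34|\gamma|\bigr)^N$ for complex $\gamma$. This costs only a factor $N+1$ when summing over $k$.

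On the image of $E_\rho$ in the $\gamma$-plane (an ellipse with foci $0$ and $\gamma_{\max}$), this quantity is maximized at the leftmost point $\gamma=-\frac{\gamma_{\max}}{4}\frac{(\rho-1)^2}{\rho}$. That gives $G_\rho=1+\frac{3\gamma_{\max}}{8}\frac{(\rho-1)^2}{\rho}$. Your majorant is instead maximized at the rightmost point and gives $1+\frac{3\gamma_{\max}}{8}\frac{(\rho+1)^2}{\rho}$. The difference of $\frac32\gamma_{\max}$ per factor is the source of your $e^{\Theta(N\gamma_{\max})}$ loss.

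With this, $\eta_l\le\frac{4(N+1)}{\rho^{l}(\rho-1)}G_\rho^{N}$. Choosing $\rho=(2+\sqrt{\lambda^2+4})/\lambda$ with $\lambda=3N\gamma_{\max}/(2l)$ yields the stated inequality with no case split. Your claim that the regime $3eN\gamma_{\max}/(8l)\gtrsim1$ is trivial is unsupported, and an $O(\cdot)$ bound would not give the explicit inequality anyway.

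As written, your Chebyshev route gives only $\eta_l\le e^{O(N\gamma_{\max})}$ times the stated expression, and the Taylor fallback gives a base four times larger. Either would still yield the $\gamma^*=O(\log n/(nd))$ threshold with worse constants, but neither proves the lemma as stated.
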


We defer the proof of Lemma~\ref{lem:cheb_wk_opt} to Appendix~\ref{appendix: proof of chebyshev}.
The proof relies on constructing an $l$-degree polynomial by first expanding the high-degree polynomial $w_k(\gamma)$ in the Chebyshev polynomial basis and then truncating the expansion beyond degree $l$, which gives a good approximation for $w_k(\gamma)$, as in standard approximation-theory arguments~\cite{sachdeva2014faster, mason2002chebyshev, borel1928leccons}.

We now prove Lemma~\ref{lemma: low-degree approximation}.

\begin{proof}[Proof of Lemma~\ref{lemma: low-degree approximation}]

%Fix $\gamma\in[0,\gamma_{\max}]$.
From Eqs.~\eqref{eq:noisy_prob_grouped_A_opt} and \eqref{eq:noisy_prob_l_main_opt},
\begin{align}\label{eq:diff_expand_A_opt}
\widetilde p(C,\gamma)-\widetilde p_l(C,\gamma)
=\sum_{k=0}^{N}\bigl(w_k(\gamma)-g_k^{(l)}(\gamma)\bigr)\,S_k(C).
\end{align}
Since $S_k(C)\ge 0$ for all $k$, for any $\gamma\in[0,\gamma_{\max}]$ and $\Delta_k$ defined in Eq.~\eqref{eq:delta_k_rho_opt}, we have
\begin{align}\label{eq:diff_triangle_A_opt}
\bigl|\widetilde p(C,\gamma)-\widetilde p_l(C,\gamma)\bigr|
\le \sum_{k=0}^{N}\Delta_k\,S_k(C) .
\end{align}

Next we bound the circuit average of $S_k(C)$.
For each fixed Pauli path $s$, the circuit $C_s$ is obtained from $C$ by inserting Paulis according to $s$.
Note that by our convention for the circuit architecture $\mathcal{A}$ in Definition~\ref{def: architecture}, for any circuit drawn from $\mathcal{H}_{\mathcal{A}}$, every qubit experiences at least one Haar-random gate throughout the circuit.
Accordingly, by the invariance of the Haar measure under both left- and right-multiplication of Pauli matrices, the ensemble $\mathcal{H}_{\mathcal{A}}$ is Pauli invariant, so that it remains unchanged under Pauli operations on any site of the circuit.

Since $\mathcal{H}_{\mathcal{A}}$ is Pauli-invariant, $C_s$ is distributed identically to $C$. 
Moreover, by invariance of $\mathcal{H}_{\mathcal{A}}$, we have
$\mathbb E_{C\sim\mathcal{H}_{\mathcal{A}}}[p(C)]=2^{-n}$. 
Hence for every fixed $s$,
\begin{align}\label{eq:EpCs_opt}
\mathbb E_{C\sim\mathcal{H}_{\mathcal{A}}}[p(C_s)]=2^{-n}.
\end{align}
Therefore,
\begin{align}\label{eq:ESk_opt}
\mathbb E_{C\sim\mathcal{H}_{\mathcal{A}}}[S_k(C)]
&=\sum_{\substack{s\in\mathsf{P}_n^{d+1}\\|s|=k}}\mathbb E_{C\sim\mathcal{H}_{\mathcal{A}}}[p(C_s)] \\
&=\left|\{s\in\mathsf{P}_n^{d+1}:|s|=k\}\right|\,2^{-n} \\
&=\binom{N}{k}3^k\,2^{-n} ,
\end{align}
and taking expectations of Eq.~\eqref{eq:diff_triangle_A_opt} and using Eq.~\eqref{eq:ESk_opt} yields
\begin{align}
\mathbb E_{C\sim\mathcal{H}_{\mathcal{A}}}\!\left[\bigl|\widetilde p(C,\gamma)-\widetilde p_l(C,\gamma)\bigr|\right]
&\le 2^{-n}\sum_{k=0}^{N}\binom{N}{k}3^k\,\Delta_k \\
&=2^{-n}\,\eta_l, \label{eq:Eabsdiff_opt}
\end{align}
where $\eta_l$ is defined in Eq.~\eqref{eq:eta_l_def_opt}.

Finally, by Markov's inequality, for any $\varepsilon'>0$,
\begin{align}
&\Pr_{C\sim\mathcal{H}_{\mathcal{A}}}\!\left[
\bigl|\widetilde p(C,\gamma)-\widetilde p_l(C,\gamma)\bigr|>\varepsilon'2^{-n}
\right]  \nonumber  \\
&\le 
\frac{2^{n}}{\varepsilon'} \mathbb E_{C\sim\mathcal{H}_{\mathcal{A}}}\!\left[\bigl|\widetilde p(C,\gamma)-\widetilde p_l(C,\gamma)\bigr|\right] \\
&\leq 
\frac{\eta_l}{\varepsilon'}. \label{eq:markov_opt}
\end{align}
Choosing $\varepsilon':=\eta_l/\delta$ gives Eq.~\eqref{eq:lemma1_prob_bound_main_opt}.
Using the explicit bound in Eq.~\eqref{eq:eta_l_closed_form_opt} from Lemma~\ref{lem:cheb_wk_opt} yields Eq.~\eqref{eq:epsprime_main_opt}, completing the proof of Lemma~\ref{lemma: low-degree approximation}.

\end{proof}

\section{Proof of Lemma~\ref{lemma: proof of theorem 1}}\label{appendix: section: proof of theorem 1}

In this appendix, we state a more explicit version of Lemma~\ref{lemma: proof of theorem 1} and provide its detailed proof.

\begin{lemma}\label{lemma: proof of theorem 1 app}
Let $\mathcal{A}_0$ be a circuit architecture in Conjecture~\ref{conj: average-case hardness}, let $d = \mathsf{depth}_{\mathcal{A}_0}(n)$, and let $\gamma^* \in [0,1/3]$.
Let $\mathcal{O}$ be an oracle that solves $(\mathcal{A}_0,  \gamma^*)$-\textsc{Noisy-Probability-Estimation} in Definition~\ref{def: noisy probability estimation problem}.
When the error parameters of the oracle satisfy
\begin{align}\label{eq: lemma: error bounds 1}
\begin{split}
    \varepsilon &= O\!\left(
    \frac{\varepsilon_0^3\delta_0^2}{n^3d^3}\;
    e^{-4e^{2}n(d+1)\gamma^*}
    \right),
    \\ 
    \delta &= O\!\left( \frac{\delta_0}{nd\gamma^*  +  \log (nd\varepsilon_0^{-1}\delta_0^{-1})}
    \right)  ,
\end{split}
\end{align} 
then the $\mathcal{A}_0$-\textsc{Ideal-Probability-Estimation} in Definition~\ref{def: ideal probability estimation problem} can be solved within $\rm{BPP}^{\mathcal{O}}$.
\end{lemma}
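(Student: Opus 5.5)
The reduction is the one sketched after Lemma~\ref{lemma: proof of theorem 1}, now with explicit constants. Given $C\sim\mathcal{H}_{\mathcal{A}_0}$ and $(\varepsilon_0,\delta_0)$, I set $\gamma_{\max}=c\gamma^*$ for a constant $c>1$; since $\gamma^*\le 1/3$, taking $c\le 3$ keeps $\gamma_{\max}\le 1$. I choose nodes $\gamma_0,\dots,\gamma_l\in[\gamma^*,\gamma_{\max}]$, query $\mathcal{O}$ at each $(C,\gamma_i)$ to obtain estimates $y_i$, form the Lagrange interpolant through $(\gamma_i,y_i)$, and output its value at $\gamma=0$. Because $\widetilde p_l(C,\cdot)$ has degree at most $l$ and $\widetilde p_l(C,0)=p(C)$, Lagrange's formula gives
\begin{align}
\Bigl|\sum_i L_i(0)\,y_i - p(C)\Bigr| \le \Lambda\,\max_i |y_i-\widetilde p_l(C,\gamma_i)|, \qquad \Lambda:=\sum_i |L_i(0)|.
\end{align}

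For the error accounting, at each node $|y_i-\widetilde p_l(C,\gamma_i)|\le(\varepsilon+\varepsilon')2^{-n}$, where the oracle contributes $\varepsilon$ and Lemma~\ref{lemma: low-degree approximation} contributes $\varepsilon'$. I will apply Lemma~\ref{lemma: low-degree approximation} with its failure parameter set to $\delta$. A union bound over the $l+1$ oracle failures and the $l+1$ truncation failures then gives total failure probability at most $2(l+1)\delta$. Requiring this to be at most $\delta_0$ forces $\delta=O(\delta_0/l)$, which is the stated condition once $l$ is fixed.

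The next step is to choose $l$ so that $\Lambda\varepsilon'\le\varepsilon_0/2$. With $\gamma_{\max}=c\gamma^*$ and $l\ge 3e^2N\gamma_{\max}/8$, the geometric factor in Eq.~\eqref{eq: epsilon'} is at most $e^{-l}$. The prefactor is $4N^2\gamma_{\max}/(l\delta)=\poly(nd,\delta_0^{-1})$, using $\delta\sim\delta_0/l$. Choosing
\begin{align}
l=\Theta\bigl(N\gamma^*+\log(nd\,\varepsilon_0^{-1}\delta_0^{-1})\bigr)
\end{align}
therefore makes $\varepsilon'\le\varepsilon_0/(2\Lambda)$, provided $\Lambda\le e^{O(l)}$ with a controlled exponent. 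The oracle error then needs $\Lambda\varepsilon\le\varepsilon_0/2$, so $\varepsilon=\varepsilon_0/(2\Lambda)$. Finally $\Lambda=e^{O(l)}$ has to be rewritten as $e^{4e^2N\gamma^*}\cdot\poly(nd,\varepsilon_0^{-1},\delta_0^{-1})$, which yields the form in Eq.~\eqref{eq: lemma: error bounds 1}.

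The main obstacle is bounding $\Lambda$, the Lebesgue-type constant for extrapolating from $[\gamma^*,\gamma_{\max}]$ back to $0$, and matching its constant to the $4e^2$ in the exponent. I would take Chebyshev nodes on $[\gamma^*,\gamma_{\max}]$. The extrapolation factor is then governed by $|T_l(x_0)|$ at the mapped point $x_0=-(\gamma_{\max}+\gamma^*)/(\gamma_{\max}-\gamma^*)$, which satisfies $|T_l(x_0)|\le(|x_0|+\sqrt{x_0^2-1})^l$. This gives $\Lambda\le (l+1)\,\kappa^l$ with $\kappa$ depending only on $c$; for $c=3$, for instance, $|x_0|=2$ and $\kappa=2+\sqrt3$. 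Since $l$ contains a term proportional to $N\gamma^*$, the factor $\kappa^l$ contributes $e^{O(N\gamma^*)}$ times a polynomial in $nd\,\varepsilon_0^{-1}\delta_0^{-1}$. The explicit $\varepsilon_0^3\delta_0^2/(n^3d^3)$ factor comes from the logarithmic part of $l$ raised to the power $\log\kappa$ together with the $\delta^{-1}$ prefactor. The exact constant $4e^2$ may need the particular choice $l=\lceil e^2 N\gamma_{\max}\cdot\text{const}\rceil$ and some tuning of $c$. If the constants do not come out exactly, the $e^{-O(nd\gamma^*)}$ form of Lemma~\ref{lemma: proof of theorem 1} still holds, and that is all Theorem~\ref{thm 1} needs. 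The running time is polynomial: there are $l+1=\poly$ oracle calls, and the weights $L_i(0)$ can be computed exactly.
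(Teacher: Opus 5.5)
Your skeleton is the paper's: Lemma~\ref{lemma: low-degree approximation} with failure parameter $\delta$, the union bound $2(l+1)\delta\le\delta_0$, $\gamma_{\max}=3\gamma^*$, and extrapolation back to $\gamma=0$. The one difference is the nodes. The paper uses equispaced nodes in $t\in[-\Delta,\Delta]$, $\Delta=1/2$, with the Kondo et al.\ bound $(e/\Delta)^l/\sqrt{2\pi l}$. You use Chebyshev nodes, with $\Lambda\lesssim\kappa^l$ and $\kappa=2+\sqrt3$.

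\textbf{The gap: your choice of $l$ does not close.} This choice is the real content of this lemma beyond Lemma~\ref{lemma: proof of theorem 1}. From $l\ge 3e^2N\gamma_{\max}/8$ you only get $\varepsilon'\lesssim e^{-l}$ times a polynomial. Your own extrapolation factor is $\kappa^l$ with $\kappa>e$, so $\Lambda\varepsilon'$ grows with $l$. Your proviso ``$\Lambda\le e^{O(l)}$ with a controlled exponent'' would need exponent $<1$, but $\log\kappa\approx1.32$. When $N\gamma^*$ dominates the logarithmic terms, enlarging the logarithmic part of $l$ does not help. The coefficient of $N\gamma_{\max}$ in $l$ must itself exceed $3e\kappa/8$.

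The paper handles this in Lemma~\ref{lemma: determining l}. It first merges the two geometric factors, $(e/\Delta)^l\bigl(3eN\gamma_{\max}/(8l)\bigr)^l=(A/l)^l$ with $A=3e^2N\gamma_{\max}/(8\Delta)$. It then sets $l=A+\chi$ and uses $(A/l)^l\le e^{-\chi}$ with $\chi$ logarithmic. The explicit $\varepsilon_0^3\delta_0^2n^{-3}d^{-3}e^{-4e^2N\gamma^*}$ follows from $\varepsilon\approx\varepsilon_0\sqrt{2\pi l}\,(2e)^{-l}$, together with $\log(2e)<2$ and $\tfrac94\log(2e)<4$. You leave these constants unverified and fall back on the $e^{-O(nd\gamma^*)}$ form. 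That is Lemma~\ref{lemma: proof of theorem 1}, not the statement you were asked to prove.

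\textbf{How your route closes.} It works once you use the same device.
\begin{itemize}
\item Take Chebyshev--Lobatto nodes, the extrema of $T_l$ mapped onto $[\gamma^*,3\gamma^*]$. The Lagrange weights at $x_0=-2$ (the image of $\gamma=0$) alternate in sign, and $T_l$ takes the values $\pm1$ alternately at these nodes. Hence $\Lambda=|T_l(-2)|\le\kappa^l$ exactly, with no factor $l+1$.
\item Set $A=3e\kappa N\gamma_{\max}/8$ and $l=\lceil A+\chi\rceil$ with $\chi=\log\bigl(32N^2\gamma_{\max}/(\varepsilon_0\delta_0)\bigr)$. Since $1/(l\delta)\le4/\delta_0$, you get $\Lambda\varepsilon'\le(16N^2\gamma_{\max}/\delta_0)(A/l)^l\le\varepsilon_0/2$.
\item The tolerated oracle error $\varepsilon=\varepsilon_0/(2\Lambda)$ is then at least a constant times $\varepsilon_0(\varepsilon_0\delta_0/N^2)^{\log\kappa}e^{-\frac98 e\kappa\log\kappa\,N\gamma^*}$.
\item Since $\log\kappa<2$ and $\tfrac98e\kappa\log\kappa\approx15<4e^2\approx29.6$, this dominates the stated bound. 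The lemma follows with room to spare.
\end{itemize}

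\textbf{Comparison.} Chebyshev nodes give a smaller extrapolation base ($2+\sqrt3$ versus the paper's $2e$), hence a better exponent constant. The paper's equispaced nodes give rational query points and a ready-made interpolation lemma. With irrational Chebyshev nodes you must round them before querying $\mathcal{O}$ and compute the weights for the rounded nodes. Your remark that the weights ``can be computed exactly'' needs that qualification.
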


\begin{proof}
%[Proof of Lemma~\ref{lemma: proof of theorem 1}]

Throughout the proof, we consider the case $\gamma^* >0$, since the reduction is trivial when $\gamma^* =0$.
To solve $\mathcal{A}_0$-\textsc{Ideal-Probability-Estimation} in Definition~\ref{def: ideal probability estimation problem}, one needs to obtain, for a given $n$-qubit circuit $C \sim \mathcal{H}_{\mathcal{A}_0}$, an estimate $\hat{P}$ of the ideal output probability $p(C)$ within additive error $\varepsilon_0 2^{-n}$ with success probability at least $1 - \delta_0$ over the choice of $C \sim \mathcal{H}_{\mathcal{A}_0}$, such that 
\begin{align}\label{eq: goal}
    \Pr \left[ \left| \hat{P} - p(C) \right|  > \frac{\varepsilon_0}{2^n} \right]  <  \delta_0   .
\end{align}
Let $\mathcal{O}$ be an oracle that solves $(\mathcal{A}_0,  \gamma^*)$-\textsc{Noisy-Probability-Estimation} in Definition~\ref{def: noisy probability estimation problem}, namely, given $C \sim \mathcal{H}_{\mathcal{A}_0}$ and $\gamma \in [\gamma^*,1]$, outputs an estimate of $\widetilde{p}(C,\gamma)$ with high probability such that 
\begin{align}\label{eq: oracle for thm 1}
    \Pr_{C \sim \mathcal{H}_{\mathcal{A}_0}} \left[ \left| \mathcal{O}(C,\gamma) - \widetilde{p}(C,\gamma)  \right|  > \frac{\varepsilon}{2^n} \right] < \delta    .
\end{align}
Throughout the proof, we analyze the conditions on $\varepsilon$ and $\delta$ required to construct an estimate $\hat{P}$ that satisfies Eq.~\eqref{eq: goal} within $\rm{BPP}^{\mathcal{O}}$.

First, by Lemma~\ref{lemma: low-degree approximation}, one can construct the $l$-degree polynomial approximation $\widetilde{p}_{l}(C,\gamma)$ that satisfies 
\begin{align}\label{eq: lemma: low-degree approximation app}
    \Pr_{C \sim \mathcal{H}_{\mathcal{A}}} \left[ \left| \widetilde{p}(C,\gamma) - \widetilde{p}_{l}(C,\gamma)  \right|  > \frac{\varepsilon'}{2^n} \right] \leq \delta    ,
\end{align}
for the approximation error
\begin{align}\label{eq: recap epsilon'}
    \varepsilon'
    := \frac{4N^2\gamma_{\rm max}}{l \delta }\left(\frac{3e\,N\gamma_{\max}}{8l}\right)^{l}.
\end{align} 
%using the polynomials $g_{k}^{(l)}(\gamma)$ in Lemma~\ref{lemma: low-degree approximation}.
Hence, for such $\widetilde{p}_{l}(C,\gamma)$, combining Eq.~\eqref{eq: oracle for thm 1} and Eq.~\eqref{eq: lemma: low-degree approximation app} leads to
\begin{align}
    &\Pr_{C \sim \mathcal{H}_{\mathcal{A}_0}} \left[ \left| \mathcal{O}(C,\gamma) - \widetilde{p}_{l}(C,\gamma)  \right|  > \frac{\varepsilon + \varepsilon' }{2^n} \right]  \nonumber \\
    &\leq \Pr_{C \sim \mathcal{H}_{\mathcal{A}_0}} \left[ \left| \mathcal{O}(C,\gamma) - \widetilde{p}(C,\gamma)  \right|  > \frac{\varepsilon}{2^n} \right]  \nonumber \\
    &+ \Pr_{C \sim \mathcal{H}_{\mathcal{A}_0}} \left[ \left| \widetilde{p}(C,\gamma) - \widetilde{p}_{l}(C,\gamma)  \right|  > \frac{\varepsilon'}{2^n} \right] \\
    &< 2\delta   , \label{eq: estimation of low-degree poly}
\end{align}
where we have used the triangle inequality and the union bound.
%, and $\varepsilon'$ is the approximation error in Lemma~\ref{lemma: low-degree approximation} given by
%\begin{align}\label{eq: recap epsilon'}
%    \varepsilon'
%    = \frac{4N^2\gamma_{\rm max}}{l \delta }\left(\frac{3e\,N\gamma_{\max}}{8l}\right)^{l}.
%\end{align}

To simplify the analysis, we rescale the input variable $\gamma \in [\gamma^*, \gamma_{\rm max}]$ by introducing a new variable $t$ that is linearly related to $\gamma$ as
\begin{align}\label{eq: relation btn gamma and t}
    \gamma(t) = -\frac{\gamma_{\rm max} + \gamma^*}{2}t + \frac{\gamma_{\rm max} + \gamma^*}{2}   , 
\end{align}
so that $\gamma(1) = 0$.
Also, according to Eq.~\eqref{eq: relation btn gamma and t}, the variable $t$ is bounded in the interval $t \in [-\Delta, \Delta]$, where 
\begin{align}
    \Delta = \frac{\gamma_{\rm max} - \gamma^*}{\gamma_{\rm max} + \gamma^*}  ,
\end{align}
which satisfies $\Delta \in (0,1)$ provided that $\gamma_{\rm max} > \gamma^*$.
Note that since $\gamma(t)$ is a linear function of $t$, $\widetilde{p}_{l}(C,\gamma(t))$ is also a polynomial in $t$ of degree at most $l$, that has the ideal value $p(C)$ at $t = 1$.
We also note that  while  $\gamma_{\rm max}$ is not specified at this stage, we assume that $\gamma_{\rm max}$ is strictly larger than $\gamma^*$ and satisfies $\gamma_{\rm max} = \Theta(\gamma^*)$, such that $\Delta$ remains bounded away from both $0$ and $1$.

Let $\{t_{i}\}_{i=0}^{l}$ be a set of equally-spaced points in the interval $[-\Delta, \Delta]$.
For each $t_i$, let $y_{i} = \mathcal{O}(C, \gamma(t_i))$ be an estimation value for the $l$-degree polynomial $\widetilde{p}_{l}(C,\gamma(t_i))$.
Then, by Eq.~\eqref{eq: estimation of low-degree poly}, each $(t_i, y_i)$ satisfies 
\begin{align}\label{eq: ineq for each points}
    \Pr_{C \sim \mathcal{H}_{\mathcal{A}_0}} \left[ \left| y_i - \widetilde{p}_{l}(C,\gamma(t_i))  \right|  > \frac{\varepsilon + \varepsilon' }{2^n} \right]  < 2\delta    .
\end{align}
Using these estimation values, we infer the value at $t = 1$, that is, $\widetilde{p}_{l}(C,\gamma(1)) = p(C)$.
This can be achieved via performing polynomial interpolation, specifically, by constructing the Lagrange interpolating polynomial $P(t)$ from data points $\{(t_i, y_i)\}_{i=0}^{l}$, provided that all $y_i$ points satisfy $ \left| y_i - \widetilde{p}_{l}(C,\gamma(t_i))  \right|  \leq \frac{\varepsilon + \varepsilon' }{2^n}$.
From Eq.~\eqref{eq: ineq for each points} and the union bound, the probability that all $y_i$ points satisfy this condition is at least $1 - 2(l+1)\delta$. 
Given that all the data points are successful, the polynomial interpolation yields an estimate of $\widetilde{p}_{l}(C,\gamma(1)) = p(C)$.
According to~\cite{kondo2022quantum}, the interpolation error for Lagrange polynomial is bounded as follows.

\begin{lemma}[Kondo et al~\cite{kondo2022quantum}]\label{lemma: interpolation}
Let $h(t)$ be a polynomial of degree at most $d$, Let $\Delta\in(0,1)$. Assume that $|h(t_j)|\le \epsilon$ for all of the $d+1$ equally-spaced points $t_j = -\Delta + \frac{2j}{d}\Delta$ for $j = 0,\dots,d$. Then
\begin{align}\label{interpolationerror}
    |h(1)| \leq \epsilon\frac{1}{\sqrt{2\pi d}}\left( \frac{e}{\Delta} \right)^{d} .
\end{align}
\end{lemma}

Consequently, by constructing $l$-degree Lagrange polynomial $P(t)$ using data points $\{(t_i,y_i)\}_{i=0}^{l}$, Lemma~\ref{lemma: interpolation} provides the following error bound:
\begin{align}\label{eq: estimation by interpolation}
    \Pr \left[ \left| P(1) - p(C) \right|  > \frac{\varepsilon + \varepsilon'}{\sqrt{2\pi l}}\left( \frac{e}{\Delta} \right)^{l}\frac{1}{2^n} \right]  < 2(l+1)\delta   .
\end{align}
To satisfy the desired bound in Eq.~\eqref{eq: goal}, which is the ultimate requirement for the reduction, the following conditions on $\delta$ and $\varepsilon$ must hold:
\begin{align}
    2(l+1)\delta \leq \delta_0 \; \rightarrow \; \delta \le \frac{\delta_0}{2(l+1)}  ,      \label{eq: condition for delta 1}
\end{align}
and
\begin{align}
    &\frac{\varepsilon + \varepsilon'}{\sqrt{2\pi l}}\left( \frac{e}{\Delta} \right)^{l}  \leq \varepsilon_0 
    \; \\
    &\rightarrow \;
    \varepsilon \leq 
    \varepsilon_0\sqrt{2\pi l}\left( \frac{e}{\Delta} \right)^{-l}\left(1 - \frac{1}{\sqrt{2\pi l}} \frac{\varepsilon'}{\varepsilon_0 } \left( \frac{e}{\Delta} \right)^{l}\right)  . \label{eq: condition for epsilon 1}
\end{align}
%\varepsilon_0\sqrt{2\pi l}\left( \frac{e}{\Delta} \right)^{-l} - \varepsilon'  
Accordingly, we first set $\delta$ as 
\begin{align}\label{eq: condition for delta 2}
   \delta = \frac{\delta_0}{2(l+1)}  = O(l^{-1}\delta_0).
\end{align}
Next, in the condition on $\varepsilon$ in Eq.~\eqref{eq: condition for epsilon 1}, since $\varepsilon > 0$, it must be promised that $\frac{1}{\sqrt{2\pi l}} \frac{\varepsilon'}{\varepsilon_0 } \left( \frac{e}{\Delta} \right)^{l}$ is strictly smaller than $1$; we find that this holds for a suitable choice of $l$, as stated in the following lemma (all logarithms are natural).

\begin{lemma}\label{lemma: determining l}
    Let $\varepsilon'$ be given in Eq.~\eqref{eq: recap epsilon'} and let $\delta$ be given in Eq.~\eqref{eq: condition for delta 2}.
    Then, the following inequality holds:
    \begin{align}\label{eq: bound for O}
        \frac{1}{\sqrt{2\pi l}} \frac{\varepsilon'}{\varepsilon_0 } \left( \frac{e}{\Delta} \right)^{l} \leq \frac{\sqrt{\gamma_{\max}}}{2}  ,
    \end{align}
    whenever $l$ satisfies 
    \begin{align}\label{eq: def of l}
        l 
        \geq 
        \frac{3e^2 N\gamma_{\max}}{8\Delta}
        + \log\!\left(
        \frac{32}{e}\sqrt{\frac{4\Delta}{3\pi}}\;
        \frac{N^{3/2}}{\varepsilon_0\delta_0}
        \right).
    \end{align}
\end{lemma}

\begin{proof}
    See Appendix~\ref{sec: proof of determining l}
\end{proof}

Hence, when $l$ satisfies Eq.~\eqref{eq: def of l}, the left-hand side of Eq.~\eqref{eq: bound for O} is strictly less than $1$, since $\gamma_{\max} \leq 1$.
To obtain a convenient sufficient bound on $\varepsilon$, we choose $l$ to saturate the bound in Eq.~\eqref{eq: def of l}, with the choice capped at $N$.
If the required $l$ exceeds $N$, we set $l = N$, in which case the approximation is exact and the following bounds are trivial up to constants.
Then, by Lemma~\ref{lemma: determining l}, the condition on $\varepsilon$ in Eq.~\eqref{eq: condition for epsilon 1} reduces to
\begin{widetext}
\begin{align}\label{eq: condition for epsilon 2}
    \varepsilon 
    \leq
    O\!\left( \varepsilon_0\sqrt{N\gamma_{\max} + 
     \log N   +  \log (\varepsilon_0^{-1}\delta_0^{-1})
    }
    \cdot
    e^{-\log\!\left(\frac{e}{\Delta}\right)\left[\frac{3e^2 N\gamma_{\max}}{8\Delta}
    + \log\!\left(\frac{N^{3/2}}{\varepsilon_0\delta_0}
    \right)\right]}
    \right),
\end{align}
where we have used $\log\!\left(\frac{32}{e}\sqrt{\frac{4\Delta}{3\pi}}\right)=O(1)$.
We now specify $\gamma_{\rm max} = 3\gamma^*$, which yields $\Delta = 1/2$ by definition. 
Also, since the square-root term in the right-hand side of Eq.~\eqref{eq: condition for epsilon 2} is larger than $O(1)$, using $N = n(d+1)$, we can further reduce the condition in Eq.~\eqref{eq: condition for epsilon 2} by
\begin{align}\label{eq: condition for epsilon 3}
    \varepsilon 
    \leq 
     O\!\left(  \varepsilon_0(\varepsilon_0\delta_0)^{\log(2e)}
    \left(nd\right)^{-\frac{3}{2}\log(2e)}
    \cdot
    e^{-\frac{9e^2}{4}\log(2e)\,n(d+1)\gamma^*}
    \right)  ,
\end{align}
\end{widetext}
because $\varepsilon$ satisfying Eq.~\eqref{eq: condition for epsilon 3} automatically satisfies Eq.~\eqref{eq: condition for epsilon 2}.

To summarize, given access to the oracle $\mathcal{O}$ for $(\mathcal{A}_0,  \gamma^*)$-\textsc{Noisy-Probability-Estimation}, one can obtain the estimate $P(1)$ of $p(C)$ within ${\rm BPP}^{\mathcal{O}}$ such that
\begin{align}
    \Pr \left[ \left| P(1) - p(C) \right|  > \frac{\varepsilon_0}{2^n} \right]  <  \delta_0   ,
\end{align}
provided that $\varepsilon$ and $\delta$ satisfy the conditions in Eq.~\eqref{eq: condition for epsilon 3} and Eq.~\eqref{eq: condition for delta 2}, respectively.
Moreover, because $\log (2e) < 2$ and $9\log(2e)/4 < 4$, these conditions hold when $\varepsilon$ and $\delta$ are given by 
\begin{align}
\begin{split}
    \varepsilon &= O\!\left(
    \frac{\varepsilon_0^3\delta_0^2}{n^3d^3}\;
    e^{-4e^{2}n(d+1)\gamma^*}
    \right),
    \\ 
    \delta &= O\!\left( \frac{\delta_0}{nd\gamma^*    +  \log (nd\varepsilon_0^{-1}\delta_0^{-1})}
    \right)  ,
\end{split}
\end{align} 
which are the bounds stated in Eq.~\eqref{eq: lemma: error bounds 1}.
Lastly, since $\gamma_{\rm max} = 3\gamma^* \leq 1$, we have the constraint $\gamma^* \leq 1/3$.
This completes the proof.

\end{proof}

\begin{remark}
There exists a simpler proof of Lemma~\ref{lemma: proof of theorem 1 app} based on an alternative $l$-degree polynomial approximation, obtained by first truncating summation terms of $\widetilde{p}(C,\gamma)$ in Eq.~\eqref{eq: noisy output probability 2} over $k \geq l+1$ and then multiplying the resulting expression by $(1 - 3\gamma/4)^{-N+l}$, in analogy with the low-degree truncation method of Ref.~\cite{go2025quantum}. 
This approach yields the bound $\varepsilon = \poly(nd, \varepsilon_0^{-1}, \delta_0^{-1})^{-1}\exp(-Cn(d+1)\gamma^*)$ with $C \approx 60$, which achieves the same asymptotic scaling of the noise-strength threshold $\gamma^* = O(\log n/(nd))$, but with an exponent constant approximately twice that of the present bound $C = 4e^2 \approx 30$ in Eq.~\eqref{eq: lemma: error bounds 1}.
Therefore, although this simpler approach achieves the same asymptotic noise-threshold scaling, we adopt the present approach because it yields a tighter imprecision bound, leaving open the possibility of further optimization.
    
\end{remark}

\section{Proof of Lemma~\ref{lemma: determining l}}\label{sec: proof of determining l}

Recall that the error parameter $\varepsilon'$ in Lemma~\ref{lemma: low-degree approximation} is given by
\begin{align}
    \varepsilon'=\frac{4N^2\gamma_{\max}}{l\delta}\left(\frac{3eN\gamma_{\max}}{8l}\right)^l   .
\end{align}
Based on this definition, one can find that
\begin{align}\label{eq: small}
    \frac{1}{\sqrt{2\pi l}}\frac{\varepsilon'}{\varepsilon_0}\left(\frac{e}{\Delta}\right)^l
    =
    \frac{4N^2\gamma_{\max}}{\varepsilon_0\,l\delta\sqrt{2\pi l}}
    \left(\frac{3e^2N\gamma_{\max}}{8\Delta\,l}\right)^l   .
\end{align}
Now, let us denote
\begin{align}
    A:=\frac{3e^2N\gamma_{\max}}{8\Delta},
    \qquad
    l=A+\chi,
\end{align}
for some yet-to-be-determined parameter $\chi \geq 0$. 
Then we have
\begin{align}
    \left(\frac{A}{l}\right)^l
    &=
    \left(1+\frac{\chi}{A}\right)^{-(A+\chi)} \\
    &=
    \exp\!\left(-(A+\chi)\log\!\left(1+\frac{\chi}{A}\right)\right) \\
    &\le e^{-\chi},  \label{eq: asdfasdf}
\end{align}
where we have used the fact that $(1+x)\log(1+x) \geq x$ for any $x \geq 0$.
Substituting Eq.~\eqref{eq: asdfasdf} into Eq.~\eqref{eq: small} gives
\begin{align}\label{eq: sec bound}
    \frac{1}{\sqrt{2\pi l}}\frac{\varepsilon'}{\varepsilon_0}\left(\frac{e}{\Delta}\right)^l
    \le
    \frac{4N^2\gamma_{\max}}{\varepsilon_0\,l\delta\sqrt{2\pi l}}\;e^{-\chi}   .
\end{align}
Here, recall that by Eq.~\eqref{eq: condition for delta 2}, $\delta$ and $\delta_0$ are related by 
\begin{align}
    \delta = \frac{\delta_0}{2(l+1)} ,
\end{align}
such that 
\begin{align}
    \frac{1}{l\delta} = \frac{2(l+1)}{l \delta_0} \leq \frac{4}{\delta_0},
\end{align}
as long as $l \geq 1$.

Accordingly, starting from Eq.~\eqref{eq: sec bound}, we have 
\begin{align}
    \frac{1}{\sqrt{2\pi l}}\frac{\varepsilon'}{\varepsilon_0}\left(\frac{e}{\Delta}\right)^l
    &\le
    \frac{16N^2\gamma_{\max}}{\varepsilon_0\delta_0\sqrt{2\pi l}}\;e^{-\chi} \\ 
    &\le
    \frac{16N^2\gamma_{\max}}{\varepsilon_0\delta_0\sqrt{2\pi A}}\;e^{-\chi} \\ 
    &=
    \frac{16}{e}\sqrt{\frac{4\Delta}{3\pi}}\;
    \frac{N^{3/2}\sqrt{\gamma_{\max}}}{\varepsilon_0\delta_0}\;e^{-\chi},
\end{align}
where we used $l\ge A$ in the second inequality.
Therefore, for 
\begin{align}
    \chi \ge
    \log\!\left(
    \frac{32}{e}\sqrt{\frac{4\Delta}{3\pi}}\;
    \frac{N^{3/2}}{\varepsilon_0\delta_0}
    \right),
\end{align}
we obtain
\begin{align}
    \frac{1}{\sqrt{2\pi l}}\frac{\varepsilon'}{\varepsilon_0}\left(\frac{e}{\Delta}\right)^l
\le \frac{\sqrt{\gamma_{\rm max}}}{2},
\end{align}
which is exactly Eq.~\eqref{eq: bound for O}. 
Since $l=A+\chi$ gives Eq.~\eqref{eq: def of l}, this completes the proof.

\section{Proof of Lemma~\ref{lem:cheb_wk_opt}}\label{appendix: proof of chebyshev}

We construct $g_k^{(l)}(\gamma)$ by expanding $w_k(\gamma)$ in the Chebyshev basis and truncating the high-order terms.
To simplify the analysis, we map the interval $[0, \gamma_{\rm max}]$ to $t \in [-1,1]$ by 
\begin{align}\label{eq:gamma_to_t_opt}
t=1-\frac{2\gamma}{\gamma_{\max}},
\qquad
\gamma=\frac{\gamma_{\max}}{2}(1-t)    .
\end{align}
With this substitution, we have
\begin{align}\label{eq:wk_affine_t_opt}
w_k(\gamma(t))=\bigl(a_1+b_1 t\bigr)^{N-k}\bigl(a_2+b_2 t\bigr)^{k},
\end{align}
with the coefficients $(a_1,b_1,a_2,b_2)$ given by 
\begin{align}\label{eq:a1b1a2b2_opt}
\begin{split}
&a_1:=1-\frac{3\gamma_{\max}}{8},\quad b_1:=\frac{3\gamma_{\max}}{8},
\\
&a_2:=\frac{\gamma_{\max}}{8},\quad b_2:=-\frac{\gamma_{\max}}{8}  .
\end{split}
\end{align}
Since Eq.~\eqref{eq:wk_affine_t_opt} is a polynomial in $t$ of degree at most $N$, it has an $N$-degree Chebyshev expansion
\begin{align}\label{eq:cheb_expand_opt}
w_k(\gamma(t))=\sum_{j=0}^{N} a_{k,j}\,T_j(t), 
\end{align}
for Chebyshev basis polynomial $T_{j}(t)$.
Now we define $g_k^{(l)}(\gamma)$ by truncating this expansion as
\begin{align}\label{eq:gk_def_opt}
g_k^{(l)}(\gamma)
&:=
\sum_{j=0}^{l} a_{k,j}\,T_j\!\left(1-\frac{2\gamma}{\gamma_{\max}}\right) \nonumber \\
&+\left(\sum_{j=l+1}^{N}a_{k,j}\right)\,
T_l\!\left(1-\frac{2\gamma}{\gamma_{\max}}\right).
\end{align}
This guarantees $\deg(g_k^{(l)})\le l$, and since $T_j(1)=1$ for all $j$,
\begin{align}\label{eq:gk_at0_opt}
g_k^{(l)}(0)=\sum_{j=0}^{N}a_{k,j}=w_k(0)=\delta_{k0} ,
\end{align}
as required.

We next quantify the truncation error $\Delta_{k}$ introduced in Lemma~\ref{lem:cheb_wk_opt}.
For any $t\in[-1,1]$, using $|T_j(t)|\le 1$ and triangle inequality,
\begin{align}
\bigl|w_k(\gamma(t)) - g_k^{(l)}(\gamma(t))\bigr|
&=\left|\sum_{j=l+1}^{N} a_{k,j}\bigl(T_j(t)-T_l(t)\bigr)\right| \\
&\leq \sum_{j = l+1}^{N} \left( \left| a_{k,j} T_j(t) \right| +   \left| a_{k,j} T_l(t) \right| \right) \\
&\le 2\sum_{j=l+1}^{N}|a_{k,j}|. \label{eq:tail_to_coeffs_opt}
\end{align}
A standard tail bound for Chebyshev coefficients~\cite{trefethen2019approximation} implies that for a free parameter $\rho>1$,
\begin{align}\label{eq:cheb_tail_bound_opt}
\sum_{j=l+1}^{\infty}|a_{k,j}|
\le \frac{2M_{k,\rho}}{\rho^{\,l}(\rho-1)},
\end{align}
where
\begin{align}\label{eq:Mkrho_def_opt}
    M_{k,\rho}:=\max_{z\in E_{\rho}}
    \left|\left(a_1+b_1 z\right)^{N-k}\left(a_2+b_2 z\right)^{k}\right|,
\end{align}
and $E_{\rho}$ denotes the Bernstein ellipse in the complex plane with foci located at $\pm 1$, whose semiaxes are $(\rho+\rho^{-1})/2$ and $(\rho-\rho^{-1})/2$. 
Since Eq.~\eqref{eq:tail_to_coeffs_opt} holds for every $t \in [-1,1]$ (thus for every $\gamma \in [0,\gamma_{\rm max}]$), combining Eqs.~\eqref{eq:tail_to_coeffs_opt}--\eqref{eq:cheb_tail_bound_opt} yields
\begin{align}\label{eq:bound_delta_k}
\Delta_k
\le \frac{4}{\rho^{\,l}(\rho-1)}\,M_{k,\rho}   ,
\end{align}
for $\Delta_k$ defined in Eq.~\eqref{eq:delta_k_rho_opt}  .

We now quantify the combined truncation error $\eta_{l}$ introduced in Lemma~\ref{lem:cheb_wk_opt}.
To obtain the bound of $\eta_l$, by Eq.~\eqref{eq:bound_delta_k}, we have
\begin{align}\label{eq:eta_start_opt}
\eta_l
=\sum_{k=0}^{N}\binom{N}{k}3^k\,\Delta_k
\le \frac{4}{\rho^{\,l}(\rho-1)}
\sum_{k=0}^{N}\binom{N}{k}3^k\,M_{k,\rho}   .
\end{align}
To quantify $M_{k,\rho}$, for $z\in E_\rho$, define
\begin{align}\label{eq:A_B_def_opt}
A(z):=a_1+b_1 z,\qquad B(z):=a_2+b_2 z.
\end{align}
By Eq.~\eqref{eq:a1b1a2b2_opt}, $B(z)=\frac{\gamma_{\max}}{8}(1-z)$ and $A(z)=1-\frac{3\gamma_{\max}}{8}(1-z)$, thus we have the constraint
\begin{align}\label{eq:dep_id_opt}
A(z)+3B(z)=1 .
\end{align}
Also, for any fixed $z$,
\begin{align}\label{eq:binom_sum_fixedz_opt}
\sum_{k=0}^{N}\binom{N}{k}3^k\,|A(z)|^{N-k}|B(z)|^{k}
=\bigl(|A(z)|+3|B(z)|\bigr)^{N}.
\end{align}
Moreover, for each $k$,
\begin{align}
\binom{N}{k}3^k\,M_{k,\rho}
&=\max_{z\in E_\rho}\binom{N}{k}3^k\,|A(z)|^{N-k}|B(z)|^k \\
&\le \max_{z\in E_\rho}\bigl(|A(z)|+3|B(z)|\bigr)^{N}. \label{eq:term_vs_sum_opt}
\end{align}
Summing Eq.~\eqref{eq:term_vs_sum_opt} over $k=0,1,\dots,N$ gives
\begin{align}\label{eq:sum_Mk_bound_opt}
\sum_{k=0}^{N}\binom{N}{k}3^k\,M_{k,\rho}
\le (N+1)\,G_\rho^{\,N},
\end{align}
where
\begin{align}\label{eq: def of G_rho}
G_\rho:=\max_{z\in E_\rho}\bigl(|A(z)|+3|B(z)|\bigr).
\end{align}
The maximum in Eq.~\eqref{eq: def of G_rho} is attained at the rightmost point of the Bernstein ellipse $E_\rho$.
To see this clearly, let $c':=\frac{3\gamma_{\max}}8\in(0,\tfrac38]$ and write $u:=z-1$.
Using Eq.~\eqref{eq:dep_id_opt} and $3B(z)=c'(1-z)=-c'u$, we have
\begin{align}\label{eq:G_rho_rewrite_opt}
|A(z)|+3|B(z)|
=|1-3B(z)|+|3B(z)|
=|1+c'u|+c'|u|.
\end{align}
Equivalently, since $u=z-1$,
\begin{align}
|1+c'u|+c'|u|
=
c'\left(
\left|z+\frac{1-c'}{c'}\right|+|z-1|
\right).
\end{align}
Let
\begin{align}
\tau:=-\frac{1-c'}{c'}=1-\frac{1}{c'}.
\end{align}
Since $c'<1/2$, we have $\tau<-1$. 
Therefore, by the triangle inequality,
\begin{align}
|z-\tau|
&\le |z+1|+|-1-\tau| \label{eq: collinear} \\
&= |z+1|+\frac{1}{c'}-2 .
\end{align}
Using the focal property of the Bernstein ellipse,
\begin{align}
|z-1|+|z+1|=\rho+\rho^{-1}
\end{align}
for $z\in E_\rho$, we obtain
\begin{align}
|1+c'u|+c'|u|
&=
c'\left(|z-\tau|+|z-1|\right) \\
&\le
c'\left(|z+1|+|z-1|+\frac{1}{c'}-2\right) \\
&=
1+c'(\rho+\rho^{-1}-2) \\
&=
1+c'\frac{(\rho-1)^2}{\rho}.
\end{align}
Here, equality is attained at the rightmost point
\begin{align}
z_\star=\frac{\rho+\rho^{-1}}{2},
\qquad
u_\star=z_\star-1=\frac{(\rho-1)^2}{2\rho},
\end{align}
because all points $\tau<-1<1<z_\star$ in Eq.~\eqref{eq: collinear} are collinear on the real axis. 
Hence
\begin{align}\label{eq:G_rho_value_opt}
G_\rho
=
|1+c'u_\star|+c'|u_\star|
=
1+2c'u_\star
=
1+\frac{3\gamma_{\max}}{8}\cdot\frac{(\rho-1)^2}{\rho}.
\end{align}
Substituting Eq.~\eqref{eq:sum_Mk_bound_opt} and Eq.~\eqref{eq:G_rho_value_opt} into Eq.~\eqref{eq:eta_start_opt} gives
\begin{align}
\eta_l
&\le
\frac{4(N+1)}{\rho^{\,l}(\rho-1)}
\left(1+\frac{3\gamma_{\max}}{8}\cdot\frac{(\rho-1)^2}{\rho}\right)^{N} \\
&\le
\frac{4(N+1)}{\rho^{\,l}(\rho-1)}
\exp\!\left(\frac{3N\gamma_{\max}}{8}\cdot\frac{(\rho-1)^2}{\rho}\right), \label{eq:eta_rho_preexp_opt}
\end{align}
where the last step uses $\log(1+x)\le x$ for $x\ge 0$.

We now optimize over the free parameter $\rho>1$.
Let
\begin{align}\label{eq:lambda_def_opt}
\lambda:=\frac{3N\gamma_{\max}}{2l}.
\end{align}
Then Eq.~\eqref{eq:eta_rho_preexp_opt} can be written as
\begin{align}\label{eq:eta_lambda_form_opt}
\eta_l
\le
\frac{4(N+1)}{\rho-1}\,
\exp\!\left(-l\left[\log\rho-\lambda\frac{(\rho-1)^2}{4\rho}\right]\right).
\end{align}
Choose $\rho = \rho_\star$ to maximize the bracketed term.
This can be obtained by 
\begin{align}
    \frac{\partial}{\partial\rho}\left( \log\rho - \lambda \frac{(\rho-1)^2}{4\rho} \right)_{\rho = \rho_\star} 
    = 
    \frac{1}{\rho_\star} - \lambda\frac{\rho_\star^2 - 1}{4\rho_\star^2}
    =
    0,   
\end{align}
implying
\begin{align}
    \rho_\star 
    = \frac{2 + \sqrt{\lambda^2 + 4}}{\lambda} .
\end{align}
Note that this definition for $\rho_\star$ gives
\begin{align}
    \log\rho_\star - \lambda \frac{(\rho_\star -1)^2}{4\rho_\star }
    &=
    \log\rho_\star - \frac{\rho_\star - 1}{\rho_\star + 1} \\
    &\geq 
    \log(\frac{\rho_\star}{e})  \\
    &=
    \log( \frac{2 + \sqrt{\lambda^2 + 4}}{e\lambda}  )   . \label{eq: bound for rho_*}
\end{align}
Finally, given $N \geq 3$, choosing $\rho = \rho_\star$ yields
\begin{align}
    \eta_l
    &\le 
    (N+1) \left( \frac{4\lambda}{2+\sqrt{\lambda^2+4} - \lambda}\right) \left(\frac{e\lambda}{2 + \sqrt{\lambda^2 + 4}}\right)^{l} \\
    &\le
    (N+1)\cdot 2\lambda\left(\frac{e\lambda}{4}\right)^{l}\\
    &= 
    (N+1)\cdot \frac{3N\gamma_{\max}}{l}\left(\frac{3e\,N\gamma_{\max}}{8l}\right)^{l} \\
    &\leq 
    \frac{4N^2\gamma_{\max}}{l}\left(\frac{3e\,N\gamma_{\max}}{8l}\right)^{l} , \label{eq:eta_final_opt}
\end{align}
obtaining the desired bound in Eq.~\eqref{eq:eta_l_closed_form_opt}.

Finally, we record an explicit formula for $a_{k,j}$ in Eq.~\eqref{eq:cheb_expand_opt} to explicitly construct $g_k^{(l)}(\gamma)$ in Eq.~\eqref{eq:gk_def_opt}. 
First expand Eq.~\eqref{eq:wk_affine_t_opt} in the power basis:
\begin{align}\label{eq:power_expand_wk_opt}
w_k(\gamma(t))=\sum_{m=0}^{N} c_{k,m}\,t^{m},
\end{align}
where
\begin{align}
c_{k,m}
:=
\sum_{u=\max(0,m-k)}^{\min(m,N-k)} 
&\binom{N-k}{u}\binom{k}{m-u}\,  \nonumber \\
&\times a_1^{\,N-k-u}b_1^{\,u}\,
a_2^{\,k-(m-u)}b_2^{\,m-u}, \label{eq:ckm_opt}
\end{align}
with $(a_1,b_1,a_2,b_2)$ given in Eq.~\eqref{eq:a1b1a2b2_opt}.
Next use the identity for the Chebyshev polynomials~\cite{cody1970survey, mathar2006chebyshev} (for $m\ge 1$)
\begin{align}\label{eq:t_power_to_cheb_opt}
t^{m}&=
\frac{1}{2^{m-1}}\sum_{r=0}^{\lfloor (m-1)/2\rfloor}\binom{m}{r}\,T_{m-2r}(t) \\
& +\frac{1+(-1)^m}{2}\cdot\frac{1}{2^{m}}\binom{m}{m/2}\,T_0(t),
\end{align}
and $t^0=T_0(t)$.
Matching coefficients in Eq.~\eqref{eq:cheb_expand_opt} gives, for $j\ge 1$,
\begin{align}\label{eq:akj_formula_opt}
a_{k,j}=\sum_{r=0}^{\lfloor (N-j)/2\rfloor}
c_{k,j+2r}\,\frac{1}{2^{\,j+2r-1}}\binom{j+2r}{r},
\end{align}
and for $j=0$,
\begin{align}\label{eq:ak0_formula_opt}
a_{k,0}=c_{k,0}+\sum_{r=1}^{\lfloor N/2\rfloor} c_{k,2r}\,\frac{1}{2^{\,2r}}\binom{2r}{r}.
\end{align}
Substituting Eqs.~\eqref{eq:akj_formula_opt}--\eqref{eq:ak0_formula_opt} into Eq.~\eqref{eq:gk_def_opt} completely specifies the polynomials $g_k^{(l)}(\gamma)$.

\bibliographystyle{unsrt}
\bibliography{reference}

\end{document}